\newcommand{\beq}[1]{\begin{equation}\label{#1}}
\newcommand{\eeq}{\end{equation}}
\newcommand{\beqn}[1]{\begin{eqnarray}\label{#1}}
\newcommand{\eeqn}{\end{eqnarray}}
\newtheorem{thmbody}{Theorem}
\newenvironment{thm}{
\begin{thmbody}
	}{
	\end{thmbody} 
	}
\newtheorem{dfnbody}{Definition}
\newtheorem{corbody}{Corollary}
\newtheorem{lemmabody}{Lemma}
\newenvironment{lemma}{
\begin{lemmabody}
	}{
	\end{lemmabody} 
	}
\newtheorem{propbody}{Proposition}
\newenvironment{prop}{
\begin{propbody}
	}{
	\end{propbody} 
	}
\newenvironment{proof}{
	{\it Proof:}
	}{
 $\Box$
	}
\begin{document}
\title{The Method of Types for the AWGN Channel}

\author{
\IEEEauthorblockN{Sergey Tridenski and Anelia Somekh-Baruch}\\
\IEEEauthorblockA{Faculty of Engineering\\Bar-Ilan University\\ Ramat-Gan, Israel\\
Email: tridens@biu.ac.il, somekha@biu.ac.il}
}



\maketitle
\begin{abstract}
For the discrete-time AWGN channel with a power constraint,
we 
give an alternative derivation of
Shannon's
sphere-packing upper bound on the optimal block error exponent
and prove for the first time 
an analogous lower bound on the optimal correct-decoding exponent.
The derivations
use the method of types with finite alphabets of sizes
depending on the block length $n$
and with the
number of types 
sub-exponential in $n$.
\end{abstract}



%
%
%
%



\section{Introduction}\label{Int}


We study reliability 
of the discrete-time additive white Gaussian noise (AWGN) channel
with a power constraint imposed on blocks of its inputs. 
Consider the capacity of this channel, found by Shannon:
\begin{equation} \label{eqShannonCapacity}
C \; = \; \tfrac{1}{2}\,{\log\mathstrut}_{\!2} (1 + s^{2}/\sigma^{2}),
\end{equation}
where $\sigma^{2}$ is the channel noise variance and $s^{2}$ is the power constraint.
This capacity corresponds to the maximum of the mutual information $I(\,{p\mathstrut}_{X}, \, w)$ over ${p\mathstrut}_{X}$,
under the power constraint on
${p\mathstrut}_{X}$, where $w$ stands for the channel transition probability density function (PDF)
and ${p\mathstrut}_{X}$ is the channel input PDF.
Let us briefly recall the technicalities \cite{CoverThomas} of how the expression (\ref{eqShannonCapacity}) is obtained 
from the mutual information:
\begin{align}
\max_{\substack{\\{p\mathstrut}_{X}:\;\mathbb{E}[X^{2}] \, \leq \, s^{2}}}
I(\,{p\mathstrut}_{X}, \, w)
\;\; = \;\; &
\max_{\substack{\\{p\mathstrut}_{X}:\;\mathbb{E}[X^{2}] \, \leq \, s^{2}}}
\;
\Big\{
D\big(\, w \, \| \, \,{\widehat{p}\mathstrut}_{Y} \, | \, \, {p\mathstrut}_{X}\big)
\, - \,
D\big(\,{p\mathstrut}_{Y} \, \| \, \,{\widehat{p}\mathstrut}_{Y}\big)
\Big\}
\nonumber \\
= \;\; &
\max_{\substack{\\{p\mathstrut}_{X}:\;\mathbb{E}[X^{2}] \, \leq \, s^{2}}}
\;
\bigg\{
\frac{1}{2}\,{\log\mathstrut}_{\!2}\left(1 +\frac{s^{2}}{\sigma^{2}}\right)
\, + \,
\underbrace{\frac{\mathbb{E}\big[X^{2}\big] - s^{2}}{2\ln(2)(s^{2} + \sigma^{2})}
\, - \,
D\big(\,{p\mathstrut}_{Y} \, \| \, \,{\widehat{p}\mathstrut}_{Y}\big)}_{\leq \; 0}
\bigg\}.
\label{eqGaussOpt}
\end{align}
Here
${\widehat{p}\mathstrut}_{Y}(y) \triangleq \frac{1}{\sqrt{2\pi(s^{2} \,+\, \sigma^{2})}}e^{-\frac{y^{2}}{2(s^{2} \,+\, \sigma^{2})}}$
and ${p\mathstrut}_{Y}(y) \equiv \int_{\mathbb{R}}{p\mathstrut}_{X}(x)w(y\,|\,x)dx$,
the operator $\mathbb{E}[\,\cdot\,]$ denotes the expectation,
and $D$ is the Kullback–Leibler divergence between two probability densities.
In this paper we consider the optimal exponents in the block error/correct-decoding probability
of the AWGN channel.
We propose explanations, similar to (\ref{eqGaussOpt}), both
for Shannon's sphere-packing converse bound on the optimal error exponent \cite[Eq.~3,~4,~11]{Shannon59}
and for a similar converse bound on the optimal correct-decoding exponent,
an expression for which has been given by Oohama \cite[Eq.~22]{Oohama17} without proof.

In the case of discrete memoryless channels,
the mutual information enters into the expressions for correct-decoding and error exponents through the method of types \cite{CsiszarKorner},
\cite{DueckKorner79}.
For 
the moment
without any 
interpretation,
let us rewrite the sphere-packing
constant-composition 
exponent 
\cite[Eq.~5.19]{CsiszarKorner}
with PDF's:
\begin{equation} \label{eqLagrangeMultipliers}
\min_{\substack{\\{p\mathstrut}_{Y|X}: \;\;
 I(\,{\widehat{p}\mathstrut}_{X}, \,\, {p\mathstrut}_{Y|X}) \; \leq \; R
}}
D\big(\,{p\mathstrut}_{Y|X}\,\|\, \, w \, \,|\, \,{\widehat{p}\mathstrut}_{X}\big),
\end{equation}
where ${\widehat{p}\mathstrut}_{X}$ 
denotes
the Gaussian density with variance $s^{2}$ which maximizes (\ref{eqGaussOpt}),
and $R>0$ is
the information rate.
When ${\widehat{p}\mathstrut}_{X}$ is Gaussian,
the minimum (\ref{eqLagrangeMultipliers}) allows an explicit solution by the method of Lagrange multipliers.
The minimizing solution ${p\mathstrut}_{Y|X}^{*}$ 
of (\ref{eqLagrangeMultipliers}) 
is Gaussian,
and we obtain that (\ref{eqLagrangeMultipliers}) is the same as
Shannon's 
converse bound on the error exponent \cite[Eq.~3,~4,~11]{Shannon59}
in the limit of a large block length:
\begin{align}
E_{sp}(R) \; & = \; 
\big[\,
\tfrac{1}{2}A^{2}
\, - \,
\tfrac{1}{2} AG\cos \theta \, - \, \ln(G \sin \theta)
\,\big] \,{\log\mathstrut}_{\!2} \,e , 
\label{eqShannonExponent} \\
G \; & = \; \tfrac{1}{2}
(
A \cos \theta \, + \, \sqrt{A^{2}\cos^{2}\theta + 4}
),
\;\;\;
\sin \theta \, = \, 2^{\,-\min\,\{R, \,C\}},
\;\;\;
A \, = \, s/\sigma.
\nonumber
\end{align}
Then it turns out that ${p\mathstrut}_{Y|X}^{*}$ of (\ref{eqLagrangeMultipliers}) and the $y$-marginal PDF
of the product ${\widehat{p}\mathstrut}_{X}{p\mathstrut}_{Y|X}^{*}$
play the same roles in the derivation of the converse bound,
as $w$ and ${\widehat{p}\mathstrut}_{Y}$, respectively, in the maximization (\ref{eqGaussOpt}).

In this paper, in order to 
derive expressions similar to (\ref{eqLagrangeMultipliers}), we extend the method of types \cite[Ch.~11.1]{CoverThomas},
\cite{Csiszar98}
to include
countable alphabets consisting of real numbers, with the help of power constraints on types.
The countable alphabets depend on the block length $n$ and the number of types
satisfying the power constraints
is kept sub-exponential in $n$.
The latter idea is inspired by a different subject --- of ``runs'' in a binary sequence.
If we treat every ``run'' of ones or zeros in a binary sequence as a separate symbol
from the countable alphabet of run-lengths,
then the number of different empirical distributions of such symbols in a binary sequence of length $n$
is equivalent to the number of different partitions of the integer $n$
into sum of positive integers, which is $\sim e^{c\sqrt{n}}$
\cite[Eq.~5.1.2]{Andrews}.
Thus it is sub-exponential, and the method of types can be 
extended to
that case.
In our present case, however, the types are empirical distributions
of
uniformly quantized real numbers
in quantized versions of real channel input and output vectors
of length $n$. The quantized versions serve 
only for classification of channel input and output vectors and not for 
the communication itself.
The uniform quantization step is 
different for the quantized versions of channel inputs and outputs,
and 
in both cases it
is chosen to be a decreasing function of $n$.

Similarly as (\ref{eqGaussOpt}),
the proposed derivations demonstrate, that, in order to achieve the converse bounds
on the correct-decoding and error exponents,
it is necessary for
the types of the quantized versions of codewords to converge to the Gaussian distribution
in characteristic function (CF), or, equivalently, in
cumulative distribution function (CDF).

The contributions of the current paper are twofold.
Firstly, we successfully apply the method of types 
to derive
converse bounds on the exponents of the AWGN channel.
Secondly, we 
prove the converse bound on the correct-decoding exponent \cite[Eq.~22]{Oohama17}
for the first time.
This underscores the advantage of the method of types.

In Sections~\ref{ComSys} and~\ref{Defs}
we describe the communication system and introduce other definitions.
In Section~\ref{Main} we present the main results of the paper,
which consist of two theorems and a proposition.
Section~\ref{MOT} provides an extension to the method of types.
In Section~\ref{ConvLemma} we prove a converse lemma,
which is then used 
for derivation of both the correct-decoding and error exponents
in Sections~\ref{CorDecExp} and~\ref{ErrExp}, respectively.
Section~\ref{PDFtypePDF} 
connects between PDF's and types.

\section*{Notation}\label{Not}

Countable alphabets consisting of real numbers
are denoted by ${\cal X}_{n}$, ${\cal Y}_{n}$.
The set of types with denominator $n$ over 
${\cal X}_{n}$
is denoted by ${\cal P}_{n}({\cal X}_{n})$.
Capital `$P\,$' denotes probability mass functions, which are types:
${P\mathstrut}_{\!X}$, ${P\mathstrut}_{\!Y}$, ${P\mathstrut}_{\!XY}$, ${P\mathstrut}_{\!Y|X}$.
The type class and the support of a type ${P\mathstrut}_{\!X}$ are denoted by
$T({P\mathstrut}_{\!X})$ and ${\cal S}({P\mathstrut}_{\!X})$, respectively.
The expectation w.r.t. a probability distribution ${P\mathstrut}_{\!X}$ is denoted by $\mathbb{E}_{{P\mathstrut}_{\!X}}[\,\cdot\,]$.
Small `$p$' denotes probability density functions: ${p\mathstrut}_{X}$, ${p\mathstrut}_{Y}$, ${p\mathstrut}_{XY}$,
${p\mathstrut}_{Y|X}$.
Thin letters $x$, $y$ represent real values, while thick letters ${\bf x}$, ${\bf y}$ represent real vectors. 
Capital letters $X$, $Y$ 
represent random variables, 
boldface ${\bf Y}$ 
represents a random vector of length $n$.
The conditional type class of ${P\mathstrut}_{\!X|\,Y}$ given ${\bf y}$ is denoted by $T({P\mathstrut}_{\!X|\,Y}\,|\, {\bf y})$.
The 
quantized versions of variables are denoted by a superscript `$q$': $x^{q}_{k}$, ${\bf x}^{q}$, ${\bf Y}^{q}$.
Small $w$ stands for a conditional PDF, and
${W\mathstrut}_{\!n}$ stands for a discrete positive measure, which does not necessarily add up to $1$.
All information-theoretic quantities 
such as joint and conditional entropies
$H({P\mathstrut}_{\!XY})$, $H(Y\,|\,X)$, the mutual information
$I({P\mathstrut}_{\!XY})$,
$I\big({P\mathstrut}_{\!X}, {P\mathstrut}_{\!Y|X}\big)$,
$I\big({P\mathstrut}_{\!X}, \, {p\mathstrut}_{Y|X}\big)$, the Kullback-Leibler divergence
$D\big({P\mathstrut}_{\!Y|X}\,\|\, {W\mathstrut}_{\!n} \,|\,  {P\mathstrut}_{\!X}\big)$,
$D\big(\,{p\mathstrut}_{Y|X}\,\|\, \, w \, \,|\, {P\mathstrut}_{\!X}\big)$,
and the information rate $R$
are defined with respect to the
logarithm to a base $b>1$, denoted as
$\,{\log\mathstrut}_{\!b}(\cdot)$.
It is assumed that $0 \,{\log\mathstrut}_{\!b} (0) = 0$.
The natural logarithm is denoted as $\ln$.
The cardinality of a discrete set is denoted by $|\,\cdot\,|$, while the volume of a continuous region is denoted by $\text{vol}\,(\cdot)$.
The complementary set of a set $A$ is denoted by $A{\mathstrut}^{c}$.
Logical ``or'' and ``and'' are represented by the symbols $\lor$ and $\land$, respectively.
In Appendix B, ${p\mathstrut}_{XY}^{q}$ represents the rounded down version of the PDF ${p\mathstrut}_{XY}$.



\newpage

\section{Communication system}\label{ComSys}


We consider communication over the time-discrete 
additive white Gaussian noise
channel
with real channel inputs $x \in \mathbb{R}$ and channel outputs $y \in \mathbb{R}$
and a 
transition probability density
\begin{displaymath}
w(y \, | \, x) \;\; \triangleq \;\; \frac{1}{\sigma\sqrt{2\pi}}e^{-\frac{(y\, - \,x)^{2}}{2\sigma^{2}}}.
\end{displaymath}

Communication is performed by blocks of $n$ channel inputs. Let $R > 0$ denote a nominal information rate.
Each block is used for transmission of one out of $M$ messages,
where $M = M(n, R) \triangleq \lfloor b^{\,n R}\rfloor$, for some logarithm base $b > 1$.
The encoder is a deterministic function $f: \{1, 2, .\,.\,.\, , \, M\} \rightarrow \mathbb{R}{\mathstrut}^{n}$,
which converts a message into a transmitted block,
such that
\begin{displaymath}
f(m) \; = \; {\bf x}(m)
\; = \; \big(
x_{1}(m), \,x_{2}(m), .\,.\,.\, , \, x_{n}(m)
\big),
\;\;\;\;\;\;\;\;\; m = 1, 2, .\,.\,.\, , \, M,
\end{displaymath}
where $x_{k}(m) \in \mathbb{R}$, for all $k = 1, 2, .\,.\,.\, , \, n$.
The set of all the 
codewords ${\bf x}(m)$, $m = 1, 2, .\,.\,.\, , \, M$, constitutes a codebook ${\cal C}$.
Each codeword ${\bf x}(m)$ in ${\cal C}$ satisfies the power constraint:
\begin{equation} \label{eqPowerConstraint}
\frac{1}{n}\sum_{k \, = \, 1}^{n}x_{k}^{2}(m) \; \leq \; s^{2},
\;\;\;\;\;\;\;\;\; m = 1, 2, .\,.\,.\, , \, M.
\end{equation}
The decoder is another deterministic function
$g: \mathbb{R}{\mathstrut}^{n} \rightarrow \{0, 1, 2, .\,.\,.\, , \, M\}$,
which converts the received block of $n$ channel outputs ${\bf y} \in \mathbb{R}{\mathstrut}^{n}$
into an estimated message, or, possibly, to 
a special error symbol `$0$':
\begin{equation} \label{eqDec}
g({\bf y})
\;\; = \;\;
\Bigg\{
\begin{array}{r l}
0, & \;\;\; {\bf y} \in \bigcap_{\,m \, = \, 1}^{\,M} {\cal D}{\mathstrut}_{m}^{c}, \\
m, & \;\;\; {\bf y} \in {\cal D}{\mathstrut}_{m}, \;\;\; 
m \in \{1, 2, .\,.\,.\, , \, M\},
\end{array}
\end{equation}
where each set ${\cal D}{\mathstrut}_{m} \subseteq \mathbb{R}{\mathstrut}^{n}$ is either an open region or the empty set, 
and the
regions are disjoint:
${\cal D}{\mathstrut}_{m} \cap \,{\cal D}{\mathstrut}_{m'} = \varnothing$
for $m \neq m'$. 
Observe that the maximum-likelihood decoder with open decision regions ${\cal D}{\mathstrut}_{m}^{*}\,$, defined for $m = 1, 2, .\,.\,.\, , \, M$
as
\begin{displaymath}
{\cal D}{\mathstrut}_{m}^{*}
\;\; = \;\;
\mathbb{R}{\mathstrut}^{n} \setminus
\bigcup_{
m':\;\;
(m' \, < \; m)
\; \lor \;
\big(\,m' \, > \, m \;\, \land \;\, {\bf x}(m') \, \neq \, {\bf x}(m)\,\big)
}
\Big\{
{\bf y}: \;
\|
{\bf y} - {\bf x}(m')
\|
\, \leq \,
\|
{\bf y} - {\bf x}(m)
\|
\Big\},
\end{displaymath}
is a special case of (\ref{eqDec}).
Note that the formal
definition of ${\cal D}{\mathstrut}_{m}^{*}$ includes the undesirable possibility of ${\bf x}(m') = {\bf x}(m)$ for $m' \neq m$.


\section{Definitions}\label{Defs}


For each $n$, we define two discrete countable alphabets ${\cal X}_{n}$ and ${\cal Y}_{n}$ 
as
one-dimensional lattices:
\begin{align}
&
\alpha, \beta, \gamma \in (0, 1), \;\;\;
\alpha + \beta + \gamma = 1,
\nonumber \\
&
\Delta_{\alpha,\,n} \; \triangleq \; 1/n^{\alpha},
\;\;\;
\Delta_{\beta,\,n} \; \triangleq \; 1/n^{\beta},
\;\;\;
\Delta_{\gamma,\,n} \; \triangleq \; 1/n^{\gamma},
\label{eqDelta} \\
& \Delta_{\alpha,\,n}\cdot \Delta_{\beta,\,n}\cdot \Delta_{\gamma,\,n} \;\; = \;\; 1/n,
\label{eqCube}
\end{align}
\begin{align}
&
{\cal X}_{n} \;\; \triangleq \;\;
\bigcup_{i \, \in \, \mathbb{Z}}\big\{i\Delta_{\alpha,\,n}\big\},
\;\;\;\;\;\;
{\cal Y}_{n} \;\; \triangleq \;\;
\bigcup_{i \, \in \, \mathbb{Z}}\big\{i\Delta_{\beta,\,n}\big\}.
\label{eqAlphabets}
\end{align}
For each $n$, we define also a discrete positive measure (not necessarily a distribution), which will approximate the channel $w$:
\begin{align}
{W\mathstrut}_{\!n}(y \, | \, x) \;\; & \triangleq \;\;
w(y \, | \, x)\cdot\Delta_{\beta,\,n},
\;\;\; \forall x \in {\cal X}_{n}, \; \forall y \in {\cal Y}_{n}.
\label{eqChanApprox}
\end{align}
Denoting by $C^{0}(A)$ a class of functions $f\! : \mathbb{R} \rightarrow \mathbb{R}_{\,\geq\, 0}$
continuous on an open subset $A \subseteq \mathbb{R}$,
we define 
\begin{align}
{\cal F}_{n} \; & \triangleq \;
\bigg\{
f\! : \mathbb{R} \rightarrow \mathbb{R}_{\,\geq\, 0}
\;\;\; \Big| \;\;\;
\sup_{y\,\in\,\mathbb{R}}f(y)< +\infty; \;\;\;
f \in
C^{0}\big(\mathbb{R} \setminus \{ {\cal Y}_{n} +  
\Delta_{\beta,\,n}/2 \}\big);
\;\;
\int_{\mathbb{R}}f(y)dy = 1
\bigg\}.
\label{eqBoundedContinuous}
\end{align}
The 
set (\ref{eqBoundedContinuous}), defined for a given $n$, will be used only in the derivation of the correct-decoding exponent,
while the following set of functions will be used only in the derivation of the error exponent:
\begin{align}
{\cal L} \; & \triangleq \;
\bigg\{
f\! : \mathbb{R} \rightarrow \mathbb{R}_{\,\geq\, 0}
\;\; \Big| \;\;
|f(y_{1}) - f(y_{2})| \leq  
K|y_{1} - y_{2}|,
\; \forall y_{1}, y_{2}\,;
\;
\int_{\mathbb{R}}f(y)dy = 1
\bigg\}, \;\; K  \triangleq \frac{1}{\sigma^{2}\sqrt{2\pi e}}.
\label{eqLipschitz} 
\end{align}
Note that ${\cal L}$ is a convex set and also 
each function $f \in {\cal L}$ is 
bounded and cannot exceed $\sqrt{K}$.

With a parameter $\rho \in (-1, \, +\infty)$, we define the following Gaussian probability density functions: 
\begin{align}
{p\mathstrut}_{Y|X}^{(\rho)}(y \, | \, x) \;\; & \triangleq \;\; \tfrac{1}{\sigma{\mathstrut}_{Y|X}(\rho)\sqrt{2\pi}}
\exp\Big\{\!-\tfrac{(y \, - \, k_{\rho} 
\, \cdot \, x)^{2}}{2\sigma_{Y|X}^{2}(\rho)}
\Big\},
\label{eqSolution} \\
k_{\rho} 
\;\; & \triangleq \;\;
\tfrac{
\text{SNR} \, - \, \rho \, - \, 1
\; + \;
\sqrt{(\text{SNR} \, - \, \rho \, - \, 1)^{2} \; + \; 4\,\cdot\,\text{SNR}}
}
{2\,\cdot\,\text{SNR}},
\;\;\;\;\;\; \text{SNR} \,\triangleq \, s^{2}/\sigma^{2},
\label{eqSquareroot} \\
\sigma_{Y|X}^{2}(\rho) \;\; & \triangleq \;\;
(1 + \rho)k_{\rho}\,\sigma^{2},
\label{eqSigmaYXDef} \\
{\widehat{p}\mathstrut}_{Y}^{\,(\rho)}(y) \;\; & \triangleq \;\; \tfrac{1}{\sigma{\mathstrut}_{Y}(\rho)\sqrt{2\pi}}
\exp\Big\{\!-\tfrac{y^{2}}{2\sigma_{Y}^{2}(\rho)}
\Big\},
\label{eqIdealYPDF} \\
\sigma_{Y}^{2}(\rho)
\;\; & \triangleq \;\;
\sigma^{2} \, + \, k_{\rho}\, 
s^{2},
\label{eqSigmaY} \\
{\widehat{p}\mathstrut}_{X}(x) \;\; & \triangleq \;\;
\tfrac{1}{s\sqrt{2\pi}}\exp\Big\{\!-\tfrac{x^{2}}{2s^{2}}
\Big\}.
\label{eqIdealXPDF}
\end{align}
The first property of the following lemma shows that ${\widehat{p}\mathstrut}_{Y}^{\,(\rho)}$
is the $y$-marginal PDF of the product $\,{\widehat{p}\mathstrut}_{X}{p\mathstrut}_{Y|X}^{(\rho)}$.

\bigskip

\begin{lemma}[Properties of (\ref{eqSolution})-(\ref{eqIdealXPDF})] \label{LemSigmaZ}
{\em The following properties hold:}
\begin{align}
\sigma_{Y}^{2}(\rho)
\;\; & = \;\;
\sigma_{Y|X}^{2}(\rho) \, + \, k_{\rho}^{2}\, 
s^{2},
\label{eqIdealYMarginal} \\
\frac{1 + \rho}{\sigma_{Y|X}^{2}(\rho)} \;\; & = \;\; \frac{\rho}{\sigma_{Y}^{2}(\rho)} \, + \, \frac{1}{\sigma^{2}},
\label{eqProperty} \\
\sigma_{Y|X}^{2}(\rho)
\;\; & = \;\;
\sigma^{2} \, + \, k_{\rho}(1 - k_{\rho})
s^{2},
\label{eqSigmaYX} \\
1 \, & \geq \,k_{\rho} \, > \, 0, \;\;\;\;\;\;\;\;\;\;\;\;\;\,\,\,\,
\rho \geq 0,
\label{eqPropk} \\
\tfrac{1}{2}
\big[ 1
\; + \;
\sqrt{1 \; + \; 4 \sigma^{2}s^{-2}}\,
\big]
\, & \geq \,k_{\rho} \, \geq \, 1, \;\;\;\;\;\; -1 \leq \rho \leq 0,
\label{eqPropk2} \\
{p\mathstrut}_{Y|X}^{(\rho)}(\,\cdot\,\,|\,x) \, & \in \, {\cal L}, \;\;\;\;\;\; \forall \rho \geq 0, \;
\forall x \in \mathbb{R},
\label{eqInL}
\end{align}
{\em 
and for any two jointly distributed random variables $(X, Y)$, such that
$\mathbb{E}
\big[X^{2}\big] = \sigma_{X}^{2} \, \leq \, s^{2} + \epsilon$, $\;\epsilon > 0$,
and $Y \,|\, X = x \; \sim \; {\cal N}\big(k_{\rho}\,x, \, \sigma_{Y|X}^{2}(\rho)\big)$, 
it holds that
}
\begin{align} 
\mathbb{E}
\big[(Y-X)^{2}\big]
\;\; & = \;\;
\sigma^{2} \, + \, (1 - k_{\rho})s^{2} \, + \, (1 - k_{\rho})^{2}
(\sigma_{X}^{2}
- s^{2})
\label{eqSigmaZ} \\
& \leq \;\;
\Bigg\{
\begin{array}{l r}
\sigma^{2} \, + \, s^{2} \, + \, \epsilon, &
\;\;\;\;\;\;  \rho \geq 0, \\
\sigma^{2} \, + \,
\epsilon\sigma^{2}s^{-2}, &
\;\;\;\;\;\;
 -1 < \rho \leq 0.
\end{array}
\label{eqSigmazBound}
\end{align}
\end{lemma}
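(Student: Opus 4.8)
The plan is to prove the algebraic identities \eqref{eqIdealYMarginal}--\eqref{eqInL} directly from the definitions \eqref{eqSquareroot}--\eqref{eqIdealXPDF}, and then to derive \eqref{eqSigmaZ}--\eqref{eqSigmazBound} by a short second-moment computation using \eqref{eqSigmaYX}. First I would establish the key scalar identity on $k_\rho$: from \eqref{eqSquareroot}, $k_\rho$ is the positive root of $\text{SNR}\cdot k^{2} \, - \, (\text{SNR} - \rho - 1)k \, - \, 1 = 0$, i.e.
\begin{equation}
\text{SNR}\cdot k_{\rho}^{2} \; = \; (\text{SNR} - \rho - 1)k_{\rho} \, + \, 1,
\nonumber
\end{equation}
or equivalently, dividing by $\text{SNR}=s^{2}/\sigma^{2}$, that $k_{\rho}^{2} s^{2} = (s^{2} - (\rho+1)\sigma^{2})k_{\rho} + \sigma^{2}$. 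Rearranging gives $k_{\rho}(1-k_{\rho})s^{2} = (1+\rho)k_{\rho}\sigma^{2} - \sigma^{2}$, which is exactly \eqref{eqSigmaYX} once one substitutes \eqref{eqSigmaYXDef}. Then \eqref{eqIdealYMarginal} follows by adding $k_{\rho}^{2}s^{2}$ to \eqref{eqSigmaYX} and comparing with \eqref{eqSigmaY}, and \eqref{eqProperty} follows by clearing denominators: using \eqref{eqSigmaYXDef} and \eqref{eqSigmaY}, the claim $(1+\rho)\sigma_{Y}^{2}(\rho)\sigma^{2} = \rho\,\sigma_{Y|X}^{2}(\rho)\sigma^{2} + \sigma_{Y|X}^{2}(\rho)\sigma_{Y}^{2}(\rho)$ reduces, after substituting $\sigma_{Y|X}^{2}(\rho)=(1+\rho)k_{\rho}\sigma^{2}$, to $\sigma_{Y}^{2}(\rho) = \rho k_{\rho}\sigma^{2} + (1+\rho)k_{\rho}\sigma_{Y}^{2}(\rho)/(1+\rho)$... more cleanly, I would just verify it is equivalent to \eqref{eqIdealYMarginal} after the substitution, which is a one-line manipulation. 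The property that ${\widehat{p}\mathstrut}_{Y}^{\,(\rho)}$ is the $y$-marginal of ${\widehat{p}\mathstrut}_{X}{p\mathstrut}_{Y|X}^{(\rho)}$ is then immediate: the marginal of a Gaussian $X\sim\mathcal N(0,s^2)$ pushed through $Y|X\sim\mathcal N(k_\rho x,\sigma_{Y|X}^2(\rho))$ is $\mathcal N(0,\sigma_{Y|X}^2(\rho)+k_\rho^2 s^2)$, and by \eqref{eqIdealYMarginal} this variance equals $\sigma_Y^2(\rho)$.

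Next I would handle the sign/monotonicity bounds \eqref{eqPropk} and \eqref{eqPropk2}. Writing $k_\rho$ as in \eqref{eqSquareroot}, positivity is clear since the numerator is $(\text{SNR}-\rho-1)+\sqrt{(\text{SNR}-\rho-1)^2+4\text{SNR}}>0$. For $\rho\ge 0$: plugging $k=1$ into the quadratic $\text{SNR}\cdot k^2-(\text{SNR}-\rho-1)k-1$ gives $\text{SNR}-(\text{SNR}-\rho-1)-1=\rho\ge 0$; since the quadratic opens upward and its positive root is $k_\rho$, a nonnegative value at $k=1$ places $k_\rho\le 1$, giving \eqref{eqPropk}. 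For $-1\le\rho\le 0$ the value at $k=1$ is $\rho\le 0$, so $k_\rho\ge 1$; the upper bound in \eqref{eqPropk2} is the value of $k_\rho$ at $\rho=-1$, which from \eqref{eqSquareroot} with $\text{SNR}-\rho-1=\text{SNR}=s^2/\sigma^2$... I would instead note $k_\rho$ is decreasing in $\rho$ (differentiate, or observe the quadratic's positive root decreases as the linear coefficient's relevant term changes), so the maximum over $-1\le\rho\le 0$ occurs at $\rho=-1$; evaluating \eqref{eqSquareroot} at $\rho=-1$ gives $k_{-1}=\tfrac{1}{2}\big(1+\sqrt{1+4\sigma^2 s^{-2}}\big)$ after dividing numerator and denominator by $\text{SNR}$, matching \eqref{eqPropk2}. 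For \eqref{eqInL}, with $\rho\ge 0$ we have $k_\rho\le 1$ and, from \eqref{eqSigmaYX}, $\sigma_{Y|X}^2(\rho)=\sigma^2+k_\rho(1-k_\rho)s^2\ge\sigma^2$; a Gaussian density with variance $\ge\sigma^2$ has Lipschitz constant at most $\max_y |{\widehat{p}}'(y)| = \tfrac{1}{\sigma_{Y|X}^2\sqrt{2\pi}}\cdot\tfrac{\sigma_{Y|X}}{\sqrt e}=\tfrac{1}{\sigma_{Y|X}\sqrt{2\pi e}}\le\tfrac{1}{\sigma^2\sqrt{2\pi e}}$ only if $\sigma_{Y|X}\ge\sigma^2$... here I must be slightly careful and instead bound $\tfrac{1}{\sigma_{Y|X}^2(\rho)\sqrt{2\pi e}}$: since $\sigma_{Y|X}^2(\rho)\ge\sigma^2$ this is $\le\tfrac{1}{\sigma^2\sqrt{2\pi e}}=K$, so ${p\mathstrut}_{Y|X}^{(\rho)}(\cdot|x)\in{\cal L}$.

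Finally, for \eqref{eqSigmaZ}--\eqref{eqSigmazBound}: with $\mathbb E[X^2]=\sigma_X^2$, $\mathbb E[X]$ arbitrary but irrelevant since we expand $\mathbb E[(Y-X)^2]=\mathbb E[(Y-k_\rho X)^2] + (k_\rho-1)^2\mathbb E[X^2] + 2(k_\rho-1)\mathbb E[X(Y-k_\rho X)]$; conditioning on $X$, $\mathbb E[(Y-k_\rho X)^2\,|\,X]=\sigma_{Y|X}^2(\rho)$ and $\mathbb E[(Y-k_\rho X)\,|\,X]=0$, so $\mathbb E[(Y-X)^2]=\sigma_{Y|X}^2(\rho)+(1-k_\rho)^2\sigma_X^2$. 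Substituting \eqref{eqSigmaYX} for $\sigma_{Y|X}^2(\rho)$ gives $\sigma^2+k_\rho(1-k_\rho)s^2+(1-k_\rho)^2\sigma_X^2$, and writing $\sigma_X^2=s^2+(\sigma_X^2-s^2)$ and combining $k_\rho(1-k_\rho)s^2+(1-k_\rho)^2 s^2=(1-k_\rho)s^2$ yields exactly \eqref{eqSigmaZ}. For \eqref{eqSigmazBound}: when $\rho\ge 0$ we have $0<k_\rho\le 1$ so $1-k_\rho\ge 0$ and $(1-k_\rho)^2\le 1$; since $\sigma_X^2-s^2\le\epsilon$, the worst case of $(1-k_\rho)s^2+(1-k_\rho)^2(\sigma_X^2-s^2)$ is bounded by $s^2+\epsilon$ (using $0\le 1-k_\rho\le 1$ and $\sigma_X^2-s^2\le\epsilon$ if the sign is positive, and noting if $\sigma_X^2-s^2<0$ the term only decreases), giving $\sigma^2+s^2+\epsilon$. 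When $-1<\rho\le 0$ we have $k_\rho\ge 1$ so $1-k_\rho\le 0$; then $(1-k_\rho)s^2\le 0$ and $(1-k_\rho)^2(\sigma_X^2-s^2)\le(1-k_\rho)^2\epsilon$, so $\mathbb E[(Y-X)^2]\le\sigma^2+(1-k_\rho)^2\epsilon\le\sigma^2+(k_{-1}-1)^2\epsilon$; I would then check $(k_{-1}-1)^2 = \sigma^2 s^{-2}$ using $k_{-1}=\tfrac12(1+\sqrt{1+4\sigma^2 s^{-2}})$, since $(k_{-1}-1)^2=\tfrac14(\sqrt{1+4\sigma^2 s^{-2}}-1)^2$ and expanding shows this equals $\sigma^2 s^{-2}$ precisely when $k_{-1}$ satisfies its defining quadratic — which it does — yielding \eqref{eqSigmazBound}.

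The main obstacle I anticipate is bookkeeping rather than conceptual: correctly tracking the substitution of the $k_\rho$-quadratic through \eqref{eqSigmaYX}, \eqref{eqIdealYMarginal}, \eqref{eqProperty} without sign errors, and in \eqref{eqSigmazBound} handling the case $-1<\rho\le 0$ where the bound depends on the supremum $(1-k_\rho)^2\le(k_{-1}-1)^2=\sigma^2 s^{-2}$, which requires using the defining quadratic of $k_{-1}$ one more time. Everything else is a direct Gaussian second-moment computation.
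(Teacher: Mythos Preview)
Your approach is essentially the paper's: direct algebraic verification from the definitions, using that $k_\rho$ is the positive root of the quadratic $\text{SNR}\cdot k^{2}-(\text{SNR}-\rho-1)k-1=0$. Your order (\eqref{eqSigmaYX} first, then \eqref{eqIdealYMarginal}, then \eqref{eqProperty}) is slightly different from the paper's, but the content is the same, and your arguments for \eqref{eqPropk}, \eqref{eqPropk2}, \eqref{eqInL}, and \eqref{eqSigmaZ} are correct.

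There is, however, one genuine slip at the very end. You claim $(k_{-1}-1)^{2}=\sigma^{2}s^{-2}$ and that ``expanding shows this equals $\sigma^{2}s^{-2}$ precisely when $k_{-1}$ satisfies its defining quadratic.'' This is false: the quadratic at $\rho=-1$ reads $\text{SNR}\,k_{-1}^{2}-\text{SNR}\,k_{-1}-1=0$, i.e.\ $k_{-1}(k_{-1}-1)=\sigma^{2}s^{-2}$, not $(k_{-1}-1)^{2}=\sigma^{2}s^{-2}$. Fortunately only the inequality is needed, and it follows immediately: since $k_{-1}>1$ we have $k_{-1}-1<k_{-1}$, hence
\[
(k_{-1}-1)^{2}\;<\;k_{-1}(k_{-1}-1)\;=\;\sigma^{2}s^{-2},
\]
which combined with your (correct) chain $\mathbb{E}[(Y-X)^{2}]\le\sigma^{2}+(1-k_\rho)^{2}\epsilon\le\sigma^{2}+(k_{-1}-1)^{2}\epsilon$ gives \eqref{eqSigmazBound} for $-1<\rho\le 0$. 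With this one-line correction your proof is complete and matches the paper's.
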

\begin{proof}
The first property (\ref{eqIdealYMarginal}) can be verified using (\ref{eqSigmaY}), (\ref{eqSigmaYXDef}), (\ref{eqSquareroot}).
Then (\ref{eqProperty}) can be obtained from (\ref{eqIdealYMarginal}), (\ref{eqSigmaY}), (\ref{eqSigmaYXDef}).
Property (\ref{eqSigmaYX}) follows by (\ref{eqIdealYMarginal}) and (\ref{eqSigmaY}).
It can be verified from (\ref{eqSquareroot}) that $k_{\rho}$ is a positive monotonically decreasing function of $\rho$,
such that $k_{0} = 1$. Then we get (\ref{eqPropk}) and (\ref{eqPropk2}).
From (\ref{eqSigmaYX}) and (\ref{eqPropk}) we see that $\sigma_{Y|X}^{2}(\rho) \geq \sigma^{2}$ for all $\rho \geq 0$,
which gives (\ref{eqInL}).
Equality (\ref{eqSigmaZ}) can be obtained using (\ref{eqSigmaYX}).
Then, using (\ref{eqPropk}) and (\ref{eqPropk2}), we obtain (\ref{eqSigmazBound}).
\end{proof}

The following expressions will describe our results for the error and correct-decoding exponents:
\begin{align}
E_{e}(R) \;\; & \triangleq \;\;
\;\;
\sup_{\rho\,\geq\,0}
\;\;
\Big\{
D\big(\,{p\mathstrut}_{Y|X}^{(\rho)}\,\|\, \, w \, \,|\, \,{\widehat{p}\mathstrut}_{X}\big)
\, + \,
\rho
\big[
I\big(\,{\widehat{p}\mathstrut}_{X}, \,{p\mathstrut}_{Y|X}^{(\rho)}\big)
\, - \,
R
\big]
\Big\},
\label{eqEDef} \\
E_{c}(R) \;\; & \triangleq \;\;
\sup_{\!\!\!\!\! -1 \, < \, \rho\,\leq\,0}
\Big\{
D\big(\,{p\mathstrut}_{Y|X}^{(\rho)}\,\|\, \, w \, \,|\, \,{\widehat{p}\mathstrut}_{X}\big)
\, + \,
\rho
\big[
I\big(\,{\widehat{p}\mathstrut}_{X}, \,{p\mathstrut}_{Y|X}^{(\rho)}\big)
\, - \,
R
\big]
\Big\}.
\label{eqECDef}
\end{align}
The following identity can be obtained
using 
(\ref{eqSolution}),  
(\ref{eqSigmaYXDef}), (\ref{eqIdealYPDF}), (\ref{eqIdealXPDF}), (\ref{eqProperty}):
\begin{align}
& D\big(\,{p\mathstrut}_{Y|X}^{(\rho)}\,\|\, \, w \, \,|\, \,{\widehat{p}\mathstrut}_{X}\big)
\, + \,
\rho
I\big(\,{\widehat{p}\mathstrut}_{X}, \,{p\mathstrut}_{Y|X}^{(\rho)}\big)
\;\; \equiv \;\;
c_{0}(\rho) \, + \, c_{1}(\rho)s^{2},
\nonumber \\
& c_{0}(\rho) \;\; \triangleq \;\; \frac{1}{\ln b} 
\ln\bigg(\frac{\sigma \cdot \sigma_{Y}^{\rho}(\rho)}{\sigma_{Y|X}^{1\,+\,\rho}(\rho)}\bigg),
\;\;\;\;\;\;\;\;\;
c_{1}(\rho) \;\;
\triangleq \;\; \frac{1 - k_{\rho}}{2\sigma^{2}\ln b}.
\label{eqC0C1}
\end{align}
We note also that $c_{0}(\rho) \rightarrow 0$, as $\rho \rightarrow +\infty$, which can be verified using the properties (\ref{eqSigmaYXDef}), (\ref{eqSigmaY}), and (\ref{eqSigmaYX}).
It can be verified that the expression inside the supremum of (\ref{eqEDef})
is equivalent to the expression for the Gaussian random-coding
error exponent of Gallager before the maximization over $\rho$ \cite[Eq.~7.4.24 with Eq.~7.4.28]{Gallager}.
Therefore, with the supremum over $\rho \geq 0$,
the expression (\ref{eqEDef}) coincides with the converse sphere-packing bound of Shannon
(\ref{eqShannonExponent}).




\section{Main results}\label{Main}


In this section we present two theorems and a proposition.
The proof of the first theorem relies on Lemmas~\ref{LemAllCodebooks} and~\ref{LemPDFtoT},
which appear in Sections~\ref{ErrExp} and~\ref{PDFtypePDF}, respectively.

\bigskip

\begin{thm}[Error exponent] \label{thmErrorExp}
{\em
Let $J \sim \text{Unif}\,\big(\{1, 2, .\,.\,.\, , \, M\}\big)$ be a random variable,
independent of the channel, and let ${\bf x}(J) \rightarrow {\bf Y}$ be the random channel-input and channel-output vectors, respectively.
Then
}
\begin{align}
\limsup_{\substack{n\,\rightarrow\,\infty}}
\;
\sup_{{\cal C}}
\;
\sup_{g}
\;
\bigg\{\!
- \frac{1}{n}\,{\log\mathstrut}_{\!b}\Pr \big\{
g({\bf Y}) \neq J \big\}
\bigg\}
\;\; & \leq \;\;
E_{e}(R),
\nonumber
\end{align}
{\em where $E_{e}(R)$ is defined by (\ref{eqEDef}), decoder functions $g$ are defined by (\ref{eqDec}), and codebooks ${\cal C}$ satisfy (\ref{eqPowerConstraint}).}
\end{thm}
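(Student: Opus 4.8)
The plan is to reduce the continuous AWGN problem to a discrete method-of-types argument on the lattice alphabets $\cX_n$, $\cY_n$, and then push the quantization parameters to their limits. First I would replace the continuous channel output $\bY$ by its quantized version $\bY^q \in \cY_n^n$ and the codewords $\bx(m)$ by their quantized versions $\bx^q(m) \in \cX_n^n$, noting that the power constraint (\ref{eqPowerConstraint}) is essentially preserved up to an $O(n^{-\alpha})$ slack. Since $g$ is defined by (\ref{eqDec}) with \emph{open} decision regions, quantizing the output only changes the decoded message on a set whose probability can be absorbed into the exponent; the key gain is that the error event $\{g(\bY)\neq J\}$ can be controlled through the discrete conditional types $P_{Y|X}$ of the pair $(\bx^q(J),\bY^q)$. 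I would invoke the extension of the method of types (Section~\ref{MOT}) to conclude that the number of conditional types is sub-exponential in $n$ (this is where the choice $\Delta_{\alpha,n}\Delta_{\beta,n}\Delta_{\gamma,n}=1/n$ and the power constraints on types enter), so that a single dominating type governs the exponent.

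Next I would apply the converse lemma of Section~\ref{ConvLemma} — whose statement I am allowed to assume — to the quantized system: it should say that for any codebook and decoder, some conditional type $P_{Y|X}$ with $I(P_X,P_{Y|X})\lesssim R$ contributes probability of correct decoding (hence error) at exponential rate governed by $D(P_{Y|X}\,\|\,W_n\,|\,P_X)$, yielding an upper bound on the error exponent of the form $\min D(P_{Y|X}\,\|\,W_n\,|\,P_X)$ over types with the rate constraint, plus a Lagrangian relaxation $+\,\rho[I(P_X,P_{Y|X})-R]$ for $\rho\ge 0$. I would then use Lemma~\ref{LemAllCodebooks} (from Section~\ref{ErrExp}) to turn this per-codebook bound into the supremum over all codebooks and decoders on the left-hand side, and Lemma~\ref{LemPDFtoT} (from Section~\ref{PDFtypePDF}) to relate the discrete divergence $D(P_{Y|X}\,\|\,W_n\,|\,P_X)$ and mutual information to their PDF counterparts $D(p_{Y|X}\,\|\,w\,|\,\widehat p_X)$ and $I(\widehat p_X,p_{Y|X})$ in the limit $n\to\infty$, where the optimizing type converges (in CF/CDF) to a Gaussian.

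Finally, I would identify the limiting optimization. By Lagrangian duality the bound becomes $\sup_{\rho\ge 0}\inf_{p_{Y|X}}\{D(p_{Y|X}\,\|\,w\,|\,\widehat p_X)+\rho[I(\widehat p_X,p_{Y|X})-R]\}$; because $\widehat p_X$ is Gaussian with variance $s^2$, the inner infimum is solved explicitly by Lagrange multipliers, and the minimizer is exactly the Gaussian $p_{Y|X}^{(\rho)}$ of (\ref{eqSolution}) with the constants $k_\rho$, $\sigma^2_{Y|X}(\rho)$ of (\ref{eqSquareroot})--(\ref{eqSigmaYXDef}); Lemma~\ref{LemSigmaZ}, in particular (\ref{eqInL}) and (\ref{eqProperty}), guarantees this candidate is admissible (it lies in $\cL$ for $\rho\ge 0$, matching the Lipschitz constraint used in the error-exponent construction) and verifies the algebraic identity (\ref{eqC0C1}) that makes the value match. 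Substituting gives precisely $E_e(R)$ as in (\ref{eqEDef}). The main obstacle I anticipate is the first step: controlling the quantization error uniformly over \emph{all} codebooks and decoders and showing that the discrepancy between $\Pr\{g(\bY)\neq J\}$ for the true continuous decoder and the probability of the corresponding discrete-type event is sub-exponential — this requires the careful coupling of $\Delta_{\beta,n}$-quantization with the open decision regions and the Lipschitz/boundedness properties of the densities, and is exactly why the auxiliary function classes $\cL$ and $\cF_n$ were introduced.
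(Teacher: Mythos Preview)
Your high-level architecture is right --- Lemma~\ref{LemAllCodebooks} gives the types bound, Lemma~\ref{LemPDFtoT} passes to PDF's, a Lagrangian introduces $\rho\ge 0$, and the Gaussian family $p_{Y|X}^{(\rho)}$ closes the argument --- but there is a genuine conceptual gap in how you handle the input distribution.

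You write that Lemma~\ref{LemPDFtoT} converts the discrete quantities to ``$D(p_{Y|X}\,\|\,w\,|\,\widehat p_X)$ and $I(\widehat p_X,p_{Y|X})$'' and that ``the optimizing type converges (in CF/CDF) to a Gaussian.'' This is not what happens, and the proof does \emph{not} require it. Lemma~\ref{LemPDFtoT} keeps the empirical type $P_X$ fixed and only replaces the conditional $P_{Y|X}$ by a PDF; the outer $\max_{P_X}$ over all types satisfying $\mathbb E[X^2]\le s^2+\epsilon$ survives all the way to the end. The paper never shows (and does not need) that the maximizing $P_X$ tends to $\widehat p_X$. Instead, after substituting the \emph{feasible} test channel $p_{Y|X}^{(\rho)}$ (not claimed to be the infimizer --- feasibility via (\ref{eqInL}), (\ref{eqSigmazBound}) suffices for an upper bound), one uses the algebraic identity
\[
D\big(p_{Y|X}^{(\rho)}\,\|\,w\,\big|\,P_X\big)+\rho\,I\big(P_X,p_{Y|X}^{(\rho)}\big)
= c_0(\rho)+c_1(\rho)\,\mathbb E_{P_X}[X^2]-\rho\,D\big(p_Y^{(\rho)}\,\|\,\widehat p_Y^{(\rho)}\big),
\]
valid for \emph{every} $P_X$. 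The max over $P_X$ is then dispatched in one stroke: drop the nonnegative divergence $D(p_Y^{(\rho)}\|\widehat p_Y^{(\rho)})$ and use $c_1(\rho)\ge 0$ with $\mathbb E[X^2]\le s^2+\epsilon$. The remark about convergence of $P_X$ to Gaussian is only an \emph{a posteriori} observation about when equality can hold, not a step in the proof. Your plan as written would leave you trying to prove an unnecessary (and nontrivial) convergence statement.

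A smaller point: your anticipated ``main obstacle'' --- controlling quantization uniformly over all codebooks/decoders --- is not a separate preliminary step you have to carry out. It is already packaged inside Lemma~\ref{LemAllCodebooks} (via Lemma~\ref{LemPDFExponent} and Lemma~\ref{LemConvLem}); you simply invoke that lemma and start from its output, the types expression (\ref{eqCCExponent}). The class $\cL$ enters later, at the PDF stage, to guarantee that $p_{Y|X}^{(\rho)}$ is admissible and that Lemma~\ref{LemPDFtoT} applies; it is not used to control the decoder quantization.
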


\bigskip

\begin{proof}
Starting from Lemma~\ref{LemAllCodebooks}, we can write the following sequence of inequalities:
\begin{align}
&
\limsup_{\substack{n\,\rightarrow\,\infty}}
\;\;\;
\sup_{{\cal C}}
\;\;
\sup_{g}
\;\;
\left\{
- \frac{1}{n}\,{\log\mathstrut}_{\!b} \Pr \big\{
g({\bf Y}) \neq J \big\}
\right\}
\label{eqReliabilityF} \\
\overset{a}{\leq} \;\; &
\limsup_{\substack{n\,\rightarrow\,\infty}}
\;
\max_{\substack{\\{P\mathstrut}_{\!X{\color{white}|}}\!\!:\\
{P\mathstrut}_{\!X}\,\in \, {\cal P}_{n}({\cal X}_{n}),
\\
\mathbb{E}[X^{2}] \; \leq \; s^{2}\, + \, \epsilon}}
\;\;\;\;\;\;
\min_{\substack{\\{P\mathstrut}_{\!Y|X}:\\
{P\mathstrut}_{\!XY}\,\in \, {\cal P}_{n}({\cal X}_{n}\,\times\, {\cal Y}_{n}),
\\
\mathbb{E}[(Y-X)^{2}] \; \leq \; \sigma^2 \, + \, s^{2} \, + \, 2\epsilon, 
\\ I({P\mathstrut}_{\!X}, \, {P\mathstrut}_{\!Y|X}) \; \leq \; R \, - \, \epsilon
}}
\!\!\!\!\!
\!
\;\,\,
D\big({P\mathstrut}_{\!Y|X}\,\|\, {W\mathstrut}_{\!n} \,|\,  {P\mathstrut}_{\!X}\big)
\label{eqCCExponent} \\
\overset{b}{\leq} \;\; &
\limsup_{\substack{n\,\rightarrow\,\infty}}
\;
\max_{\substack{\\{P\mathstrut}_{\!X{\color{white}|}}\!\!:\\
{P\mathstrut}_{\!X}\,\in \, {\cal P}_{n}({\cal X}_{n}),
\\
\mathbb{E}[X^{2}] \; \leq \; s^{2}\, + \, \epsilon}}
\;\;\;\;\;\;\,
\inf_{\substack{{p\mathstrut}_{Y|X}:\\
{p\mathstrut}_{Y|X}(\, \cdot \, \,|\, x) \, \in \, {\cal L}, \; \forall x,
\\
\mathbb{E}[(Y-X)^{2}] \; \leq \; \sigma^2 \, + \, s^{2} \, + \, \epsilon,
\\ I({P\mathstrut}_{\!X}, \,\, {p\mathstrut}_{Y|X}) \; \leq \; R \, - \, 2\epsilon
}}
\!\!\!\!\!
\;\,\,
D\big(\,{p\mathstrut}_{Y|X}\,\|\, \, w \, \,|\, {P\mathstrut}_{\!X}\big)
\label{eqInf1} \\
\overset{c}{=}
\;\; &
\limsup_{\substack{n\,\rightarrow\,\infty}}
\;
\max_{\substack{\\{P\mathstrut}_{\!X{\color{white}|}}\!\!:\\
{P\mathstrut}_{\!X}\,\in \, {\cal P}_{n}({\cal X}_{n}),
\\
\mathbb{E}[X^{2}] \; \leq \; s^{2}\, + \, \epsilon}}
\;
\sup_{\substack{\rho\,\geq\,0}}
\inf_{\substack{{p\mathstrut}_{Y|X}:\\
{p\mathstrut}_{Y|X}(\, \cdot \, \,|\, x) \, \in \, {\cal L}, \; \forall x,
\\
\mathbb{E}[(Y-X)^{2}] \; \leq \; \sigma^2 \, + \, s^{2} \, + \, \epsilon
}}
\!\!\!\!\!
\Big\{
D\big(\,{p\mathstrut}_{Y|X}\,\|\, \, w \, \,|\, {P\mathstrut}_{\!X}\big)
\, + \,
\rho
\Big[
I\big({P\mathstrut}_{\!X}, \, {p\mathstrut}_{Y|X}\big)
 -
R
+
2\epsilon
\Big]
\Big\}
\label{eqInf2} \\
\overset{d}{\leq}  \;\; &
\limsup_{\substack{n\,\rightarrow\,\infty}}
\;
\max_{\substack{\\{P\mathstrut}_{\!X{\color{white}|}}\!\!:\\
{P\mathstrut}_{\!X}\,\in \, {\cal P}_{n}({\cal X}_{n}),
\\
\mathbb{E}[X^{2}] \; \leq \; s^{2}\, + \, \epsilon}}
\;
\sup_{\substack{\rho\,\geq\,0}}
\!\!\!\!\!
\;\;\;\;\;\;\;\;\;\;\;\;\;\;\;\;\;\;\;\;\;\;\;\;\;\;\;\;\,
\Big\{
D\big(\,{p\mathstrut}_{Y|X}^{(\rho)}\,\|\, \, w \, \,|\, {P\mathstrut}_{\!X}\big)
\, + \,
\rho
\Big[
I\big({P\mathstrut}_{\!X}, \, {p\mathstrut}_{Y|X}^{(\rho)}\big)
 -
R
+
2\epsilon
\Big]
\Big\}
\nonumber \\
\overset{e}{\equiv}  \;\; &
\limsup_{\substack{n\,\rightarrow\,\infty}}
\;
\max_{\substack{\\{P\mathstrut}_{\!X{\color{white}|}}\!\!:\\
{P\mathstrut}_{\!X}\,\in \, {\cal P}_{n}({\cal X}_{n}),
\\
\mathbb{E}[X^{2}] \; \leq \; s^{2}\, + \, \epsilon}}
\;
\sup_{\substack{\rho\,\geq\,0}}
\!\!\!\!\!
\;\;\;\;\;\;\;\;\;\;\;\;\;\;\;\;\;\;\;
\Big\{
c_{0}(\rho) \, + \, c_{1}(\rho)\,
\mathbb{E}\big[X^{2}\big]
\, + \,
\rho
\Big[
\! - \!
D\big(\,{p\mathstrut}_{Y}^{\,(\rho)}\,\|\, \, {\widehat{p}\mathstrut}_{Y}^{\,(\rho)} \big)
-
R
+
2\epsilon
\Big]
\Big\}
\label{eqMax} \\
\overset{f}{\leq}  \;\; &
\;\;\;\;\;\;\;\;\;\;\;\;\;\;\;\;\;\;\;\;\;\;\;\;\;\;\;\;\;\,\,
\sup_{\substack{\rho\,\geq\,0}}
\!\!\!\!\!
\;\;\;\;\;\;\;\;\;\;\;\;\;\;\;\;\;\;\;
\Big\{
c_{0}(\rho) \, + \, c_{1}(\rho)
(s^{2} + \epsilon)
\, - \,
\rho (R - 2\epsilon)
\Big\},
\label{eqWithEpsilon}
\end{align}
where:

($a$) holds for any $\epsilon > 0$ by Lemma~\ref{LemAllCodebooks} with $c_{XY} = \sigma^2 + s^{2} + 2\epsilon$.
Note also
that $D\big({P\mathstrut}_{\!Y|X}\,\|\, {W\mathstrut}_{\!n} \,|\,  {P\mathstrut}_{\!X}\big)$ in (\ref{eqCCExponent})
denotes the Kullback–Leibler divergence 
between
the probability distribution ${P\mathstrut}_{\!Y|X}$ 
and
the positive measure ${W\mathstrut}_{\!n}$ defined in (\ref{eqChanApprox}),
which is not a probability distribution but only approximates the channel $w$.

($b$) follows by Lemma~\ref{LemPDFtoT} for the alphabet parameters $\alpha \in \big(0, \tfrac{1}{4}\big)$
and $\tfrac{1}{3} + \tfrac{2}{3}\alpha < \beta < \tfrac{2}{3} - \tfrac{2}{3}\alpha$.

($c$) holds 
for all $R > 0$ with the possible exception of the {\em single point} on $R$-axis
where (\ref{eqInf1}) may transition between a finite value and $+\infty$.
For the equality, let us compare the infimum of (\ref{eqInf1}) and the supremum over $\rho \geq 0$ in (\ref{eqInf2})
as functions of $R \in \mathbb{R}$.
First, it can be observed that the supremum 
of (\ref{eqInf2})
is the lower convex envelope of the infimum of (\ref{eqInf1}). 
Second, the infimum of (\ref{eqInf1}) itself
is a convex ($\cup$) 
function of $R$. Then they coincide for all values of $R$, except possibly for the single point where
they both jump to $+\infty$.
This
property carries over to the external 
`$\limsup \; \max$'
as well.

($d$) follows because
by (\ref{eqSigmazBound}) and (\ref{eqInL}) function ${p\mathstrut}_{Y|X}^{(\rho)}$
satisfies the conditions under the infimum of (\ref{eqInf2}).

($e$) holds as equality inside the supremum over $\rho \geq 0$, separately for each $\rho$.
In (\ref{eqMax})
by ${p\mathstrut}_{Y}^{\,(\rho)}$ we denote the corresponding marginal PDF of the product
${P\mathstrut}_{\!X}{p\mathstrut}_{Y|X}^{(\rho)}$ and use the 
definitions (\ref{eqC0C1}).
Then ($e$) follows by the definitions of ${p\mathstrut}_{Y|X}^{(\rho)}$ and ${\widehat{p}\mathstrut}_{Y}^{\,(\rho)}$
in (\ref{eqSolution}), (\ref{eqIdealYPDF}), and by their properties (\ref{eqSigmaYXDef}), (\ref{eqProperty}).

($f$) follows by the non-negativity of the divergence, and by the condition under the maximum of (\ref{eqMax}),
since $c_{1}(\rho)\geq 0$ for $\rho \geq 0$.

In conclusion, according to ($c$) we obtain that the inequality between (\ref{eqReliabilityF}) and (\ref{eqWithEpsilon}),
as functions of $R$, holds for all $R > 0$,
except possibly for
the single point $R = 2\epsilon$, where the jump to $+\infty$ in (\ref{eqWithEpsilon}) occurs.
Therefore, taking the limit as $\epsilon \rightarrow 0$, we obtain that (\ref{eqReliabilityF}) is upper-bounded for all $R > 0$ by
\begin{displaymath}
\sup_{\substack{\rho\,\geq\,0}}
\;
\big\{
c_{0}(\rho) \, + \, c_{1}(\rho)s^{2}
\, - \,
\rho R
\big\},
\end{displaymath}
which is the same as (\ref{eqEDef}).
\end{proof}


The second theorem relies on Lemmas~\ref{LemCorDec} and~\ref{LemTtoPDF},
which appear in Sections~\ref{CorDecExp} and~\ref{PDFtypePDF}, respectively.

\bigskip

\begin{thm}[Correct-decoding exponent] \label{thmCorDecExp}
{\em
Let $J \sim \text{Unif}\,\big(\{1, 2, .\,.\,.\, , \, M\}\big)$ be a random variable,
independent of the channel, and let ${\bf x}(J) \rightarrow {\bf Y}$ be the random channel-input and channel-output vectors, respectively.
Then}
\begin{align}
\liminf_{\substack{n\,\rightarrow\,\infty}}
\;
\inf_{{\cal C}}
\;
\inf_{g}
\;
\bigg\{\!
- \frac{1}{n}\,{\log\mathstrut}_{\!b}\Pr \big\{
g({\bf Y}) = J \big\}
\bigg\}
\;\; & \geq \;\;
E_{c}(R),
\nonumber
\end{align}
{\em where $E_{c}(R)$ is defined by (\ref{eqECDef}), decoder functions $g$ are defined by (\ref{eqDec}), and codebooks ${\cal C}$ satisfy (\ref{eqPowerConstraint}).}
\end{thm}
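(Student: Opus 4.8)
The plan is to mirror the structure of the proof of Theorem~\ref{thmErrorExp}, but running the chain of inequalities in the opposite direction, since here we seek a \emph{lower} bound on a \emph{correct-decoding} exponent rather than an upper bound on an error exponent. First I would invoke the converse lemma of Section~\ref{ConvLemma} (here referred to as Lemma~\ref{LemCorDec}) to bound, for every codebook $\cC$ and every decoder $g$, the quantity $-\tfrac{1}{n}\log_b \Pr\{g(\bY)=J\}$ from below by a combinatorial expression involving types over $\cX_n\times\cY_n$: concretely, an expression of the form $\min_{P_{\!X}} \max_{P_{\!Y|X}} D(P_{\!Y|X}\|W_n\,|\,P_{\!X})$ with a power constraint $\mathbb{E}[X^2]\le s^2+\epsilon$ on the input type and a mutual-information constraint $I(P_{\!X},P_{\!Y|X})\ge R+\epsilon$ (or the analogous rate constraint appropriate for correct decoding). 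The second moment of $Y-X$ will need to be controlled as well, but now using the second branch of (\ref{eqSigmazBound}), valid for $-1<\rho\le 0$, which keeps the relevant second moment close to $\sigma^2$ rather than $\sigma^2+s^2$.

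The next step is to pass from the discrete types back to PDF's, using Lemma~\ref{LemTtoPDF} of Section~\ref{PDFtypePDF} (the counterpart of Lemma~\ref{LemPDFtoT} used in step ($b$) of Theorem~\ref{thmErrorExp}), for an appropriate choice of the alphabet exponents $\alpha,\beta,\gamma$; this replaces the type-class divergence $D(P_{\!Y|X}\|W_n\,|\,P_{\!X})$ by the continuous divergence $D(p_{Y|X}\|w\,|\,P_{\!X})$ and the constraints by their continuous analogues, with the supremum now taken over some suitable class of conditional densities $p_{Y|X}$. I would then introduce the Lagrange multiplier $\rho$, but now restricted to $-1<\rho\le 0$: one writes the constrained minimax as $\inf_{-1<\rho\le 0}\sup_{p_{Y|X}}\{D(p_{Y|X}\|w\,|\,P_{\!X})+\rho[I(P_{\!X},p_{Y|X})-R-2\epsilon]\}$, using a minimax/convex-envelope argument analogous to step ($c$), and lower-bounds the inner supremum by plugging in the explicit Gaussian candidate $p_{Y|X}^{(\rho)}$ of (\ref{eqSolution}), which by (\ref{eqInL})--(\ref{eqSigmazBound}) and the properties in Lemma~\ref{LemSigmaZ} satisfies the relevant constraints for $\rho\le 0$. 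Applying the identity (\ref{eqC0C1}) turns the objective into $c_0(\rho)+c_1(\rho)\mathbb{E}[X^2]-\rho[D(p_Y^{(\rho)}\|\widehat{p}_Y^{(\rho)})+R+2\epsilon]$, and since $\rho\le 0$ the divergence term now enters with a sign that makes dropping it a \emph{lower} bound; combined with $c_1(\rho)\le 0$ for $\rho\le 0$ and the power bound $\mathbb{E}[X^2]\le s^2+\epsilon$, this yields $\sup_{-1<\rho\le 0}\{c_0(\rho)+c_1(\rho)s^2-\rho R\}$ after sending $\epsilon\to 0$, which is exactly $E_c(R)$ as defined in (\ref{eqECDef}).

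The main obstacle I anticipate is getting the directions of all the inequalities consistently reversed relative to Theorem~\ref{thmErrorExp}, and in particular verifying that the Lagrangian/minimax manipulation still goes through on the half-open interval $\rho\in(-1,0]$: one must check that the inner optimization over $p_{Y|X}$ is still a supremum (not an infimum), that the relevant function of $R$ is concave rather than convex so that the envelope argument applies in the correct sense, and that no value escapes to $-\infty$ in a way that breaks the argument (the analogue of the ``single point'' caveat in step ($c$)). A secondary technical point is that the class of conditional densities admissible in the correct-decoding case is governed by $\cF_n$ of (\ref{eqBoundedContinuous}) rather than by the Lipschitz class $\cL$ of (\ref{eqLipschitz}), so the estimates from Lemma~\ref{LemTtoPDF} and the membership claim for $p_{Y|X}^{(\rho)}$ must be checked against $\cF_n$; but since $p_{Y|X}^{(\rho)}$ is a smooth bounded density integrating to $1$, this should be routine. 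Everything else — the explicit evaluation via (\ref{eqC0C1}), the bound on $\mathbb{E}[(Y-X)^2]$ via (\ref{eqSigmazBound}), and the final $\epsilon\to 0$ limit — parallels the proof of Theorem~\ref{thmErrorExp} step for step.
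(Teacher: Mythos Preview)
Your proposal contains two genuine gaps that would prevent the argument from closing.

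First, the direction of the inner optimization is wrong. Lemma~\ref{LemCorDec} does not yield an expression of the form $\min_{P_{\!X}}\max_{P_{\!Y|X}}D$ with $I\ge R+\epsilon$; it gives
\[
\min_{P_{\!X}}\;\min_{P_{\!Y|X}}\Big\{D\big(P_{\!Y|X}\,\|\,W_n\,|\,P_{\!X}\big)+\big|\,R-I(P_{\!X},P_{\!Y|X})\,\big|^{+}\Big\},
\]
i.e.\ a double \emph{minimum}, with the rate entering through a positive-part penalty rather than a hard constraint $I\ge R$. Because the inner optimization is a minimum, the Lagrangian step is $\sup_{-1<\rho\le 0}\inf_{p_{Y|X}}\{\cdots\}$, not your $\inf_{\rho}\sup_{p_{Y|X}}\{\cdots\}$. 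In the paper the chain is: $|R-I|^{+}\ge -\rho(R-I)\ge -\rho\big(R-D(p_{Y|X}\|\widehat p_{Y}^{(\rho)}|P_{\!X})\big)$ for any fixed $\rho\in(-1,0]$; then the identity~(\ref{eqC0C1}) rewrites the objective as $c_0(\rho)+c_1(\rho)\mathbb E[X^2]-\rho R+(1+\rho)D(p_{Y|X}\|p_{Y|X}^{(\rho)}|P_{\!X})$, after which $p_{Y|X}^{(\rho)}$ \emph{achieves} the infimum exactly (it is not merely ``plugged in'' to lower-bound a supremum). Your reversed min/max makes the subsequent ``lower-bound the inner supremum by plugging in $p_{Y|X}^{(\rho)}$'' step incoherent.

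Second, and more seriously, you miss the two-term structure of Lemma~\ref{LemCorDec}. That lemma does not give a single type-based bound; it gives
\[
-\tfrac1n\log_b\Pr\{g(\bY)=J\}\;\ge\;\min\big\{E_n(R,\widetilde\sigma,\widetilde\epsilon,\epsilon),\;E(\widetilde\sigma)\big\}+o(1),
\]
where $E(\widetilde\sigma)$ is the Chernoff exponent~(\ref{eqOutlierExp}) for the outlier event $\|\bY-\bx(J)\|^2>n\widetilde\sigma^2$, and the type-based term $E_n$ carries the constraint $\mathbb E[(Y-X)^2]\le\widetilde\sigma^2+\widetilde\epsilon$. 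The paper's key step is to \emph{choose} $\widetilde\sigma=\widetilde\sigma(R)\ge\sigma$ so that $E(\widetilde\sigma(R))=E_c(R)$; only then does it suffice to show $\liminf_n E_n\ge E_c(R)$. Your mention of the second branch of~(\ref{eqSigmazBound}) is relevant only at a later stage (verifying that $p_{Y|X}^{(\rho)}$ satisfies the second-moment constraint under the infimum when $2\widetilde\epsilon\ge\epsilon\sigma^2 s^{-2}$), not for handling the outlier term. Without the $\widetilde\sigma(R)$ device the minimum could be dominated by $E(\widetilde\sigma)$ and the argument does not reach $E_c(R)$.
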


\bigskip

\begin{proof}
Starting from Lemma~\ref{LemCorDec}, for each $R > 0$ we can choose a different parameter $\widetilde{\sigma} = \widetilde{\sigma}(R) \geq \sigma$,
such that there is equality $E(\widetilde{\sigma}(R)) = E_{c}(R)$ between (\ref{eqOutlierExp}) and (\ref{eqECDef}).
Then by (\ref{eqMinofTwoExponents2}) we obtain
\begin{align}
\liminf_{\substack{n\,\rightarrow\,\infty}}
\;
\inf_{{\cal C}}
\;
\inf_{g}
\;
\bigg\{\!
- \frac{1}{n}\,{\log\mathstrut}_{\!b}\Pr \big\{
g({\bf Y}) = J \big\}
\bigg\}
\;\; & \geq \;\;
\min\Big\{
\liminf_{\substack{n\,\rightarrow\,\infty}}
\,
E_{n}(R, \,\widetilde{\sigma}(R), \, \widetilde{\epsilon}, \, \epsilon), \;\;
E_{c}(R)
\Big\}.
\nonumber
\end{align}
With the choice $2\widetilde{\epsilon} = \epsilon\sigma^{2}s^{-2}$,
the first term
in the minimum
can be lower-bounded as follows:
\begin{align}
&
\liminf_{\substack{n\,\rightarrow\,\infty}}
\,
\min_{\substack{\\{P\mathstrut}_{\!X{\color{white}|}}\!\!:\\
{P\mathstrut}_{\!X}\,\in \, {\cal P}_{n}({\cal X}_{n}),
\\
\mathbb{E}[X^{2}] \; \leq \; s^{2}\, + \, \epsilon}}
\;\;
\min_{\substack{\\{P\mathstrut}_{\!Y|X}:\\
{P\mathstrut}_{\!XY}\,\in \, {\cal P}_{n}({\cal X}_{n}\,\times\, {\cal Y}_{n}),
\\
\mathbb{E}[(Y-X)^{2}] \; \leq \; \widetilde{\sigma}^{2}(R) \, + \, \widetilde{\epsilon}
}}
\!\!
\,
\Big\{
D\big({P\mathstrut}_{\!Y|X}\,\|\, {W\mathstrut}_{\!n} \,|\,  {P\mathstrut}_{\!X}\big)
\, + \,
\big|\,
R \, - \, I\big({P\mathstrut}_{\!X}, {P\mathstrut}_{\!Y|X}\big)
\,\big|^{+}
\Big\}
\nonumber \\
\overset{a}{\geq} \;
&
\liminf_{\substack{n\,\rightarrow\,\infty}}
\min_{\substack{\\{P\mathstrut}_{\!X{\color{white}|}}\!\!:\\
{P\mathstrut}_{\!X}\,\in \, {\cal P}_{n}({\cal X}_{n}),
\\
\mathbb{E}[X^{2}] \; \leq \; s^{2}\, + \, \epsilon}}
\;\;
\inf_{\substack{{p\mathstrut}_{Y|X}:\\
{p\mathstrut}_{Y|X}(\, \cdot \, \,|\, x) \, \in \, {\cal F}_{n}, \; \forall x,
\\
\mathbb{E}[(Y-X)^{2}] \; \leq \; \widetilde{\sigma}^{2}(R) \, + \, 2\widetilde{\epsilon}
}}
\!\!
\Big\{
D\big(\,{p\mathstrut}_{Y|X}\,\|\, \, w \, \,|\, {P\mathstrut}_{\!X}\big)
\, + \,
\big|\,
R \, - \, I\big({P\mathstrut}_{\!X}, \, {p\mathstrut}_{Y|X}\big)
\,\big|^{+}
\Big\}
\nonumber \\
\overset{b}{\geq} \;
&
\liminf_{\substack{n\,\rightarrow\,\infty}}
\min_{\substack{\\{P\mathstrut}_{\!X{\color{white}|}}\!\!:\\
{P\mathstrut}_{\!X}\,\in \, {\cal P}_{n}({\cal X}_{n}),
\\
\mathbb{E}[X^{2}] \; \leq \; s^{2}\, + \, \epsilon}}
\;\;
\inf_{\substack{{p\mathstrut}_{Y|X}:\\
{p\mathstrut}_{Y|X}(\, \cdot \, \,|\, x) \, \in \, {\cal F}_{n}, \; \forall x,
\\
\mathbb{E}[(Y-X)^{2}] \; \leq \; \widetilde{\sigma}^{2}(R) \, + \, 2\widetilde{\epsilon}
}}
\!\!
\Big\{
D\big(\,{p\mathstrut}_{Y|X}\,\|\, \, w \, \,|\, {P\mathstrut}_{\!X}\big)
\, - \,
\rho \Big[
R
\, - \,
D\big(\,{p\mathstrut}_{Y|X}\,\|\, \, {\widehat{p}\mathstrut}_{Y}^{\,(\rho)} \,|\, {P\mathstrut}_{\!X}\big)
\Big]
\Big\}
\nonumber
\end{align}
\begin{align}
\overset{c}{\equiv} \;
&
\liminf_{\substack{n\,\rightarrow\,\infty}}
\min_{\substack{\\{P\mathstrut}_{\!X{\color{white}|}}\!\!:\\
{P\mathstrut}_{\!X}\,\in \, {\cal P}_{n}({\cal X}_{n}),
\\
\mathbb{E}[X^{2}] \; \leq \; s^{2}\, + \, \epsilon}}
\;\;
\inf_{\substack{{p\mathstrut}_{Y|X}:\\
{p\mathstrut}_{Y|X}(\, \cdot \, \,|\, x) \, \in \, {\cal F}_{n}, \; \forall x,
\\
\mathbb{E}[(Y-X)^{2}] \; \leq \; \widetilde{\sigma}^{2}(R) \, + \, 2\widetilde{\epsilon}
}}
\!\!
\Big\{
c_{0}(\rho) \, + \, c_{1}(\rho)\,
\mathbb{E}\big[X^{2}\big]
\, - \, \rho R
\, + \,
(1 + \rho)
D\big(\,{p\mathstrut}_{Y|X}\,\|\, \, {p\mathstrut}_{Y|X}^{(\rho)} \,|\, {P\mathstrut}_{\!X}\big)
\Big\}
\nonumber \\
\overset{d}{=} \;
&
\liminf_{\substack{n\,\rightarrow\,\infty}}
\min_{\substack{\\{P\mathstrut}_{\!X{\color{white}|}}\!\!:\\
{P\mathstrut}_{\!X}\,\in \, {\cal P}_{n}({\cal X}_{n}),
\\
\mathbb{E}[X^{2}] \; \leq \; s^{2}\, + \, \epsilon}}
\;\;
\;\;\;\;\;\;\;\;\;\;\;\;\;\;\;\;\;\;\;\;\;\;\;\;\;\;\,\,\,\,
\!\!
\Big\{
c_{0}(\rho) \, + \, c_{1}(\rho)\,
\mathbb{E}\big[X^{2}\big]
\, - \, \rho R
\Big\}
\label{eqMin2} \\
\overset{e}{\geq} \;
&
\!\!
\;\;\;\;\;\;\;\;\;\;\;\;\;
\;\;\;\;\;\;\;\;\;\;\;\;\;\;\;\;\;
\;\;
\;\;\;\;\;\;\;\;\;\;\;\;\;\;\;\;\;\;\;\;\;\;\;\;\;\;\;\;\,
c_{0}(\rho) \, + \, c_{1}(\rho)
(s^{2} + \epsilon)
\, - \, \rho R,
\label{eqAnyRho}
\end{align}
where:

($a$) follows by Lemma~\ref{LemTtoPDF} with $c_{XY} = \widetilde{\sigma}^{2}(R) + \widetilde{\epsilon}$.

($b$) holds for $\rho \in (-1, 0]$, because
$\big|\,
R \, - \, I\big({P\mathstrut}_{\!X}, \, {p\mathstrut}_{Y|X}\big)
\,\big|^{+} \, \geq \, -\rho \big[\,
R \, - \, I\big({P\mathstrut}_{\!X}, \, {p\mathstrut}_{Y|X}\big)
\,\big]$ for any such $\rho$, and because
$I\big({P\mathstrut}_{\!X}, \, {p\mathstrut}_{Y|X}\big) \leq D\big(\,{p\mathstrut}_{Y|X}\,\|\, \, {\widehat{p}\mathstrut}_{Y}^{\,(\rho)} \,|\, {P\mathstrut}_{\!X}\big)$,
where ${\widehat{p}\mathstrut}_{Y}^{\,(\rho)}$ is the Gaussian PDF defined in (\ref{eqIdealYPDF}).

($c$) holds as an identity inside the infimum by the definitions (\ref{eqSolution}), (\ref{eqIdealYPDF}), (\ref{eqC0C1}),
and properties (\ref{eqSigmaYXDef}), (\ref{eqProperty}).

($d$) holds if $2\widetilde{\epsilon} \geq \epsilon\sigma^{2}s^{-2}$ and $\rho \in (-1, 0]$, because then
by (\ref{eqSigmazBound}) and (\ref{eqBoundedContinuous})
the function ${p\mathstrut}_{Y|X}^{(\rho)}$
satisfies the conditions under the infimum
and achieves
the infimum.

($e$) follows by the condition under the minimum of (\ref{eqMin2}) since $c_{1}(\rho) \leq 0$
for $\rho \in (-1, 0]$.

In conclusion,
since (\ref{eqAnyRho}) is the lower bound for any $\rho \in (-1, 0]$ and $2\widetilde{\epsilon} \geq \epsilon\sigma^{2}s^{-2}$,
we obtain
\begin{displaymath}
\liminf_{\substack{n\,\rightarrow\,\infty}}
\,
E_{n}\big(R, \,\widetilde{\sigma}(R), \, \epsilon\sigma^{2}s^{-2}/2, \, \epsilon\big)
\;\; \geq \;\;
\sup_{\!\!\!\!\! -1 \, < \, \rho\,\leq\,0}
\big\{
c_{0}(\rho) \, + \, c_{1}(\rho)
(s^{2} + \epsilon)
\, - \, \rho R
\big\}
\;\; \overset{\epsilon\, \rightarrow \,0}{\longrightarrow} \;\; E_{c}(R).
\end{displaymath}
\end{proof}

{\em Remark:} Observe that 
neither
the inequality ($f$) in the proof of Theorem~\ref{thmErrorExp}
nor the inequality ($b$) in the proof of Theorem~\ref{thmCorDecExp}
can be met with equality unless
$D\big(\,{p\mathstrut}_{Y}^{\,(\rho)}\,\|\, \, {\widehat{p}\mathstrut}_{Y}^{\,(\rho)} \big) \rightarrow 0$,
where ${p\mathstrut}_{Y}^{\,(\rho)}$ is the $y$-marginal PDF of
${P\mathstrut}_{\!X}{p\mathstrut}_{Y|X}^{(\rho)}$.
Accordingly, since ${\widehat{p}\mathstrut}_{Y}^{\,(\rho)}$ is Gaussian, while ${p\mathstrut}_{Y}^{\,(\rho)}$
is a convolution of ${P\mathstrut}_{\!X}$ with the Gaussian PDF ${p\mathstrut}_{Y|X}^{(\rho)}$,
the type ${P\mathstrut}_{\!X}$ must converge to the Gaussian
distribution
in CF\footnote{It follows because the expression for the characteristic function of the zero-mean Gaussian distribution also has a Gaussian form.} and CDF
in order to achieve the 
exponents of Theorems~\ref{thmErrorExp} and~\ref{thmCorDecExp}.
In both proofs the type ${P\mathstrut}_{\!X}$ represents the histograms of codewords,
i.e., the empirical distributions of their 
quantized versions.

\bigskip

\begin{prop}[Parametric representations of $E_{c}$ and $E_{e}\,$]
\label{prpCorErrExp}

{\em For every 
$R \, \geq \, I(\,{\widehat{p}\mathstrut}_{X}, \,w)$ there exists a unique $\rho \in (-1, 0]$,
such that}
\begin{align}
R \, & = \, I\big(\,{\widehat{p}\mathstrut}_{X}, \,{p\mathstrut}_{Y|X}^{(\rho)}\big),
\;\;\;\;\;\;
E_{c}(R) \, = \, D\big(\,{p\mathstrut}_{Y|X}^{(\rho)}\,\|\, \, w \, \,|\, \,{\widehat{p}\mathstrut}_{X}\big).
\label{eqParC}
\end{align}

{\em For every 
$R \, \leq \, I(\,{\widehat{p}\mathstrut}_{X}, \,w)$
there exists a unique $\rho \geq 0$, such that}
\begin{align}
R \, & = \, I\big(\,{\widehat{p}\mathstrut}_{X}, \,{p\mathstrut}_{Y|X}^{(\rho)}\big),
\;\;\;\;\;\;
E_{e}(R) \, = \, D\big(\,{p\mathstrut}_{Y|X}^{(\rho)}\,\|\, \, w \, \,|\, \,{\widehat{p}\mathstrut}_{X}\big).
\label{eqParE}
\end{align}
\end{prop}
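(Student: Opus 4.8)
The plan is to exhibit, for the parameter-family $\{{p\mathstrut}_{Y|X}^{(\rho)}\}$, the map $\rho \mapsto I\big(\,{\widehat{p}\mathstrut}_{X}, \,{p\mathstrut}_{Y|X}^{(\rho)}\big)$ in closed form and show it is a continuous, strictly monotone bijection from each $\rho$-interval onto the appropriate $R$-interval, after which (\ref{eqParC}) and (\ref{eqParE}) follow by reading off the value of the supremum in (\ref{eqECDef}) and (\ref{eqEDef}) at the stationary point. First I would use the identity (\ref{eqC0C1}), which gives
\begin{displaymath}
D\big(\,{p\mathstrut}_{Y|X}^{(\rho)}\,\|\, \, w \, \,|\, \,{\widehat{p}\mathstrut}_{X}\big)
\, + \,
\rho\, I\big(\,{\widehat{p}\mathstrut}_{X}, \,{p\mathstrut}_{Y|X}^{(\rho)}\big)
\;=\;
c_{0}(\rho) \, + \, c_{1}(\rho)s^{2},
\end{displaymath}
together with the fact (from the remark following Theorem~\ref{thmCorDecExp}, specialized to ${P\mathstrut}_{\!X} = {\widehat{p}\mathstrut}_{X}$, where ${p\mathstrut}_{Y}^{\,(\rho)} = {\widehat{p}\mathstrut}_{Y}^{\,(\rho)}$ by the first property of Lemma~\ref{LemSigmaZ}, so the extra divergence term vanishes) that $I\big(\,{\widehat{p}\mathstrut}_{X}, \,{p\mathstrut}_{Y|X}^{(\rho)}\big) = \tfrac12 \log_b\!\big(\sigma_Y^2(\rho)/\sigma_{Y|X}^2(\rho)\big)$. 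Hence both $I\big(\,{\widehat{p}\mathstrut}_{X}, \,{p\mathstrut}_{Y|X}^{(\rho)}\big)$ and $D\big(\,{p\mathstrut}_{Y|X}^{(\rho)}\,\|\, \, w \, \,|\, \,{\widehat{p}\mathstrut}_{X}\big) = c_0(\rho) + c_1(\rho)s^2 - \rho\, I\big(\,{\widehat{p}\mathstrut}_{X}, \,{p\mathstrut}_{Y|X}^{(\rho)}\big)$ are explicit functions of $\rho$ through $k_\rho$ and the variance formulas (\ref{eqSigmaYXDef})–(\ref{eqSigmaY}).

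The key analytic step is monotonicity: I would show that $\rho \mapsto I\big(\,{\widehat{p}\mathstrut}_{X}, \,{p\mathstrut}_{Y|X}^{(\rho)}\big)$ is strictly decreasing in $\rho$. Since $k_\rho$ is strictly decreasing in $\rho$ (already noted in the proof of Lemma~\ref{LemSigmaZ}) with $k_0 = 1$, and using (\ref{eqSigmaYX}) and (\ref{eqSigmaY}) one computes $\sigma_Y^2(\rho)/\sigma_{Y|X}^2(\rho) = \big(\sigma^2 + k_\rho s^2\big)/\big(\sigma^2 + k_\rho(1-k_\rho)s^2\big)$; differentiating with respect to $k_\rho$ and checking the sign (the numerator of the derivative reduces to a positive multiple of $\sigma^2 + k_\rho^2 s^2 > 0$, after cancellation) shows this ratio is strictly increasing in $k_\rho$, hence strictly decreasing in $\rho$. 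At $\rho = 0$ we have $k_0 = 1$, so $\sigma_{Y|X}^2(0) = \sigma^2$ and $\sigma_Y^2(0) = \sigma^2 + s^2$, giving $I\big(\,{\widehat{p}\mathstrut}_{X}, \,{p\mathstrut}_{Y|X}^{(0)}\big) = \tfrac12\log_b(1+\mathrm{SNR}) = I(\,{\widehat{p}\mathstrut}_{X}, \,w) = C$. As $\rho \to +\infty$, $k_\rho \to 0$ and the ratio $\to 1$, so $I \to 0$; as $\rho \to -1^{+}$, $k_\rho$ tends to its upper bound in (\ref{eqPropk2}) and the ratio $\to +\infty$ (the denominator $\sigma^2 + k_\rho(1-k_\rho)s^2$ stays bounded away from $0$ only if $k_\rho$ is bounded, which it is; I should check the limit directly, but in any case the ratio diverges because... — more carefully, one verifies $\lim_{\rho\to -1^+} I\big(\,{\widehat{p}\mathstrut}_{X}, \,{p\mathstrut}_{Y|X}^{(\rho)}\big) = +\infty$ using $\sigma_{Y|X}^2(\rho) = (1+\rho)k_\rho\sigma^2 \to 0$ from (\ref{eqSigmaYXDef})). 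Thus $\rho \mapsto I$ is a strictly decreasing bijection $[0,\infty) \to (0, C]$ and $(-1, 0] \to [C, +\infty)$, which establishes existence and uniqueness of $\rho$ for each $R$ in the respective ranges.

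Finally, to identify the exponent value: the objective $h(\rho) \triangleq D\big(\,{p\mathstrut}_{Y|X}^{(\rho)}\,\|\, \, w \, \,|\, \,{\widehat{p}\mathstrut}_{X}\big) + \rho\big[I\big(\,{\widehat{p}\mathstrut}_{X}, \,{p\mathstrut}_{Y|X}^{(\rho)}\big) - R\big] = c_0(\rho) + c_1(\rho)s^2 - \rho R$ whose derivative is $c_0'(\rho) + c_1'(\rho)s^2 - R$; I would show, using the identity, that $c_0'(\rho) + c_1'(\rho)s^2 = I\big(\,{\widehat{p}\mathstrut}_{X}, \,{p\mathstrut}_{Y|X}^{(\rho)}\big)$ (this is the envelope relation — differentiating $c_0(\rho)+c_1(\rho)s^2 = D(\cdots) + \rho I(\cdots)$ and using that $\rho$ is a free parameter under the supremum so the $\rho$-derivative of $D(\cdots) + \rho I(\cdots)$ at the optimum equals $I(\cdots)$). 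Hence $h'(\rho) = 0$ exactly when $R = I\big(\,{\widehat{p}\mathstrut}_{X}, \,{p\mathstrut}_{Y|X}^{(\rho)}\big)$, and at that $\rho$ the bracketed term in (\ref{eqECDef})/(\ref{eqEDef}) vanishes, leaving $E_c(R) = D\big(\,{p\mathstrut}_{Y|X}^{(\rho)}\,\|\, \, w \, \,|\, \,{\widehat{p}\mathstrut}_{X}\big)$ (resp. $E_e(R)$); concavity of $h$ in $\rho$ (equivalently convexity of $c_0 + c_1 s^2$, which can be checked from the explicit formulas, or inferred from the fact that the suprema (\ref{eqEDef})–(\ref{eqECDef}) define well-behaved exponent functions) guarantees this stationary point is the global maximum over the relevant $\rho$-interval, and the constraint $R \leq C$ (resp. $R \geq C$) ensures the maximizing $\rho$ lies in $[0,\infty)$ (resp. $(-1,0]$) rather than at the boundary. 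I expect the main obstacle to be the boundary-behavior check as $\rho \to -1^+$ and pinning down the endpoint $R = C$ case (where $\rho = 0$ lies on the common boundary of both regimes and both (\ref{eqParC}) and (\ref{eqParE}) must agree), plus verifying convexity of $c_0(\rho) + c_1(\rho)s^2$ cleanly rather than by appeal to the exponent's known shape.
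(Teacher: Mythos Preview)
Your approach is correct and takes a genuinely different route from the paper. The paper proves $E_{c}(R_{\beta}) = D\big({p\mathstrut}_{Y|X}^{(\beta)}\,\|\,w\,|\,{\widehat{p}\mathstrut}_{X}\big)$ (and similarly for $E_e$) via a \emph{sandwich argument}: it starts from an auxiliary infimum over bivariate Gaussians, lower-bounds it by $\sup_\rho\inf_{p_{Y|X}}\{\cdots\}$, uses the exact algebraic identity $D(p_{Y|X}\|w|\,{\widehat{p}\mathstrut}_{X}) + \rho D(p_{Y|X}\|{\widehat{p}\mathstrut}_{Y}^{(\rho)}|\,{\widehat{p}\mathstrut}_{X}) \equiv D_\rho + \rho R_\rho + (1+\rho)D(p_{Y|X}\|p_{Y|X}^{(\rho)}|\,{\widehat{p}\mathstrut}_{X})$ to evaluate the inner infimum, then closes the sandwich by plugging $\rho=\beta$. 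Your approach instead differentiates $h(\rho)=c_0(\rho)+c_1(\rho)s^2-\rho R$ directly. The paper's route avoids any calculus on the parametric family and gets the optimizer ``for free'' from the nonnegativity of a KL term; yours is arguably more elementary once the envelope relation is in hand. Both reduce uniqueness to the same monotonicity check $\tfrac{dR_\rho}{d\rho}<0$, which the paper states in one line at the end.

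Two remarks on your execution. First, your justification of the envelope relation $c_0'(\rho)+c_1'(\rho)s^2 = R_\rho$ is a bit loose: the sentence ``$\rho$ is a free parameter under the supremum so the $\rho$-derivative \ldots\ equals $I(\cdots)$'' conflates two different roles of $\rho$. A clean way to prove it is exactly via the paper's identity above: since $\theta\mapsto D_\theta+\rho D(p_{Y|X}^{(\theta)}\|{\widehat{p}\mathstrut}_{Y}^{(\rho)}|\,{\widehat{p}\mathstrut}_{X})$ is minimized at $\theta=\rho$, and $D(p_{Y|X}^{(\theta)}\|{\widehat{p}\mathstrut}_{Y}^{(\rho)}|\,{\widehat{p}\mathstrut}_{X})=R_\theta+D({\widehat{p}\mathstrut}_{Y}^{(\theta)}\|{\widehat{p}\mathstrut}_{Y}^{(\rho)})$ with the last term also minimized at $\theta=\rho$, you get $D_\rho'+\rho R_\rho'=0$, hence $\tfrac{d}{d\rho}(D_\rho+\rho R_\rho)=R_\rho$ as desired. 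Second, your worry about concavity of $h$ is unfounded: once you have $h'(\rho)=R_\rho-R$ and you have already shown $R_\rho$ is strictly decreasing, strict concavity of $h$ is immediate ($h''(\rho)=R_\rho'<0$), so the unique stationary point is automatically the global maximum on the whole line, and the sign of $h'(0)=C-R$ tells you which half-line it sits in. No separate endpoint or convexity check is needed.
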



\begin{proof}
Let us denote $R_{\beta} \triangleq I\big(\,{\widehat{p}\mathstrut}_{X}, \,{p\mathstrut}_{Y|X}^{(\beta)}\big)$. Then
for $\beta \in (-1, 0]$ 
we can write a sandwich proof:
\begin{align}
&
\inf_{\substack{{p\mathstrut}_{Y|X}:\\
{\widehat{p}\mathstrut}_{X}{p\mathstrut}_{Y|X} \, \in \; {\cal N}
}}
\!\!
\Big\{
D\big(\,{p\mathstrut}_{Y|X}\,\|\, \, w \, \,|\, \,{\widehat{p}\mathstrut}_{X}\big)
\, + \,
\big|\,
R_{\beta}  -  I\big(\,{\widehat{p}\mathstrut}_{X}, \, {p\mathstrut}_{Y|X}\big)
\,\big|^{+}
\Big\}
\label{eqUseInf} \\
\overset{a}{\geq} \;
\sup_{\!\!\!\!\! -1 \, < \, \rho\,\leq\,0}
&
\inf_{\substack{{p\mathstrut}_{Y|X}:\\
{\widehat{p}\mathstrut}_{X}{p\mathstrut}_{Y|X} \, \in \; {\cal N}
}}
\!\!
\Big\{
D\big(\,{p\mathstrut}_{Y|X}\,\|\, \, w \, \,|\, \,{\widehat{p}\mathstrut}_{X}\big)
\, - \,
\rho \big[
R_{\beta}
 -
D\big(\,{p\mathstrut}_{Y|X}\,\|\, \, {\widehat{p}\mathstrut}_{Y}^{\,(\rho)} \,|\, \,{\widehat{p}\mathstrut}_{X}\big)
\big]
\Big\}
\nonumber \\
\overset{b}{\equiv} \;
\sup_{\!\!\!\!\! -1 \, < \, \rho\,\leq\,0}
&
\inf_{\substack{{p\mathstrut}_{Y|X}:\\
{\widehat{p}\mathstrut}_{X}{p\mathstrut}_{Y|X} \, \in \; {\cal N}
}}
\!\!
\Big\{
D\big(\,{p\mathstrut}_{Y|X}^{(\rho)}\,\|\, \, w \, \,|\, \,{\widehat{p}\mathstrut}_{X}\big)
\, - \,
\rho
\big[
R_{\beta}
 -
R_{\rho}
\big]
\, + \,
(1 + \rho)
D\big(\,{p\mathstrut}_{Y|X}\,\|\, \, {p\mathstrut}_{Y|X}^{(\rho)} \,|\, \,{\widehat{p}\mathstrut}_{X}\big)
\Big\}
\nonumber \\
\overset{c}{=} \;
\sup_{\!\!\!\!\! -1 \, < \, \rho\,\leq\,0}
&
\;\;\;\;\;\;\;\;\;\;\;\;\,\,\,\,
\Big\{
D\big(\,{p\mathstrut}_{Y|X}^{(\rho)}\,\|\, \, w \, \,|\, \,{\widehat{p}\mathstrut}_{X}\big)
\, - \,
\rho
\big[
R_{\beta}
 -
R_{\rho}
\big]
\Big\}
\; \equiv \; E_{c}(R_{\beta})
\; \overset{d}{\geq} \;
D\big(\,{p\mathstrut}_{Y|X}^{(\beta)}\,\|\, \, w \, \,|\, \,{\widehat{p}\mathstrut}_{X}\big),
\label{eqSandwich}
\end{align}
where ${\cal N}$ denotes the set of all bivariate non-degenerate Gaussian PDF's.
Here ($a$) follows similarly to the inequality ($b$) in Theorem~\ref{thmCorDecExp};
($b$) is an identity;
($c$) follows because ${\widehat{p}\mathstrut}_{X}{p\mathstrut}_{Y|X}^{(\rho)}$ 
is 
Gaussian 
and ${p\mathstrut}_{Y|X}^{(\rho)}$ achieves the infimum;
($d$) is a lower bound on the supremum 
at $\rho = \beta$.
Finally, since 
the RHS of (\ref{eqSandwich}) is further lower-bounded by the infimum (\ref{eqUseInf}),
we conclude that $E_{c}(R_{\beta}) = D\big(\,{p\mathstrut}_{Y|X}^{(\beta)}\,\|\, \, w \, \,|\, \,{\widehat{p}\mathstrut}_{X}\big)$.

For $\beta \geq 0$, besides $R_{\beta}$
let us define
$R_{\beta}^{(\rho)} \triangleq D\big(\,{p\mathstrut}_{Y|X}^{(\rho)}\,\|\, \, {\widehat{p}\mathstrut}_{Y}^{\,(\beta)} \,|\, \,{\widehat{p}\mathstrut}_{X}\big)$.
Then 
\begin{align}
&
\inf_{\substack{{p\mathstrut}_{Y|X}:\\
{\widehat{p}\mathstrut}_{X}{p\mathstrut}_{Y|X} \, \in \; {\cal N}, \\
D(\,{p\mathstrut}_{Y|X}\,\|\, \, {\widehat{p}\mathstrut}_{Y}^{\,(\beta)} \,|\, \,{\widehat{p}\mathstrut}_{X})
\; \leq \; R_{\beta}
}}
\!\!\!\!\!\!\!\!\,\,\,
D\big(\,{p\mathstrut}_{Y|X}\,\|\, \, w \, \,|\, \,{\widehat{p}\mathstrut}_{X}\big)
\label{eqUseInf2} \\
\overset{a}{\geq} \;
&
\sup_{\rho \, \geq \, 0}
\,\,\,
\inf_{\substack{{p\mathstrut}_{Y|X}:\\
{\widehat{p}\mathstrut}_{X}{p\mathstrut}_{Y|X} \, \in \; {\cal N}
}}
\,\,\,\,
\Big\{
D\big(\,{p\mathstrut}_{Y|X}\,\|\, \, w \, \,|\, \,{\widehat{p}\mathstrut}_{X}\big)
\, + \,
\rho \big[
D\big(\,{p\mathstrut}_{Y|X}\,\|\, \, {\widehat{p}\mathstrut}_{Y}^{\,(\beta)} \,|\, \,{\widehat{p}\mathstrut}_{X}\big)
 -
R_{\beta}
\big]
\Big\}
\nonumber \\
\overset{b}{\equiv} \;
&
\sup_{\rho\,\geq\,0}
\,\,\,
\inf_{\substack{{p\mathstrut}_{Y|X}:\\
{\widehat{p}\mathstrut}_{X}{p\mathstrut}_{Y|X} \, \in \; {\cal N}
}}
\,\,\,\,
\Big\{
D\big(\,{p\mathstrut}_{Y|X}^{(\rho)}\,\|\, \, w \, \,|\, \,{\widehat{p}\mathstrut}_{X}\big)
\, + \,
\rho
\big[
R_{\beta}^{(\rho)}
 -
R_{\beta}
\big]
\, + \,
(1 + \rho)
D\big(\,{p\mathstrut}_{Y|X}\,\|\, \, {p\mathstrut}_{Y|X}^{(\rho)} \,|\, \,{\widehat{p}\mathstrut}_{X}\big)
\Big\}
\nonumber \\
\overset{c}{=} \;
&
\sup_{\rho\,\geq\,0}
\;\;\;\;\;\;\;\;\;\;\;\;\;\;\;\;\;\;\,\,
\Big\{
D\big(\,{p\mathstrut}_{Y|X}^{(\rho)}\,\|\, \, w \, \,|\, \,{\widehat{p}\mathstrut}_{X}\big)
\, + \,
\rho
\big[
R_{\beta}^{(\rho)}
 -
R_{\beta}
\big]
\Big\}
\nonumber \\
\overset{d}{\geq} \;
&
\sup_{\rho\,\geq\,0}
\;\;\;\;\;\;\;\;\;\;\;\;\;\;\;\;\;\;\,\,
\Big\{
D\big(\,{p\mathstrut}_{Y|X}^{(\rho)}\,\|\, \, w \, \,|\, \,{\widehat{p}\mathstrut}_{X}\big)
\, + \,
\rho
\big[
R_{\rho}
 -
R_{\beta}
\big]
\Big\}
\; \equiv \; E_{e}(R_{\beta})
\; \overset{e}{\geq} \;
D\big(\,{p\mathstrut}_{Y|X}^{(\beta)}\,\|\, \, w \, \,|\, \,{\widehat{p}\mathstrut}_{X}\big).
\label{eqSandwich2}
\end{align}
Here ($a$) follows due to the inequality under the first infimum;
($b$) is an identity;
($c$) follows because 
${\widehat{p}\mathstrut}_{X}{p\mathstrut}_{Y|X}^{(\rho)}$
is 
Gaussian 
and ${p\mathstrut}_{Y|X}^{(\rho)}$ achieves the infimum;
($d$) follows because $R_{\beta}^{(\rho)} \geq R_{\rho}^{(\rho)} \equiv R_{\rho}$;
($e$) is a lower bound on the supremum at $\rho = \beta$.
Since the RHS of (\ref{eqSandwich2}) is 
lower-bounded by the infimum (\ref{eqUseInf2}),
we obtain
$E_{e}(R_{\beta}) = D\big(\,{p\mathstrut}_{Y|X}^{(\beta)}\,\|\, \, w \, \,|\, \,{\widehat{p}\mathstrut}_{X}\big)$.
From $I\big(\,{\widehat{p}\mathstrut}_{X}, \,{p\mathstrut}_{Y|X}^{(\rho)}\big) =
\frac{1}{2}\,{\log\mathstrut}_{\!b}\big(\sigma_{Y}^{2}(\rho)/\sigma_{Y|X}^{2}(\rho)\big)$
using (\ref{eqSigmaY}) and (\ref{eqSigmaYX}) we obtain
$\frac{d R_{\rho}}{d\rho} =  \frac{dR_{\rho}}{d k_{\rho}}\cdot\frac{d k_{\rho}}{d\rho}< 0$.
Hence for every 
$R > 0$ the parameter $\rho(R) \in (-1, +\infty)$ is unique.
\end{proof}

\bigskip

The parametric representation (\ref{eqParC}) is equivalent to \cite[Eq.~22]{Oohama17}.

\section{Method of types}\label{MOT}


In this section we extend the method of types \cite{CoverThomas} to include the countable alphabets of reals
(\ref{eqAlphabets})
by using 
power constraints on the types. The results of this section are then used
in the rest of the paper.


\subsection*{\underline{Alphabet size}}



Consider all the types ${P\mathstrut}_{\!X} \in {\cal P}_{n}({\cal X}_{n})$
satisfying the power constraint $\mathbb{E}_{{P\mathstrut}_{\!X}}\!\big[X^{2}\big] \leq c_{X}$.
Let ${\cal X}_{n}(c_{X}) \subseteq {\cal X}_{n}$ denote the subset of the 
alphabet
used by these types. Every letter $x = i\Delta_{\alpha,\,n} \in {\cal X}_{n}(c_{X})$ must satisfy $|\, i\Delta_{\alpha,\,n}\,| \,\leq\, \sqrt{n c_{X}}$.
Then ${\cal X}_{n}(c_{X})$ is finite and we obtain


\begin{lemma}[Alphabet size] \label{LemAlphabetSize}
{\em $\;\;|\, {\cal X}_{n}(c_{X}) \,| \;\; \leq \;\; 2\sqrt{c_{X}} n^{1/2 \, + \, \alpha} + 1 \;\; \leq \;\; (2\sqrt{c_{X}} + 1) n^{1/2 \, + \, \alpha}$.}
\end{lemma}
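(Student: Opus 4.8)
The plan is to bound the number of lattice points of $\mathcal{X}_n = \{i\Delta_{\alpha,n} : i \in \mathbb{Z}\}$ that can appear in the support of any type $P_X \in \mathcal{P}_n(\mathcal{X}_n)$ satisfying $\mathbb{E}_{P_X}[X^2] \leq c_X$. First I would observe that if $x = i\Delta_{\alpha,n}$ lies in the support of such a type, then $P_X(x) \geq 1/n > 0$, and since every term in $\mathbb{E}_{P_X}[X^2] = \sum_{x'} P_X(x') (x')^2$ is nonnegative, we must have $P_X(x) x^2 \leq c_X$, hence in particular $x^2 \leq n\,c_X$ (using $P_X(x) \geq 1/n$). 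Therefore $|i \Delta_{\alpha,n}| \leq \sqrt{n c_X}$, i.e.\ $|i| \leq \sqrt{n c_X}\,/\,\Delta_{\alpha,n} = \sqrt{n c_X}\cdot n^{\alpha} = \sqrt{c_X}\, n^{1/2+\alpha}$, recalling $\Delta_{\alpha,n} = 1/n^{\alpha}$ from \eqref{eqDelta}.

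Next I would count the admissible integers $i$. They form a subset of $\{i \in \mathbb{Z} : |i| \leq \sqrt{c_X}\, n^{1/2+\alpha}\}$, and the number of integers in a symmetric interval $[-L, L]$ is at most $2L + 1$. Taking $L = \sqrt{c_X}\, n^{1/2+\alpha}$ gives $|\mathcal{X}_n(c_X)| \leq 2\sqrt{c_X}\, n^{1/2+\alpha} + 1$. For the second inequality of the lemma statement, I would simply note that $1 \leq n^{1/2+\alpha}$ for all $n \geq 1$ since $1/2 + \alpha > 0$, so $2\sqrt{c_X}\, n^{1/2+\alpha} + 1 \leq (2\sqrt{c_X} + 1)\, n^{1/2+\alpha}$. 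This completes the argument.

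There is no real obstacle here; the only point requiring a modicum of care is the justification that membership in the support forces $P_X(x) \geq 1/n$ — this is exactly the defining property of a type with denominator $n$ (empirical distribution of an $n$-vector), so the nonzero entries are integer multiples of $1/n$ and hence at least $1/n$. Everything else is elementary interval-counting and the monotonicity of the constraint in each coordinate. I would keep the write-up to two or three sentences, essentially unpacking the two displayed inequalities already asserted in the lemma.
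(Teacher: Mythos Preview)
Your proposal is correct and follows essentially the same approach as the paper: the paper's argument (given in the paragraph immediately preceding the lemma) is precisely that any letter $x = i\Delta_{\alpha,n}$ used by a power-constrained type must satisfy $|i\Delta_{\alpha,n}| \leq \sqrt{nc_X}$, from which the count follows. You have merely made explicit the step $P_X(x) \geq 1/n \Rightarrow x^2 \leq nc_X$ and the interval-counting that the paper leaves implicit.
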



\subsection*{\underline{Size of a type class}}


For 
${P\mathstrut}_{\!XY} \in {\cal P}_{n}({\cal X}_{n}\times {\cal Y}_{n})$ let us define
\begin{align}
{\cal S}({P\mathstrut}_{\!XY}) \;\; & \triangleq \;\;
\big\{
(x, y) \in {\cal X}_{n}\times {\cal Y}_{n} : \;\; {P\mathstrut}_{\!XY}(x, y) > 0
\big\},
\nonumber \\
{\cal S}({P\mathstrut}_{\!X}) \;\; & \triangleq \;\;
\big\{
x \in {\cal X}_{n}: \;\; {P\mathstrut}_{\!X}(x) > 0
\big\},
\;\;\;\;\;\;
{\cal S}({P\mathstrut}_{\!Y}) \;\; \triangleq \;\;
\big\{
y \in {\cal Y}_{n}: \;\; {P\mathstrut}_{\!Y}(y) > 0
\big\}.
\nonumber
\end{align}


\begin{lemma}[Support of a joint type] \label{LemSupportXY}
{\em 
Let ${P\mathstrut}_{\!XY} \in {\cal P}_{n}({\cal X}_{n}\times {\cal Y}_{n})$ be a joint type,
such that $\mathbb{E}_{{P\mathstrut}_{\!X}}\!\big[X^{2}\big] \leq c_{X}$
and $\mathbb{E}_{{P\mathstrut}_{\!Y}}\!\big[Y^{2}\big] \leq c_{Y}$.
Then}
\begin{align}
|\, {\cal S}({P\mathstrut}_{\!XY}) \,| \;\; & \leq \;\; \sqrt{2\pi (c_{X} + c_{Y} + 1/6)}\cdot n^{(1 \, + \, \alpha \, + \, \beta)/2}.
\nonumber
\end{align}
\end{lemma}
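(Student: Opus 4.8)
The plan is to bound $|\mathcal{S}(P_{XY})|$ by counting how many lattice points $(x,y) = (i\Delta_{\alpha,n}, j\Delta_{\beta,n})$ can lie in the support of a joint type whose $X$- and $Y$-marginals have bounded second moments. The key observation is geometric: if $P_{XY}$ has denominator $n$, then every point $(x,y)\in\mathcal{S}(P_{XY})$ carries probability mass at least $1/n$, so $\mathbb{E}_{P_X}[X^2]\le c_X$ forces $\frac{1}{n}\sum_{(x,y)\in\mathcal{S}(P_{XY})} x^2 \le c_X$, i.e.\ $\sum x^2 \le n c_X$ over the support, and similarly $\sum y^2 \le n c_Y$. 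Thus the support is contained in an axis-aligned lattice region where the sum of squared coordinates is controlled.

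First I would pass to normalized integer coordinates: write $x = i\Delta_{\alpha,n}$, $y = j\Delta_{\beta,n}$ with $i,j\in\mathbb{Z}$, so the constraints become $\Delta_{\alpha,n}^2\sum i^2 \le n c_X$ and $\Delta_{\beta,n}^2\sum j^2 \le n c_Y$ over the (distinct) pairs $(i,j)$ in the support. Adding these and using that each squared integer coordinate contributes, I get a bound of the form $\sum_{(i,j)} \big(\Delta_{\alpha,n}^2 i^2 + \Delta_{\beta,n}^2 j^2\big) \le n(c_X+c_Y)$. The standard way to convert a ``sum of values over a set of lattice points is bounded'' statement into a ``the set is small'' statement is to compare with the continuous volume: the number of lattice points $(i,j)$ with $\Delta_{\alpha,n}^2 i^2 + \Delta_{\beta,n}^2 j^2 \le T$ is, up to lower-order corrections, the area of the ellipse $\{\Delta_{\alpha,n}^2 u^2 + \Delta_{\beta,n}^2 v^2 \le T\}$, which equals $\pi T/(\Delta_{\alpha,n}\Delta_{\beta,n}) = \pi T \cdot n^{\alpha+\beta}$ using \eqref{eqCube} in the form $\Delta_{\alpha,n}\Delta_{\beta,n} = n^{\alpha+\beta}/n$. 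But I don't want merely the count inside the ellipse of radius $T = n(c_X+c_Y)$; I want the \emph{smallest} ellipse that can contain $N$ given distinct lattice points, because the support is exactly such an extremal configuration. So the cleaner route: among any $N$ distinct lattice points, the sum $\sum(\Delta_{\alpha,n}^2 i^2 + \Delta_{\beta,n}^2 j^2)$ is minimized by taking the $N$ points closest to the origin, and that minimum grows like (area$^{-1}$)$\cdot N^2$, i.e.\ like $\frac{N^2}{2\pi}\Delta_{\alpha,n}\Delta_{\beta,n} = \frac{N^2}{2\pi n}n^{\alpha+\beta}$. Setting this $\le n(c_X+c_Y)$ gives $N^2 \le 2\pi(c_X+c_Y)n^{2}\cdot n^{-(1-\alpha-\beta)}\cdot$\,(correction), hence $N \le \sqrt{2\pi(c_X+c_Y)}\,n^{(1+\alpha+\beta)/2}$ up to the $1/6$ term.

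The technical heart is making the extremal/packing estimate rigorous with the right constant, including the $1/6$. I would do this by a direct summation rather than an abstract isoperimetric appeal: enumerate the support points, associate to each lattice point $(i,j)$ the unit cell $[i-\tfrac12,i+\tfrac12)\times[j-\tfrac12,j+\tfrac12)$ (in $i,j$ coordinates), and bound $\Delta_{\alpha,n}^2 i^2 + \Delta_{\beta,n}^2 j^2$ below by the minimum of $\Delta_{\alpha,n}^2 u^2 + \Delta_{\beta,n}^2 v^2$ over that cell plus a correction; more simply, use that for an integer $m$, $m^2 \ge \int_{m-1/2}^{m+1/2} t^2\,dt - \tfrac{1}{12} = $ (the cell integral) $- \tfrac{1}{12}$, so summing over the $N$ support points and over the two coordinates, $\sum_{(i,j)}(\Delta_{\alpha,n}^2 i^2+\Delta_{\beta,n}^2 j^2) \ge \int\!\!\int_{\text{union of cells}}(\Delta_{\alpha,n}^2 u^2 + \Delta_{\beta,n}^2 v^2)\,du\,dv - N\cdot\tfrac{1}{12}(\Delta_{\alpha,n}^2+\Delta_{\beta,n}^2)$. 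Wait — I need a moment-of-inertia lower bound for a region of prescribed area, and the minimum is attained by a disk (ellipse): among measurable sets $S$ in the $(u,v)$-plane with $\mathrm{vol}(S) = A$, $\int_S(\Delta_{\alpha,n}^2 u^2 + \Delta_{\beta,n}^2 v^2)\,du\,dv$ is minimized by the ellipse $\{\Delta_{\alpha,n}^2 u^2 + \Delta_{\beta,n}^2 v^2 \le c\}$ with $c$ chosen so the area is $A$; the minimum value is $\frac{A^2}{2\pi}\Delta_{\alpha,n}\Delta_{\beta,n}$. Applying this with $A = N$ (total area of $N$ unit cells) and combining with the lower-order correction that absorbs into the $1/6$ via $\Delta_{\alpha,n}^2+\Delta_{\beta,n}^2 \le 2$ and bookkeeping, I arrive at $\frac{N^2}{2\pi}\Delta_{\alpha,n}\Delta_{\beta,n} \le n(c_X+c_Y) + \tfrac{N}{6}\Delta_{\alpha,n}\Delta_{\beta,n}\,n$ or similar, and solving the resulting quadratic in $N$ (or bounding crudely) yields $N \le \sqrt{2\pi(c_X+c_Y+1/6)}\cdot n^{(1+\alpha+\beta)/2}$.

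The main obstacle I anticipate is getting the constant exactly right — specifically tracking the $\tfrac{1}{6}$ correction through the comparison between the discrete sum of squared integers and the continuous second-moment integral, and verifying that the disk/ellipse isoperimetric inequality for the moment of inertia is applied with the correct normalization $\Delta_{\alpha,n}\Delta_{\beta,n} = n^{\alpha+\beta-1}$ from \eqref{eqCube}. The structural steps (every support point has mass $\ge 1/n$; marginal moment constraints; area–count comparison) are routine; the arithmetic to land precisely on $\sqrt{2\pi(c_X+c_Y+1/6)}$ is where care is needed, and I suspect the paper uses the clean identity $\sum_{m=-k}^{k}m^2 = \tfrac{1}{3}k^3 + \tfrac12 k^2 + \tfrac16 k$ (or the analogous ``$+\tfrac16$'' per dimension) to source that constant, rather than a general isoperimetric argument.
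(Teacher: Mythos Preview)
Your proposal is correct and matches the paper's proof essentially step for step: each support point carries mass $\ge 1/n$, dither each lattice point by its cell (the paper phrases this as adding an independent continuous-uniform random vector $\mathbf{D}$ on $[-\Delta_{\alpha,n}/2,\Delta_{\alpha,n}/2)\times[-\Delta_{\beta,n}/2,\Delta_{\beta,n}/2)$ to $\mathbf{U}\sim\mathrm{Unif}(\mathcal{S}(P_{XY}))$), then apply the second-moment isoperimetric inequality (the disk minimizes $\int\|\mathbf{t}\|^2$ over sets of given area) and solve. Your mid-proposal identification of the $1/6$ is the correct one --- it comes from the dither variance $(\Delta_{\alpha,n}^2+\Delta_{\beta,n}^2)/12\le 1/6$ together with $|\mathcal{S}(P_{XY})|\le n$, not from the closed-form summation $\sum_{m=-k}^k m^2$ you suspect at the end; the paper does use exactly the isoperimetric argument you sketched.
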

The proof is given in the Appendix A.


\begin{lemma}[Support of a type] \label{LemSupportX}
{\em 
Let ${P\mathstrut}_{\!X} \in {\cal P}_{n}({\cal X}_{n})$
and
${P\mathstrut}_{\!Y} \in {\cal P}_{n}({\cal Y}_{n})$
be types,
such that $\mathbb{E}_{{P\mathstrut}_{\!X}}\!\big[X^{2}\big] \leq c_{X}$
and $\mathbb{E}_{{P\mathstrut}_{\!Y}}\!\big[Y^{2}\big] \leq c_{Y}$.
Then}
\begin{align}
|\, {\cal S}({P\mathstrut}_{\!X}) \,| \;\; & \leq \;\; (12c_{X} + 1)^{1/3}\cdot n^{(1 \, + \, 2\alpha)/3},
\nonumber \\
|\, {\cal S}({P\mathstrut}_{\!Y}) \,| \;\; & \leq \;\; (12c_{Y} + 1)^{1/3}\cdot n^{(1 \, + \, 2\beta)/3}.
\nonumber
\end{align}
\end{lemma}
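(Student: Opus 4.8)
The plan is to bound the number of distinct letters in the support of a type ${P\mathstrut}_{\!X}$ subject to a second-moment constraint, by the same kind of "volume/counting under a quadratic budget" argument that underlies Lemma~\ref{LemSupportXY}. The key observation is that if ${P\mathstrut}_{\!X} \in {\cal P}_{n}({\cal X}_{n})$ has denominator $n$ and $\mathbb{E}_{{P\mathstrut}_{\!X}}[X^{2}] \leq c_{X}$, then for each $x = i\Delta_{\alpha,n} \in {\cal S}({P\mathstrut}_{\!X})$ the probability ${P\mathstrut}_{\!X}(x)$ is a positive multiple of $1/n$, hence ${P\mathstrut}_{\!X}(x) \geq 1/n$. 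Consequently $\sum_{x \in {\cal S}({P\mathstrut}_{\!X})} x^{2} \cdot \tfrac{1}{n} \leq \sum_{x} x^{2}{P\mathstrut}_{\!X}(x) = \mathbb{E}_{{P\mathstrut}_{\!X}}[X^{2}] \leq c_{X}$, so the squared lattice points actually occupied sum to at most $n c_{X}$.

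First I would translate this into a purely combinatorial extremal problem: over all finite subsets $S \subseteq {\cal X}_{n} = \Delta_{\alpha,n}\mathbb{Z}$, maximize $|S|$ subject to $\sum_{x \in S} x^{2} \leq n c_{X}$. Writing $x = i\Delta_{\alpha,n}$ with $i \in \mathbb{Z}$ and $\Delta_{\alpha,n}^{2} = n^{-2\alpha}$, the constraint becomes $\sum_{i \in I} i^{2} \leq n^{1 + 2\alpha} c_{X}$ for the index set $I \subseteq \mathbb{Z}$. The size $|I|$ is maximized by taking the indices symmetric about $0$ with the smallest absolute values, i.e. $I = \{-m, \ldots, m\}$ (possibly dropping one endpoint), and for this choice $\sum_{|i| \leq m} i^{2} = \tfrac{m(m+1)(2m+1)}{3} \geq \tfrac{(2m+1)^{3} - (2m+1)}{12} = \tfrac{|I|^{3} - |I|}{12}$, using $|I| = 2m+1$. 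Hence $|I|^{3} - |I| \leq 12 n^{1+2\alpha} c_{X}$, which gives $|I|^{3} \leq (12 c_{X} + 1) n^{1+2\alpha}$ once one absorbs the lower-order term (for instance via $|I| \leq |I|^{3}/n^{2\alpha}$ for $|I| \geq 1$, or simply noting $|I| \leq 12 n^{1+2\alpha} c_X$ trivially so $|I| \le n^{1+2\alpha}$), and therefore $|{\cal S}({P\mathstrut}_{\!X})| = |I| \leq (12 c_{X} + 1)^{1/3} n^{(1 + 2\alpha)/3}$. The bound for ${P\mathstrut}_{\!Y}$ is identical with $\alpha$ replaced by $\beta$ and $c_{X}$ by $c_{Y}$, since ${\cal Y}_{n} = \Delta_{\beta,n}\mathbb{Z}$ and $\Delta_{\beta,n}^{2} = n^{-2\beta}$.

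The main obstacle is purely bookkeeping: making the passage from the cubic sum-of-squares bound $\tfrac{m(m+1)(2m+1)}{3}$ to the clean closed form $\tfrac{|I|^{3} - |I|}{12}$ tight enough that the additive $+1$ inside $(12c_{X} + 1)^{1/3}$ genuinely covers the lower-order discrepancy for \emph{all} $n$ and all $c_{X} > 0$, rather than only asymptotically; this requires being slightly careful with the parity of $|I|$ (even vs.\ odd index sets) and with the edge case of very small $c_{X}$ where $|I|$ could be $0$ or $1$. Apart from that, the argument is elementary. I would present it for ${P\mathstrut}_{\!X}$ in detail and remark that the ${P\mathstrut}_{\!Y}$ case follows verbatim.
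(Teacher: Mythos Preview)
Your argument is correct and rests on the same rearrangement idea as the paper, but implemented differently. The paper also starts from $\sum_{x\in{\cal S}({P\mathstrut}_{\!X})} x^{2}\le nc_{X}$, but instead of working discretely it adds an independent continuous dither $D\sim\text{Unif}[-\Delta_{\alpha,n}/2,\Delta_{\alpha,n}/2)$ to the discrete uniform $U$ on ${\cal S}({P\mathstrut}_{\!X})$, obtaining a continuous uniform variable on a set $A$ of length $|{\cal S}({P\mathstrut}_{\!X})|\,\Delta_{\alpha,n}$, and then uses the continuous rearrangement $\int_{A}t^{2}\,dt\ge\int_{B}t^{2}\,dt$ with $B$ the centered interval of the same length. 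This yields $\tfrac{(|{\cal S}({P\mathstrut}_{\!X})|\,\Delta_{\alpha,n})^{2}}{12}\le \tfrac{nc_{X}}{|{\cal S}({P\mathstrut}_{\!X})|}+\tfrac{\Delta_{\alpha,n}^{2}}{12}$, hence $|{\cal S}({P\mathstrut}_{\!X})|^{3}\le 12c_{X}n^{1+2\alpha}+|{\cal S}({P\mathstrut}_{\!X})|$, and the lower-order term is absorbed via $|{\cal S}({P\mathstrut}_{\!X})|\le n\le n^{1+2\alpha}$ --- exactly your second suggestion. Your first suggestion, $|I|\le |I|^{3}/n^{2\alpha}$, would need $|I|\ge n^{\alpha}$, which is not guaranteed, so drop it and keep the second. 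The paper's dithering route has the virtue of being the literal one-dimensional specialization of its proof of Lemma~\ref{LemSupportXY}, so no parity case-check is needed; your discrete route is more elementary but does require verifying the even-$|I|$ case you flagged (for $|I|=2m$ the minimal sum is $\tfrac{m(4m^{2}+2)}{6}\ge\tfrac{|I|^{3}-|I|}{12}$, so the inequality survives).
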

The proof for ${P\mathstrut}_{\!X}$ is given in the Appendix A.
For ${P\mathstrut}_{\!Y}$, the parameters $c_{Y}$, $\beta$ replace, respectively, $c_{X}$, $\alpha$.

\bigskip

\begin{lemma}[Size of a type class] \label{LemTypeSize}
{\em Let ${P\mathstrut}_{\!XY} \in {\cal P}_{n}({\cal X}_{n}\times {\cal Y}_{n})$ be a joint type,
such that $\mathbb{E}_{{P\mathstrut}_{\!X}}\!\big[X^{2}\big] \leq c_{X}$
and $\mathbb{E}_{{P\mathstrut}_{\!Y}}\!\big[Y^{2}\big] \leq c_{Y}$.
Then}
\begin{align}
H({P\mathstrut}_{\!XY})
\; - \;
c_{1}\,
\frac{{\log\mathstrut}_{\!b}\,(n + 1)}{n^{\gamma/2}}
\;\; & \leq \;\;
\frac{1}{n}\,{\log\mathstrut}_{\!b}\,|\, T({P\mathstrut}_{\!XY}) \,|
\;\; \leq \;\;
H({P\mathstrut}_{\!XY}),
\label{eqJointExp} \\
H({P\mathstrut}_{\!X})
\; - \;
c_{2}\,
\frac{{\log\mathstrut}_{\!b}\,(n + 1)}{n^{2(1\,-\,\alpha)/3}}
\;\; & \leq \;\;
\frac{1}{n}\,{\log\mathstrut}_{\!b}\,|\, T({P\mathstrut}_{\!X}) \,|
\;\;
\;\;
\leq \;\;
H({P\mathstrut}_{\!X}),
\label{eqMargExpX} \\
H({P\mathstrut}_{\!Y})
\; - \;
c_{3}\,
\frac{{\log\mathstrut}_{\!b}\,(n + 1)}{n^{2(1\,-\,\beta)/3}}
\;\; & \leq \;\;
\frac{1}{n}\,{\log\mathstrut}_{\!b}\,|\, T({P\mathstrut}_{\!Y}) \,|
\;\;
\;\,\,
\leq \;\;
H({P\mathstrut}_{\!Y}),
\label{eqMargExpY}
\end{align}
{\em where $c_{1} \triangleq \sqrt{2\pi (c_{X} + c_{Y} + 1/6)}$,
$\;c_{2} \triangleq (12c_{X} + 1)^{1/3}$, and
$c_{3} \triangleq (12c_{Y} + 1)^{1/3}$}.
\end{lemma}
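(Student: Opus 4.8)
The plan is to establish the usual two-sided method-of-types estimate of $|T(P)|$ --- but with the ambient alphabet replaced by the \emph{finite} support of the type --- and then simply to substitute the support-size bounds from Lemmas~\ref{LemSupportXY} and~\ref{LemSupportX} and simplify the resulting powers of $n$ using $\alpha+\beta+\gamma=1$.

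For the upper bounds in (\ref{eqJointExp})--(\ref{eqMargExpY}) I would use that every sequence in $T(P)$ has the same probability $b^{-nH(P)}$ under the i.i.d.\ measure $P^{n}$, so that $1 \geq P^{n}(T(P)) = |T(P)|\, b^{-nH(P)}$, i.e.\ $|T(P)| \leq b^{nH(P)}$. This holds verbatim for $P = {P\mathstrut}_{\!XY}$, $P = {P\mathstrut}_{\!X}$, and $P = {P\mathstrut}_{\!Y}$, and needs only that a type with denominator $n$ has finite support, which is automatic.

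For the lower bounds, fix a type $P$ with finite support ${\cal S}(P)$. Restricting attention to types $Q$ with ${\cal S}(Q)\subseteq{\cal S}(P)$ --- the only ones whose type class carries positive $P^{n}$-mass --- there are at most $(n+1)^{|{\cal S}(P)|}$ of them (each letter of ${\cal S}(P)$ gets an integer count in $\{0,\dots,n\}$), their classes partition the $P^{n}$-probability-one event $\{{\bf x}:\, x_{k}\in{\cal S}(P)\ \forall k\}$, and $T(P)$ is the most likely among them by the standard inequality $P^{n}(T(Q))\leq P^{n}(T(P))$. Hence $1 \leq (n+1)^{|{\cal S}(P)|}\, P^{n}(T(P)) = (n+1)^{|{\cal S}(P)|}\,|T(P)|\, b^{-nH(P)}$, which rearranges to $\tfrac1n\,{\log\mathstrut}_{\!b}|T(P)| \geq H(P) - \tfrac{|{\cal S}(P)|}{n}\,{\log\mathstrut}_{\!b}(n+1)$. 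It then remains to insert the support bounds: for ${P\mathstrut}_{\!XY}$, Lemma~\ref{LemSupportXY} gives $|{\cal S}({P\mathstrut}_{\!XY})|/n \leq c_{1}\, n^{(1+\alpha+\beta)/2 - 1} = c_{1}\, n^{-(1-\alpha-\beta)/2} = c_{1}\, n^{-\gamma/2}$, using $\alpha+\beta+\gamma=1$, which yields (\ref{eqJointExp}); for ${P\mathstrut}_{\!X}$, Lemma~\ref{LemSupportX} gives $|{\cal S}({P\mathstrut}_{\!X})|/n \leq c_{2}\, n^{(1+2\alpha)/3 - 1} = c_{2}\, n^{-2(1-\alpha)/3}$, which yields (\ref{eqMargExpX}), and symmetrically (with $\beta$ and $c_{3}$ replacing $\alpha$ and $c_{2}$) one gets (\ref{eqMargExpY}).

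The only mild obstacle is bookkeeping: one must restrict the count of types to those supported inside ${\cal S}(P)$ so that "the number of types" stays finite despite ${\cal X}_{n},{\cal Y}_{n}$ being infinite, and then check that the exponents collapse exactly to $n^{-\gamma/2}$, $n^{-2(1-\alpha)/3}$, and $n^{-2(1-\beta)/3}$ via $\alpha+\beta+\gamma=1$. Everything else is the classical argument of \cite[Ch.~11.1]{CoverThomas}.
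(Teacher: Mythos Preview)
Your proposal is correct and is essentially the paper's own argument: the paper invokes the standard type-size bounds \cite[Eq.~11.16]{CoverThomas} in the form $(n+1)^{-|{\cal S}(P)|}\,b^{nH(P)}\leq |T(P)|\leq b^{nH(P)}$ and then substitutes the support-size bounds of Lemmas~\ref{LemSupportXY} and~\ref{LemSupportX}, exactly as you do. Your only addition is to spell out why the ambient alphabet may be replaced by the finite support (so that the $(n+1)^{|{\cal S}(P)|}$ count is finite), which the paper leaves implicit.
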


\bigskip

\begin{proof}
Observe that the standard type-size bounds (see, e.g.,\cite[Eq.~11.16]{CoverThomas}) can be rewritten as
\begin{equation} \label{eqMOTSup}
\frac{1}{(n + 1)^{|\, {\cal S}({P\mathstrut}_{\!XY}) \,|}}\,b^{\,n H({P\mathstrut}_{\!XY})}
\;\; \leq \;\;
|\, T({P\mathstrut}_{\!XY}) \,|
\;\; \leq \;\;
b^{\,n H({P\mathstrut}_{\!XY})}.
\end{equation}
Here
$|\, {\cal S}({P\mathstrut}_{\!XY}) \,|$ 
can be replaced
with its upper bound of Lemma~\ref{LemSupportXY}.
This gives (\ref{eqJointExp}).
The remaining bounds of (\ref{eqMargExpX}) and (\ref{eqMargExpY})
are obtained similarly using Lemma~\ref{LemSupportX}.
\end{proof}

\bigskip

Since it holds for any ${\bf y} \in T({P\mathstrut}_{\!Y})$ that
$|\, T({P\mathstrut}_{\!X|\,Y}\,|\, {\bf y})\,| \, = \, |\, T({P\mathstrut}_{\!XY}) \,| \, / \, |\, T({P\mathstrut}_{\!Y}) \,|$,
and similarly for ${\bf x} \in T({P\mathstrut}_{\!X})$,
as a corollary of the previous lemma we also obtain

\bigskip

\begin{lemma}[Size of a conditional type class] \label{LemCondTypeSize}
{\em Let ${P\mathstrut}_{\!XY} \in {\cal P}_{n}({\cal X}_{n}\times {\cal Y}_{n})$ be a joint type,
such that $\mathbb{E}_{{P\mathstrut}_{\!X}}\!\big[X^{2}\big] \leq c_{X}$
and $\mathbb{E}_{{P\mathstrut}_{\!Y}}\!\big[Y^{2}\big] \leq c_{Y}$.
Then for ${\bf y} \in T({P\mathstrut}_{\!Y})$ and ${\bf x} \in T({P\mathstrut}_{\!X})$ respectively}
\begin{align}
H(X\,|\,Y)
\; - \;
c_{1}\,
\frac{{\log\mathstrut}_{\!b}\,(n + 1)}{n^{\gamma/2}}
\;\; & \leq \;\;
\frac{1}{n}\,{\log\mathstrut}_{\!b}\,|\, T({P\mathstrut}_{\!X|\,Y}\,|\, {\bf y}) \,|
\;\; \leq \;\;
H(X\,|\,Y)
\; + \;
c_{3}\,
\frac{{\log\mathstrut}_{\!b}\,(n + 1)}{n^{2(1\,-\,\beta)/3}},
\label{eqXgivenY} \\
H(Y\,|\,X)
\; - \;
c_{1}\,
\frac{{\log\mathstrut}_{\!b}\,(n + 1)}{n^{\gamma/2}}
\;\; & \leq \;\;
\frac{1}{n}\,{\log\mathstrut}_{\!b}\,|\, T({P\mathstrut}_{\!Y|X}\,|\, {\bf x}) \,|
\,
\;\; \leq \;\;
H(Y\,|\,X)
\; + \;
c_{2}\,
\frac{{\log\mathstrut}_{\!b}\,(n + 1)}{n^{2(1\,-\,\alpha)/3}},
\label{eqYgivenX}
\end{align}
{\em where $c_{1}$, $c_{2}$, and $c_{3}$ are defined as in Lemma~\ref{LemTypeSize}.}
\end{lemma}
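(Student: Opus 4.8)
The plan is to derive the bounds directly from Lemma~\ref{LemTypeSize} via the elementary counting identity for conditional type classes. First I would recall that for any ${\bf y} \in T({P\mathstrut}_{\!Y})$ we have the exact combinatorial identity
\[
|\, T({P\mathstrut}_{\!X|\,Y}\,|\, {\bf y})\,| \;=\; \frac{|\, T({P\mathstrut}_{\!XY}) \,|}{|\, T({P\mathstrut}_{\!Y}) \,|},
\]
which holds because the joint type class is partitioned into conditional type classes, one for each ${\bf y} \in T({P\mathstrut}_{\!Y})$, and all of these have the same cardinality by a symmetry/permutation argument (this is already invoked in the text immediately preceding the lemma). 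Taking $\tfrac{1}{n}\,{\log\mathstrut}_{\!b}$ of both sides gives
\[
\frac{1}{n}\,{\log\mathstrut}_{\!b}\,|\, T({P\mathstrut}_{\!X|\,Y}\,|\, {\bf y}) \,| \;=\; \frac{1}{n}\,{\log\mathstrut}_{\!b}\,|\, T({P\mathstrut}_{\!XY}) \,| \;-\; \frac{1}{n}\,{\log\mathstrut}_{\!b}\,|\, T({P\mathstrut}_{\!Y}) \,|.
\]

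Next I would substitute the two-sided bounds from Lemma~\ref{LemTypeSize}. For the upper bound on $\tfrac1n{\log\mathstrut}_{\!b}|T({P\mathstrut}_{\!X|Y}|{\bf y})|$, use the upper bound $H({P\mathstrut}_{\!XY})$ for the joint term and the lower bound $H({P\mathstrut}_{\!Y}) - c_{3}\,{\log\mathstrut}_{\!b}(n+1)/n^{2(1-\beta)/3}$ for the subtracted marginal term; since $H({P\mathstrut}_{\!XY}) - H({P\mathstrut}_{\!Y}) = H(X\,|\,Y)$ by the chain rule for entropy, this yields exactly the claimed upper bound $H(X\,|\,Y) + c_{3}\,{\log\mathstrut}_{\!b}(n+1)/n^{2(1-\beta)/3}$. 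For the lower bound, symmetrically use $H({P\mathstrut}_{\!XY}) - c_{1}\,{\log\mathstrut}_{\!b}(n+1)/n^{\gamma/2}$ for the joint term and $H({P\mathstrut}_{\!Y})$ for the marginal term, again invoking the chain rule to obtain $H(X\,|\,Y) - c_{1}\,{\log\mathstrut}_{\!b}(n+1)/n^{\gamma/2}$. This establishes \eqref{eqXgivenY}. The bound \eqref{eqYgivenX} follows in exactly the same way with the roles of $X$ and $Y$ interchanged: use ${\bf x} \in T({P\mathstrut}_{\!X})$, the identity $|T({P\mathstrut}_{\!Y|X}|{\bf x})| = |T({P\mathstrut}_{\!XY})|/|T({P\mathstrut}_{\!X})|$, the chain rule $H({P\mathstrut}_{\!XY}) - H({P\mathstrut}_{\!X}) = H(Y\,|\,X)$, and the marginal bound \eqref{eqMargExpX} for ${P\mathstrut}_{\!X}$.

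There is no real obstacle here — the lemma is a routine corollary — but the one point that warrants a line of care is that the combinatorial identity $|T({P\mathstrut}_{\!X|Y}|{\bf y})| = |T({P\mathstrut}_{\!XY})|/|T({P\mathstrut}_{\!Y})|$ is independence of the choice of representative ${\bf y} \in T({P\mathstrut}_{\!Y})$; this is standard (it follows by acting with the symmetric group $S_n$ on coordinates, which permutes conditional type classes transitively), and the text has already stated it, so I would simply cite that sentence. The hypotheses $\mathbb{E}_{{P\mathstrut}_{\!X}}[X^2]\le c_X$ and $\mathbb{E}_{{P\mathstrut}_{\!Y}}[Y^2]\le c_Y$ are needed only to invoke Lemma~\ref{LemTypeSize}, which already carries them, so nothing new must be checked. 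Thus the proof reduces to the chain rule for entropy plus bookkeeping of the error terms.
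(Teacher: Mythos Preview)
Your proposal is correct and follows exactly the approach the paper takes: the paper states the identity $|T({P\mathstrut}_{\!X|Y}|{\bf y})| = |T({P\mathstrut}_{\!XY})|/|T({P\mathstrut}_{\!Y})|$ (and its $X$-$Y$ swap) immediately before the lemma and then declares the result a corollary of Lemma~\ref{LemTypeSize}. Your write-up simply spells out the bookkeeping (which joint/marginal bound to use for each direction, plus the chain rule $H({P\mathstrut}_{\!XY})-H({P\mathstrut}_{\!Y})=H(X\,|\,Y)$), which the paper leaves implicit.
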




\subsection*{\underline{Number of types}}


Let ${\cal P}_{n}\big({\cal X}_{n}, \,c_{X}\big)$
be the set of all the types ${P\mathstrut}_{\!X} \in {\cal P}_{n}({\cal X}_{n})$
satisfying the power constraint $\mathbb{E}_{{P\mathstrut}_{\!X}}\!\big[X^{2}\big] \leq c_{X}$.
Then its cardinality can be upper-bounded as follows:
\begin{equation} \label{eqNumberofTypes}
\big|\,  {\cal P}_{n}\big({\cal X}_{n}, \,c_{X}\big) \,\big|
\;\; \overset{(a)}{\leq} \;\;
\big|\,  {\cal P}_{n}\big({\cal X}_{n}(c_{X})\big) \,\big|
\;\; \overset{(b)}{\leq} \;\;
(n + 1)^{|\, {\cal X}_{n}(c_{X})\,|}
\;\; \overset{(c)}{\leq} \;\;
(n + 1)^{(2\sqrt{c_{X}} \, + \, 1) n^{1/2 \, + \, \alpha}},
\end{equation}
where ($a$) follows by the definition of ${\cal X}_{n}(c_{X})$
preceding Lemma~\ref{LemAlphabetSize}, ($b$) follows by \cite[Eq.~11.6]{CoverThomas},
and ($c$) follows by Lemma~\ref{LemAlphabetSize}.
This bound is sub-exponential in $n$ for $\alpha < 1/2$.
This can be also further improved and made sub-exponential in $n$ for all $\alpha \in (0, 1)$
using Lemma~\ref{LemSupportX}, as follows.

\bigskip

\begin{lemma}[Number of types] \label{LemNumofTypes}
\begin{equation} \label{eqImprovement}
\big|\,  {\cal P}_{n}\big({\cal X}_{n}, \,c_{X}\big) \,\big| \;\; \leq \;\;
\big((n + 1)c\big)^{\tilde{c}\, n^{(1\,+\, 2\alpha)/3}},
\end{equation}
{\em where $c \,\triangleq\, (2\sqrt{c_{X}} + 1)^{1/(3/2 \, + \, \alpha)}$
and $\tilde{c} \,\triangleq\, (3/2 + \alpha)(12 c_{X} + 1)^{1/3}$.}
\end{lemma}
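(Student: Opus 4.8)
The goal is to show that the number of types over $\cX_n$ satisfying the power constraint $c_X$ is at most $\big((n+1)c\big)^{\tilde c\, n^{(1+2\alpha)/3}}$. The idea is to count such types by first choosing their support and then counting the types supported on a fixed small set. The key structural fact, already available from Lemma~\ref{LemSupportX}, is that every type ${P\mathstrut}_{\!X} \in {\cal P}_{n}\big({\cal X}_{n}, \,c_{X}\big)$ has support of size at most $L_n \triangleq (12c_X+1)^{1/3} n^{(1+2\alpha)/3}$. Since all such supports live inside the finite set ${\cal X}_n(c_X)$ of cardinality at most $(2\sqrt{c_X}+1)\,n^{1/2+\alpha}$ (Lemma~\ref{LemAlphabetSize}), the plan is to bound the count by (number of admissible supports) $\times$ (max number of $n$-types on a set of size $\leq L_n$).

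**Key steps.** First I would write
\begin{displaymath}
\big|\,  {\cal P}_{n}\big({\cal X}_{n}, \,c_{X}\big) \,\big|
\;\; \leq \;\;
\sum_{S \,\subseteq\, {\cal X}_n(c_X),\; |S|\,\leq\, L_n}
\big|\{{P\mathstrut}_{\!X} \in {\cal P}_n({\cal X}_n): {\cal S}({P\mathstrut}_{\!X}) \subseteq S\}\big|,
\end{displaymath}
then bound the inner term by $(n+1)^{L_n}$ (the standard bound $|{\cal P}_n(S)| \leq (n+1)^{|S|}$ from \cite[Eq.~11.6]{CoverThomas}, monotone in $|S|$). Next I would bound the number of admissible supports: the number of subsets of ${\cal X}_n(c_X)$ of size at most $L_n$ is at most $\binom{|{\cal X}_n(c_X)|}{\leq L_n}$, which I would crudely bound by $|{\cal X}_n(c_X)|^{L_n} \leq \big((2\sqrt{c_X}+1)\,n^{1/2+\alpha}\big)^{L_n}$. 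Multiplying the two bounds,
\begin{displaymath}
\big|\,  {\cal P}_{n}\big({\cal X}_{n}, \,c_{X}\big) \,\big|
\;\; \leq \;\;
\big((2\sqrt{c_X}+1)\,n^{1/2+\alpha}\big)^{L_n}\,(n+1)^{L_n}
\;\; = \;\;
\big((2\sqrt{c_X}+1)\,n^{1/2+\alpha}\,(n+1)\big)^{L_n}.
\end{displaymath}
Finally I would repackage the base: $n^{1/2+\alpha} \leq (n+1)^{1/2+\alpha}$, so the base is at most $(2\sqrt{c_X}+1)\,(n+1)^{3/2+\alpha} = \big((2\sqrt{c_X}+1)^{1/(3/2+\alpha)}(n+1)\big)^{3/2+\alpha} = \big(c\,(n+1)\big)^{3/2+\alpha}$ with $c = (2\sqrt{c_X}+1)^{1/(3/2+\alpha)}$, hence the whole bound is $\big(c(n+1)\big)^{(3/2+\alpha)L_n} = \big((n+1)c\big)^{\tilde c\, n^{(1+2\alpha)/3}}$ with $\tilde c = (3/2+\alpha)(12c_X+1)^{1/3}$, exactly as claimed.

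**Main obstacle.** None of the steps is deep; the only point requiring a little care is the bound on the number of subsets of size at most $L_n$ — one must check that $\binom{N}{\leq L} \leq N^{L}$ (or a comparable bound), which holds when $L \leq N$, and verify that $L_n \leq |{\cal X}_n(c_X)|$ for all $n$, i.e. $(12c_X+1)^{1/3} n^{(1+2\alpha)/3} \leq (2\sqrt{c_X}+1)n^{1/2+\alpha}$; this is immediate for large $n$ since $(1+2\alpha)/3 < 1/2+\alpha$ for every $\alpha \in (0,1)$, and the finitely many small-$n$ cases can be absorbed. The rest is bookkeeping with exponents, in particular the rewriting $n^{1/2+\alpha}(n+1) \leq (n+1)^{3/2+\alpha}$ and the factoring of the constant into the $(3/2+\alpha)$-th power.
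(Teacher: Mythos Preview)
Your approach is essentially identical to the paper's: bound the number of types by (number of possible supports) $\times$ (number of $n$-types on a set of the maximal support size), invoking Lemma~\ref{LemSupportX} for the support bound and Lemma~\ref{LemAlphabetSize} for the ambient alphabet size, then repackage the exponents. One small correction: the inequality $\binom{N}{\leq L} \leq N^{L}$ that you plan to use is \emph{not} true in general, even when $L \leq N$ (take $L=1$: the left side is $N+1$). The paper avoids this by summing over subsets of size \emph{exactly} $\ell$ (any support of size $\leq \ell$ sits inside some size-$\ell$ set), for which $\binom{k}{\ell} \leq k^{\ell}$ is immediate; alternatively, the clean bound $\binom{N}{\leq L} \leq (N+1)^{L}$ (map each subset to an ordered $L$-tuple over $[N]\cup\{\star\}$) works and is harmless here since the base already carries an $n+1$.
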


\bigskip

\begin{proof}
Denoting $k \, \triangleq \, |\, {\cal X}_{n}(c_{X})\,|$ and
$\ell \, \triangleq \, \max_{\,{P\mathstrut}_{\!X} \, \in \;{\cal P}_{n}({\cal X}_{n}, \; c_{X})} \,|\, {\cal S}({P\mathstrut}_{\!X}) \,|$,
we can upper-bound as follows
\begin{equation} \label{eqKchooseL}
\big|\,  {\cal P}_{n}\big({\cal X}_{n}, \,c_{X}\big) \,\big| \;\; \leq \;\;
\tbinom{k}{\ell} (n + 1)^{\ell} \;\; \leq \;\;
k^{\ell}(n + 1)^{\ell}.
\end{equation}
Substituting 
for
$k$ and $\ell$ their upper bounds of Lemma~\ref{LemAlphabetSize} (with $n + 1$) and Lemma~\ref{LemSupportX},
we obtain (\ref{eqImprovement}).
\end{proof}

\bigskip

Similarly, let ${\cal P}_{n}\big({\cal X}_{n}\times {\cal Y}_{n}, \,c_{X}, \,c_{Y}\big)$
denote the set of all the joint types ${P\mathstrut}_{\!XY} \in {\cal P}_{n}({\cal X}_{n}\times {\cal Y}_{n})$,
such that $\mathbb{E}_{{P\mathstrut}_{\!X}}\!\big[X^{2}\big] \leq c_{X}$
and $\mathbb{E}_{{P\mathstrut}_{\!Y}}\!\big[Y^{2}\big] \leq c_{Y}$. Then its cardinality can be bounded as follows.

\bigskip

\begin{lemma}[Number of joint types] \label{LemNumofJTypes}
\begin{equation} \label{eqNJRypes}
\big|\,  {\cal P}_{n}\big({\cal X}_{n}\times {\cal Y}_{n}, \,c_{X}, \,c_{Y}\big) \,\big| \;\; \leq \;\;
\big((n + 1)c\big)^{\tilde{c}\, n^{(1\,+\, \alpha \, + \, \beta)/2}},
\end{equation}
{\em where $\,c \,\triangleq\, \big[(2\sqrt{c_{X}} + 1)(2\sqrt{c_{Y}} + 1)\big]^{1/(2 \, + \, \alpha \, + \, \beta)}$
and $\,\tilde{c} \,\triangleq\, (2 + \alpha + \beta)\sqrt{2\pi (c_{X} + c_{Y} + 1/6)}$.}
\end{lemma}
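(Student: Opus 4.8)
The plan is to mirror the proof of Lemma~\ref{LemNumofTypes} almost verbatim, replacing the marginal combinatorial ingredients with their joint analogues. First I would observe that a joint type ${P\mathstrut}_{\!XY} \in {\cal P}_{n}\big({\cal X}_{n}\times {\cal Y}_{n}, \,c_{X}, \,c_{Y}\big)$ is supported on a subset of ${\cal X}_{n}(c_{X}) \times {\cal Y}_{n}(c_{Y})$, since $\mathbb{E}_{{P\mathstrut}_{\!X}}\!\big[X^{2}\big] \leq c_{X}$ forces the $x$-support into ${\cal X}_{n}(c_{X})$ and $\mathbb{E}_{{P\mathstrut}_{\!Y}}\!\big[Y^{2}\big] \leq c_{Y}$ forces the $y$-support into ${\cal Y}_{n}(c_{Y})$. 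Hence, letting $K \triangleq |\,{\cal X}_{n}(c_{X})\,|\cdot|\,{\cal Y}_{n}(c_{Y})\,|$ be the cardinality of the ambient product alphabet and $L \triangleq \max_{\,{P\mathstrut}_{\!XY}} |\,{\cal S}({P\mathstrut}_{\!XY})\,|$ the largest possible support size (maximum over all admissible joint types), the same counting argument as in (\ref{eqKchooseL}) gives
\begin{equation*}
\big|\,  {\cal P}_{n}\big({\cal X}_{n}\times {\cal Y}_{n}, \,c_{X}, \,c_{Y}\big) \,\big| \;\; \leq \;\;
\tbinom{K}{L} (n + 1)^{L} \;\; \leq \;\;
K^{L}(n + 1)^{L},
\end{equation*}
because any joint type is determined by choosing which of the $K$ cells are in its support (at most $\tbinom{K}{L}$ ways, crudely $K^L$) and then assigning to each a value in $\{0,1,\dots,n\}/n$ (at most $(n+1)^L$ ways).

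Next I would substitute the available bounds. By Lemma~\ref{LemAlphabetSize} applied to each factor (with $n+1$ in place of $n$ so that the ``$+1$'' gets absorbed into the leading term, exactly as is done in Lemma~\ref{LemNumofTypes}), $|\,{\cal X}_{n}(c_{X})\,| \leq (2\sqrt{c_X}+1)(n+1)^{1/2+\alpha}$ and $|\,{\cal Y}_{n}(c_{Y})\,| \leq (2\sqrt{c_Y}+1)(n+1)^{1/2+\beta}$, so
\begin{equation*}
K \;\; \leq \;\; (2\sqrt{c_X}+1)(2\sqrt{c_Y}+1)\,(n+1)^{1 + \alpha + \beta}.
\end{equation*}
By Lemma~\ref{LemSupportXY}, $L \leq \sqrt{2\pi(c_X + c_Y + 1/6)}\,n^{(1+\alpha+\beta)/2}$. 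Plugging these into $K^L(n+1)^L$ I would write $K^L(n+1)^L = \big(K(n+1)\big)^L \leq \big((2\sqrt{c_X}+1)(2\sqrt{c_Y}+1)(n+1)^{2+\alpha+\beta}\big)^{L}$, and then factor the base as $\big((n+1)c\big)^{(2+\alpha+\beta)L}$ with $c = \big[(2\sqrt{c_X}+1)(2\sqrt{c_Y}+1)\big]^{1/(2+\alpha+\beta)}$, which is precisely the normalization built into the statement. Combining with $(2+\alpha+\beta)L \leq (2+\alpha+\beta)\sqrt{2\pi(c_X+c_Y+1/6)}\,n^{(1+\alpha+\beta)/2} = \tilde{c}\,n^{(1+\alpha+\beta)/2}$ yields (\ref{eqNJRypes}).

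I do not expect any genuine obstacle here; the lemma is a routine transcription of Lemma~\ref{LemNumofTypes} to the product-alphabet setting, and every needed ingredient (ambient alphabet size, support size of a joint type, the elementary $\tbinom{K}{L}(n+1)^L$ count) is already in hand. The only point requiring a modicum of care is the bookkeeping with $n$ versus $n+1$ when applying Lemma~\ref{LemAlphabetSize} — one uses the bound at argument $n+1$ so that the stray additive constant is swallowed by the power of $n+1$, matching the form of the constants $c$ and $\tilde c$ as stated — together with checking that the exponents $1/2+\alpha$ and $1/2+\beta$ in the two factor-alphabet bounds add to give the claimed $1+\alpha+\beta$ power on $K$, and that halving $1+\alpha+\beta$ in Lemma~\ref{LemSupportXY} is consistent with the exponent $(1+\alpha+\beta)/2$ in the conclusion. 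Once those are confirmed, the chain of inequalities closes.
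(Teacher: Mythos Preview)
Your proposal is correct and follows exactly the same approach as the paper's own proof: define $k = |\,{\cal X}_{n}(c_{X})\,|\cdot|\,{\cal Y}_{n}(c_{Y})\,|$ and $\ell = \max_{{P\mathstrut}_{\!XY}}|\,{\cal S}({P\mathstrut}_{\!XY})\,|$, apply the counting bound $\tbinom{k}{\ell}(n+1)^{\ell} \leq k^{\ell}(n+1)^{\ell}$, and substitute the bounds from Lemma~\ref{LemAlphabetSize} (with $n+1$) and Lemma~\ref{LemSupportXY}. Your bookkeeping of the exponents and the normalization of the constants $c$, $\tilde c$ is exactly what is needed to close the chain.
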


\bigskip

\begin{proof}
Denoting $k \, \triangleq \, |\, {\cal X}_{n}(c_{X})\,|\cdot |\, {\cal Y}_{n}(c_{Y})\,|$ and
$\ell \, \triangleq \,
\max_{\,{P\mathstrut}_{\!XY} \, \in \;{\cal P}_{n}({\cal X}_{n}\,\times\, {\cal Y}_{n}, \; c_{X}, \; c_{Y})}
\,|\, {\cal S}({P\mathstrut}_{\!XY}) \,|$,
we repeat the steps of (\ref{eqKchooseL}) and use the bounds of Lemma~\ref{LemAlphabetSize} and Lemma~\ref{LemSupportXY} to
obtain (\ref{eqNJRypes}).
\end{proof}


\section{Converse lemma}\label{ConvLemma}


In this section 
we prove a converse Lemma~\ref{LemConvLem}, which is then used both for the error exponent in Section~\ref{ErrExp}
and for the correct-decoding exponent in Section~\ref{CorDecExp}.

In order to determine exponents in channel probabilities, 
it is convenient to 
take hold of
the exponent in the channel probability {\em density}.
Let ${\bf x} = (x_{1}, x_{2}, .\,.\,. \,, x_{n})\in \mathbb{R}{\mathstrut}^{n}$
be a vector of $n$ channel inputs
and let ${\bf x}^{q} = (x_{1}^{q}, x_{2}^{q}, .\,.\,. \,, x_{n}^{q}) \in {\cal X}_{n}^{n}$
be its quantized version, with components
\begin{equation} \label{eqQuantizer}
x_{k}^{q} \; = \; Q_{\alpha}(x_{k}) \; \triangleq \; \Delta_{\alpha,\,n}\cdot \lfloor x_{k}/\Delta_{\alpha,\,n} + 1/2\rfloor,
\;\;\;\;\;\; k = 1, .\,.\,.\,, n.
\end{equation}
Similarly,
let ${\bf y} = (y_{1}, y_{2}, .\,.\,. \,, y_{n}) \in \mathbb{R}{\mathstrut}^{n}$
be a vector of $n$ channel outputs
and let ${\bf y}^{q} = (y_{1}^{q}, y_{2}^{q}, .\,.\,. \,, y_{n}^{q}) \in {\cal Y}_{n}^{n}$
be its quantized version, with $y_{k}^{q} = Q_{\beta}(y_{k})$ for all $k = 1, .\,.\,.\,, n$.
Then we have the following

\bigskip

\begin{lemma}[PDF exponent] \label{LemPDFExponent}
{\em Let ${\bf x}\in \mathbb{R}{\mathstrut}^{n}$ and ${\bf y}\in \mathbb{R}{\mathstrut}^{n}$
be two channel input and output vectors,
with their respective quantized versions $({\bf x}^{q}, {\bf y}^{q}) \in T({P\mathstrut}_{\!XY})$,
such that $\mathbb{E}_{{P\mathstrut}_{\!XY}}\!\big[(Y-X)^{2}\big]\leq c_{XY}$. Then
}
\begin{align}
&
-\frac{1}{n}\,{\log\mathstrut}_{\!b}\, w({\bf y}^{q} \, | \, {\bf x}^{q})
\, + \, \frac{
(\Delta_{\alpha,\,n} + \Delta_{\beta,\,n})\sqrt{c_{XY}} +
(\Delta_{\alpha,\,n} + \Delta_{\beta,\,n})^{2}/4
}{2\sigma^{2}\ln b}
\;\; \geq \;\;
-\frac{1}{n}\,{\log\mathstrut}_{\!b}\, w({\bf y} \, | \, {\bf x})
\nonumber \\
\geq \;\;
&
-\frac{1}{n}\,{\log\mathstrut}_{\!b}\, w({\bf y}^{q} \, | \, {\bf x}^{q})
\, - \, \frac{(\Delta_{\alpha,\,n} + \Delta_{\beta,\,n})\sqrt{c_{XY}}}{2\sigma^{2}\ln b}.
\nonumber
\end{align}
\end{lemma}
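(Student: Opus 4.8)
The statement is purely about the Gaussian density $w(\mathbf{y}\mid\mathbf{x}) = (2\pi\sigma^2)^{-n/2}\exp\{-\|\mathbf{y}-\mathbf{x}\|^2/(2\sigma^2)\}$, so taking $-\tfrac1n\log_b w$ just extracts the quadratic form: $-\tfrac1n\log_b w(\mathbf{y}\mid\mathbf{x}) = \tfrac{1}{2}\log_b(2\pi\sigma^2) + \tfrac{1}{2n\sigma^2\ln b}\|\mathbf{y}-\mathbf{x}\|^2$, and similarly with $(\mathbf{y}^q,\mathbf{x}^q)$ in place of $(\mathbf{y},\mathbf{x})$. Hence the whole lemma reduces to comparing $\tfrac1n\|\mathbf{y}-\mathbf{x}\|^2$ with $\tfrac1n\|\mathbf{y}^q-\mathbf{x}^q\|^2$, i.e. bounding $\bigl|\,\tfrac1n\|\mathbf{y}-\mathbf{x}\|^2 - \tfrac1n\|\mathbf{y}^q-\mathbf{x}^q\|^2\,\bigr|$ and pushing the bound through the constant $\tfrac{1}{2\sigma^2\ln b}$.

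**Key steps.** First I would write $\mathbf{y}-\mathbf{x} = (\mathbf{y}^q-\mathbf{x}^q) + \mathbf{e}$, where $\mathbf{e} = (\mathbf{y}-\mathbf{y}^q) - (\mathbf{x}-\mathbf{x}^q)$ is the quantization-error difference. By the definition of $Q_\alpha, Q_\beta$ in (\ref{eqQuantizer}) each coordinate satisfies $|x_k - x_k^q| \le \Delta_{\alpha,n}/2$ and $|y_k - y_k^q| \le \Delta_{\beta,n}/2$, so $|e_k| \le (\Delta_{\alpha,n}+\Delta_{\beta,n})/2$ and $\|\mathbf{e}\| \le \sqrt{n}\,(\Delta_{\alpha,n}+\Delta_{\beta,n})/2$. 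Then
\[
\|\mathbf{y}-\mathbf{x}\|^2 = \|\mathbf{y}^q-\mathbf{x}^q\|^2 + 2\langle \mathbf{y}^q-\mathbf{x}^q, \mathbf{e}\rangle + \|\mathbf{e}\|^2,
\]
and by Cauchy--Schwarz $|\langle \mathbf{y}^q-\mathbf{x}^q,\mathbf{e}\rangle| \le \|\mathbf{y}^q-\mathbf{x}^q\|\cdot\|\mathbf{e}\|$. The hypothesis $(\mathbf{x}^q,\mathbf{y}^q)\in T(P_{XY})$ with $\mathbb{E}_{P_{XY}}[(Y-X)^2]\le c_{XY}$ means precisely $\tfrac1n\|\mathbf{y}^q-\mathbf{x}^q\|^2 \le c_{XY}$, i.e. $\|\mathbf{y}^q-\mathbf{x}^q\| \le \sqrt{n c_{XY}}$. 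Combining,
\[
\Bigl|\tfrac1n\|\mathbf{y}-\mathbf{x}\|^2 - \tfrac1n\|\mathbf{y}^q-\mathbf{x}^q\|^2\Bigr| \le (\Delta_{\alpha,n}+\Delta_{\beta,n})\sqrt{c_{XY}} + (\Delta_{\alpha,n}+\Delta_{\beta,n})^2/4,
\]
and for the lower direction I'd note that $2\langle\mathbf{y}^q-\mathbf{x}^q,\mathbf{e}\rangle + \|\mathbf{e}\|^2 \ge -2\|\mathbf{y}^q-\mathbf{x}^q\|\,\|\mathbf{e}\|$, which gives the tighter one-sided bound $-(\Delta_{\alpha,n}+\Delta_{\beta,n})\sqrt{c_{XY}}$ without the squared term — exactly matching the asymmetric constants in the two inequalities of the statement. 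Finally divide by $2\sigma^2\ln b$ and add/subtract to $-\tfrac1n\log_b w(\mathbf{y}^q\mid\mathbf{x}^q)$; the $\tfrac12\log_b(2\pi\sigma^2)$ terms cancel since the same normalizing constant appears for both $(\mathbf{y},\mathbf{x})$ and $(\mathbf{y}^q,\mathbf{x}^q)$.

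**Main obstacle.** There is essentially no deep obstacle here — the lemma is a careful bookkeeping exercise. The only point requiring slight care is getting the \emph{asymmetry} of the two bounds right: the upper bound on $-\tfrac1n\log_b w(\mathbf{y}\mid\mathbf{x})$ comes from the \emph{lower} bound on $\|\mathbf{y}-\mathbf{x}\|^2 - \|\mathbf{y}^q-\mathbf{x}^q\|^2$, where we may drop $\|\mathbf{e}\|^2 \ge 0$, so only the linear cross-term survives; the lower bound on $-\tfrac1n\log_b w(\mathbf{y}\mid\mathbf{x})$ comes from the \emph{upper} bound, where $\|\mathbf{e}\|^2$ must be kept, producing the extra $(\Delta_{\alpha,n}+\Delta_{\beta,n})^2/4$. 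One should also confirm that the per-coordinate quantization bounds $|x_k-x_k^q|\le\Delta_{\alpha,n}/2$ are exactly what (\ref{eqQuantizer}) delivers (rounding to nearest lattice point), which is immediate from the floor-plus-one-half formula.
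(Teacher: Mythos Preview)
Your proposal is correct and follows essentially the same route as the paper: expand $\|\mathbf{y}-\mathbf{x}\|^2$ around $\|\mathbf{y}^q-\mathbf{x}^q\|^2$, bound the cross term via Cauchy--Schwarz (the paper does this coordinate-wise via Jensen for $\sqrt{t}$, which is the same thing), bound $\|\mathbf{e}\|^2$ trivially, and use the type hypothesis $\tfrac1n\|\mathbf{y}^q-\mathbf{x}^q\|^2\le c_{XY}$. One small slip in your ``Main obstacle'' paragraph: the direction is reversed --- since $-\tfrac1n\log_b w$ is \emph{increasing} in $\|\mathbf{y}-\mathbf{x}\|^2$, the upper bound on $-\tfrac1n\log_b w(\mathbf{y}\mid\mathbf{x})$ (the one carrying the extra $(\Delta_{\alpha,n}+\Delta_{\beta,n})^2/4$) comes from the \emph{upper} bound on $\|\mathbf{y}-\mathbf{x}\|^2-\|\mathbf{y}^q-\mathbf{x}^q\|^2$, where $\|\mathbf{e}\|^2$ must be kept; your ``Key steps'' paragraph already has this right.
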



\begin{proof}
The exponent can be equivalently rewritten as
\begin{equation} \label{eqRealW}
-\frac{1}{n}\,{\log\mathstrut}_{\!b}\, w({\bf y} \, | \, {\bf x})
\;\; \equiv \;\;
{\log\mathstrut}_{\!b}\,(\sigma\sqrt{2\pi}) \; + \;
\frac{1}{2\sigma^{2}\ln b}\cdot
\frac{1}{n}\sum_{k \, = \, 1}^{n}(y_{k} - x_{k})^{2}.
\end{equation}
Defining $\delta_{k} \, \triangleq \, (y_{k} - x_{k}) - (y_{k}^{q} - x_{k}^{q})$,
we observe that
\begin{equation} \label{eqRealAverage}
\frac{1}{n}\sum_{k \, = \, 1}^{n}(y_{k} - x_{k})^{2} \;\; = \;\;
\frac{1}{n}\sum_{k \, = \, 1}^{n}(y_{k}^{q} - x_{k}^{q})^{2}
\, + \,
\frac{2}{n}\sum_{k \, = \, 1}^{n}(y_{k}^{q} - x_{k}^{q})\delta_{k}
\, + \,
\frac{1}{n}\sum_{k \, = \, 1}^{n}\delta_{k}^{2}.
\end{equation}
The second term on the RHS is bounded as:
\begin{align}
\bigg|\,
\frac{2}{n}\sum_{k \, = \, 1}^{n}(y_{k}^{q} - x_{k}^{q})\delta_{k}
\,\bigg|
\;\; & \leq \;\;
\frac{2}{n}\sum_{k \, = \, 1}^{n}|\,y_{k}^{q} - x_{k}^{q}\,|\cdot|\,\delta_{k}\,|
\;\; \overset{a}{\leq} \;\;
(\Delta_{\alpha,\,n} + \Delta_{\beta,\,n})\cdot \frac{1}{n}\sum_{k \, = \, 1}^{n}|\,y_{k}^{q} - x_{k}^{q}\,|
\nonumber \\
& \overset{b}{\leq} \;\;
(\Delta_{\alpha,\,n} + \Delta_{\beta,\,n})\bigg[
\frac{1}{n}\sum_{k \, = \, 1}^{n}(y_{k}^{q} - x_{k}^{q})^{2}
\bigg]^{1/2}
\; \overset{c}{\leq} \;\; (\Delta_{\alpha,\,n} + \Delta_{\beta,\,n})\sqrt{c_{XY}},
\label{eqRealBounded}
\end{align}
where ($a$) follows because $|\, \delta_{k} \,| \,\leq\, (\Delta_{\alpha,\,n} + \Delta_{\beta,\,n})/2$,
($b$) follows by Jensen's inequality for the concave ($\cap$) function $f(t) = \sqrt{t}$,
and ($c$) follows by the condition of the lemma.
The third term is bounded as
\begin{equation} \label{eqRealBounded2}
\frac{1}{n}\sum_{k \, = \, 1}^{n}\delta_{k}^{2} \;\; \leq \;\; (\Delta_{\alpha,\,n} + \Delta_{\beta,\,n})^{2}/4.
\end{equation}
Since the 
exponent with the quantized versions
$-\frac{1}{n}\,{\log\mathstrut}_{\!b}\, w({\bf y}^{q} \, | \, {\bf x}^{q})$, in turn,
can also be rewritten similarly to (\ref{eqRealW}), the result of the lemma follows by
(\ref{eqRealW})-(\ref{eqRealBounded2}).
\end{proof}

\bigskip

The following lemma will be used both for the upper bound on the error exponent and for the lower bound on the correct-decoding exponent.

\bigskip

\begin{lemma}[Conditional probability of correct decoding] 
\label{LemConvLem}
{\em Let ${P\mathstrut}_{\!XY} \in {\cal P}_{n}({\cal X}_{n}\times {\cal Y}_{n})$ be a joint type,
such that $\mathbb{E}_{{P\mathstrut}_{\!X}}\!\big[X^{2}\big] \leq c_{X}$, $\mathbb{E}_{{P\mathstrut}_{\!Y}}\!\big[Y^{2}\big] \leq c_{Y}$,
and $\mathbb{E}_{{P\mathstrut}_{\!XY}}\!\big[(Y-X)^{2}\big]\leq c_{XY}$, and
let ${\cal C}$ be a codebook, 
such that the quantized versions (\ref{eqQuantizer}) of 
its codewords ${\bf x}(m)$, $m = 1, 2, .\,.\,.\, , \, M(n, R)$,
are all of the 
type ${P\mathstrut}_{\!X}$, that is:
\begin{displaymath} 
{\bf x}^{q}(m) \; = \; Q_{\alpha}({\bf x}(m))
\; = \;
\big(Q_{\alpha}(x_{1}(m)), Q_{\alpha}(x_{2}(m)), .\,.\,.\, , \,Q_{\alpha}(x_{n}(m))\big)
\; \in \;
T({P\mathstrut}_{\!X}), \;\;\; \forall m.
\end{displaymath}
Let $J \sim \text{Unif}\,\big(\{1, 2, .\,.\,.\, , \, M\}\big)$ be a random variable,
independent of the channel, and let ${\bf x}(J) \rightarrow {\bf Y}$ be the random channel-input and channel-output vectors, respectively.
Let ${\bf Y}^{q} = Q_{\beta}({\bf Y})\in {\cal Y}_{n}^{n}$. Then}
\begin{displaymath}
\Pr \Big\{
g({\bf Y}) = J \; \big| \;
\big({\bf x}^{q}(J), \,{\bf Y}^{q}\big) \, \in \, T({P\mathstrut}_{\!XY})
\Big\}
\;\; \leq \;\;
b^{\,-n\big(\widetilde{R} \, - \, I({P\mathstrut}_{\!XY})  
\, + \, o(1)\big)},
\end{displaymath}
{\em where $\widetilde{R} = \frac{1}{n}\,{\log\mathstrut}_{\!b}\,M(n, R)$,
and $o(1)\rightarrow 0$, as $n\rightarrow \infty$, 
depending only on 
$\alpha$, $\beta$, $c_{X}$, $c_{Y}$, $c_{XY}$, and $\sigma^{2}$.}
\end{lemma}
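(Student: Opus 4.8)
The plan is to bound the conditional probability of correct decoding by counting: given that $({\bf x}^q(J),{\bf Y}^q)\in T(P_{XY})$, the output ${\bf Y}$ is effectively described by the conditional type class $T(P_{Y|X}\,|\,{\bf x}^q(J))$, whose size is governed by $H(Y|X)$ via Lemma~\ref{LemCondTypeSize}. Since there are $M=M(n,R)$ equiprobable codewords, all of the same input type $P_X$, and since the decision regions $\mathcal{D}_m$ are disjoint, the output vectors ${\bf y}$ that lead to correct decoding for a \emph{given} codeword ${\bf x}(m)$ form a set disjoint across $m$. The key probabilistic observation is that, conditioned on the joint-type event, the channel-output density $w({\bf y}\,|\,{\bf x}(m))$ is essentially flat on the logarithmic scale --- by Lemma~\ref{LemPDFExponent} it is within a vanishing additive exponent of $-\frac1n\log_b w({\bf y}^q\,|\,{\bf x}^q(m))$, and the latter is constant over the conditional type class because $w({\bf y}^q\,|\,{\bf x}^q)$ depends on $({\bf x}^q,{\bf y}^q)$ only through the joint type (it is a function of $\frac1n\sum_k(y_k^q-x_k^q)^2$, fixed on $T(P_{XY})$). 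So the conditional law of ${\bf Y}$ is quasi-uniform over a region whose ``volume in types'' scales like $b^{\,nH(Y|X)}$.

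Concretely, I would proceed as follows. First, write
\[
\Pr\{g({\bf Y})=J\mid (\,{\bf x}^q(J),{\bf Y}^q)\in T(P_{XY})\}
=\frac1M\sum_{m=1}^{M}\Pr\{{\bf Y}\in\mathcal{D}_m,\ (\,{\bf x}^q(m),{\bf Y}^q)\in T(P_{XY})\mid {\bf x}(J)={\bf x}(m)\}\big/\Pr\{\cdots\},
\]
and note the denominator $\Pr\{(\,{\bf x}^q(m),{\bf Y}^q)\in T(P_{XY})\mid {\bf x}(J)={\bf x}(m)\}$ does not depend on $m$ (all codewords have input type $P_X$ and the channel is memoryless and translation-compatible with the lattice in the sense that the relevant conditional-type probabilities depend only on $P_{XY}$). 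Second, upper-bound each numerator term: integrating $w({\bf y}\,|\,{\bf x}(m))$ over the set $\{{\bf y}:{\bf y}\in\mathcal{D}_m,\ Q_\beta({\bf y})\in T(P_{Y|X}\,|\,{\bf x}^q(m))\}$, use Lemma~\ref{LemPDFExponent} to replace $w({\bf y}\,|\,{\bf x}(m))$ by its quantized counterpart up to a factor $b^{\,n\,o(1)}$, so the integral is at most $b^{\,n\,o(1)}$ times $w({\bf y}^q\,|\,{\bf x}^q(m))\,\Delta_{\beta,n}^{\,n}$ times the number of ${\bf y}^q\in T(P_{Y|X}\,|\,{\bf x}^q(m))$ whose quantization cell meets $\mathcal{D}_m$; the factor $\Delta_{\beta,n}^{\,n}$ is the volume of one cell. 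Third, sum over $m$: by disjointness of the $\mathcal{D}_m$, and since $w({\bf y}^q\,|\,{\bf x}^q(m))$ is the same constant $c_n$ for every $m$ on this joint type class, $\sum_m(\#\{{\bf y}^q\in T(P_{Y|X}\,|\,{\bf x}^q(m)):\text{cell}\cap\mathcal{D}_m\neq\varnothing\})$ counts distinct points ${\bf y}^q$ (up to the bounded overlap of adjacent cells, absorbed into $o(1)$), hence is at most $|{\cal Y}_n\text{-range}|$ --- more usefully, it is at most the total number of ${\bf y}^q$ of type $P_Y$ reachable, i.e.\ at most $|T(P_Y)|\le b^{\,nH(P_Y)}$. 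Fourth, compare with the denominator, which is bounded below by a single-term contribution of order $c_n\,\Delta_{\beta,n}^{\,n}\,|T(P_{Y|X}\,|\,{\bf x}^q(m))|\,b^{-n\,o(1)}\ge c_n\,\Delta_{\beta,n}^{\,n}\,b^{\,n(H(Y|X)-o(1))}$ using the lower bound in Lemma~\ref{LemCondTypeSize}.

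Putting the pieces together, the $c_n\Delta_{\beta,n}^{\,n}$ factors cancel and we get
\[
\Pr\{g({\bf Y})=J\mid\cdots\}\ \le\ \frac1M\,b^{\,n\,o(1)}\,\frac{b^{\,nH(P_Y)}}{b^{\,nH(Y|X)}}
\ =\ b^{\,-n(\widetilde R-(H(P_Y)-H(Y|X))+o(1))}\ =\ b^{\,-n(\widetilde R-I(P_{XY})+o(1))},
\]
since $I(P_{XY})=H(P_Y)-H(Y|X)$ and $\widetilde R=\frac1n\log_b M$. The $o(1)$ collects: the Lemma~\ref{LemPDFExponent} discretization error, which is $O(\Delta_{\alpha,n}+\Delta_{\beta,n})\to0$; the polynomial type-size slacks $c_i\,\log_b(n+1)/n^{\cdot}\to0$ from Lemmas~\ref{LemTypeSize}--\ref{LemCondTypeSize}; and the bounded per-point cell-overlap factor --- all depending only on $\alpha,\beta,c_X,c_Y,c_{XY},\sigma^2$ as claimed. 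The main obstacle I anticipate is making the counting step rigorous: carefully relating the continuous decision regions $\mathcal{D}_m\subseteq\mathbb{R}^n$ to quantization cells, ensuring that the ``distinct ${\bf y}^q$'' bound across different $m$ genuinely follows from disjointness of the $\mathcal{D}_m$ (a cell may straddle two regions, but since cells tile $\mathbb{R}^n$ disjointly, each cell is counted for at most a controlled number of $m$'s), and confirming that the denominator's reference probability $\Pr\{(\,{\bf x}^q(m),{\bf Y}^q)\in T(P_{XY})\mid{\bf x}(J)={\bf x}(m)\}$ is indeed $m$-independent and can be lower-bounded by the single-cell estimate; the information-theoretic bookkeeping with $H,I$ is then routine.
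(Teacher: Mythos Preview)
Your overall strategy is sound and, interestingly, is \emph{more direct} than the paper's. The paper builds an ensemble of $n!$ codes by permuting the coordinates of $\mathcal{C}$ and of the decision regions, shows the conditional correct-decoding probability is the same for every member, and then averages: conditioning further on ${\bf Y}^q={\bf y}$, it counts, over the whole ensemble, how many codewords have quantized versions in $T(P_{X|Y}\,|\,{\bf y})$ and bounds the per-code success probability by (essentially) the reciprocal of that count. In the end this produces the ratio $|T(P_X)|/(|T(P_{X|Y}\,|\,{\bf y})|\,M)\approx b^{-n(\widetilde R-I(P_{XY}))}$. Your approach gets the same ratio (in the dual form $|T(P_Y)|/(|T(P_{Y|X}\,|\,{\bf x}^q)|\,M)$) without any ensemble, by integrating the density directly and invoking disjointness of the $\mathcal{D}_m$. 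That is a genuine simplification.

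There is, however, one gap in your execution that you flagged but resolved incorrectly. You upper-bound the $m$-th numerator by $c_n\,\Delta_{\beta,n}^n$ times the \emph{number} of cells in $T(P_{Y|X}\,|\,{\bf x}^q(m))$ that meet $\mathcal{D}_m$, and then claim that summing these counts over $m$ gives at most $|T(P_Y)|$ because ``each cell is counted for at most a controlled number of $m$'s.'' That is false: a single quantization cell can intersect arbitrarily many of the disjoint open sets $\mathcal{D}_m$ (imagine $M$ thin slivers all passing through one cube), so the cell-count sum is uncontrolled. The fix is to skip the cell count entirely and work with \emph{volumes}: by Lemma~\ref{LemPDFExponent} the $m$-th numerator is at most $c_n\,b^{\,n\,o(1)}\cdot\mathrm{vol}\big(\mathcal{D}_m\cap B_m\big)$, where $B_m$ is the union of cells indexed by $T(P_{Y|X}\,|\,{\bf x}^q(m))$. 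Since $T(P_{Y|X}\,|\,{\bf x}^q(m))\subseteq T(P_Y)$ we have $B_m\subseteq B$ (the union of cells indexed by $T(P_Y)$), and now disjointness of the $\mathcal{D}_m$ gives directly
\[
\sum_m \mathrm{vol}(\mathcal{D}_m\cap B_m)\ \le\ \sum_m \mathrm{vol}(\mathcal{D}_m\cap B)\ \le\ \mathrm{vol}(B)\ =\ |T(P_Y)|\,\Delta_{\beta,n}^{\,n}.
\]
Combined with the lower bound on the denominator (which is $m$-independent \emph{up to} $b^{\,n\,o(1)}$ via Lemma~\ref{LemPDFExponent} and the fact that $|T(P_{Y|X}\,|\,{\bf x}^q(m))|$ depends only on $P_{XY}$, not on $m$), the rest of your bookkeeping goes through unchanged.
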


\bigskip

\begin{proof}
First,
from the single code $({\cal C}, g)$ we create an ensemble of codes, where each member code has the same probability
of error/correct-decoding as the original code $({\cal C}, g)$.
Then we 
upper bound the ensemble average probability of correct decoding.

Considering the codebook ${\cal C}$ as an $M\times n$ matrix, we permute its $n$ columns.
This produces a set of 
codebooks: ${\cal C}_{\ell}$, $\ell = 1, .\,.\,.\, , \, n!$.
The quantized versions of all the codewords of each codebook ${\cal C}_{\ell}$ 
belong to the same type class $T({P\mathstrut}_{\!X})$.
In accordance with ${\cal C}_{\ell}$, we permute also
the $n$ coordinates of each ${\bf y} \in {\cal D}{\mathstrut}_{m}$
of the decision regions ${\cal D}{\mathstrut}_{m}$ in the definition (\ref{eqDec})
of the decoder $g$, 
obtaining
open sets ${\cal D}{\mathstrut}_{m}^{(\ell)}$ and
creating in this way an ensemble of codes $({\cal C}_{\ell}, g_{\ell})$, $\ell = 1, .\,.\,.\, , \, n!$.

Let ${\bf x}_{\ell}(J) \rightarrow {\bf Y}_{\!\ell}$ denote the random channel-input and channel-output vectors, respectively,
when using the code with an index $\ell \in \{1, .\,.\,.\, , \, n!\}$.
Let ${\bf x}_{\ell}^{q}(J)$ and ${\bf Y}_{\!\ell}^{q}$ denote their respective quantized versions.
Since the additive channel noise is i.i.d., permutation of 
components does not change the distribution of the noise vector ${\bf Y}_{\!\ell} - {\bf x}_{\ell}(J)$,
and
we obtain 
\begin{align}
\Pr \Big\{
g({\bf Y}) = J, \;
\big({\bf x}^{q}(J), \,{\bf Y}^{q}\big) \, \in \, T({P\mathstrut}_{\!XY})
\Big\}
\; & = \;
\Pr \Big\{
g_{\ell}({\bf Y}_{\!\ell}) = J, \;
\big({\bf x}_{\ell}^{q}(J), \,{\bf Y}_{\!\ell}^{q}\big) \, \in \, T({P\mathstrut}_{\!XY})
\Big\},
\;\;\; \forall \ell,
\label{eqJointEvent} \\
\Pr \Big\{
\big({\bf x}^{q}(J), \,{\bf Y}^{q}\big) \, \in \, T({P\mathstrut}_{\!XY})
\Big\}
\; & = \;
\Pr \Big\{
\big({\bf x}_{\ell}^{q}(J), \,{\bf Y}_{\!\ell}^{q}\big) \, \in \, T({P\mathstrut}_{\!XY})
\Big\},
\;\;\; \forall \ell.
\label{eqTypeEvent}
\end{align}
Suppose that one of the codes $({\cal C}_{\ell}, g_{\ell})$, $\ell = 1, .\,.\,.\, , \, n!$,
is 
used for communication with probability $1/n!$,
chosen independently of the sent message $J$ and of the channel. Let $L \sim \text{Unif}\,\big(\{1, 2, .\,.\,.\, , \, n!\}\big)$
be the random variable denoting the index of this code.
Then, using (\ref{eqJointEvent}) and (\ref{eqTypeEvent}) we obtain
\begin{align}
\Pr \Big\{
g({\bf Y}) = J \; \big| \;
\big({\bf x}^{q}(J), \,{\bf Y}^{q}\big) \, \in \, T({P\mathstrut}_{\!XY})
\Big\}
\; & = \;
\Pr \Big\{
g_{L}({\bf Y}_{\!L}) = J \; \big| \;
\big({\bf x}_{L}^{q}(J), \,{\bf Y}_{\!L}^{q}\big) \, \in \, T({P\mathstrut}_{\!XY})
\Big\}.
\label{eqEnsembleAverage}
\end{align}
In what follows, we upper bound the RHS of (\ref{eqEnsembleAverage}) with an added condition that ${\bf Y}_{\!L}^{q} = {\bf y} \in T({P\mathstrut}_{\!Y})$:
\begin{displaymath}
\Pr \Big\{\;
g_{L}({\bf Y}_{\!L}) = J \;\; \big| \;\;
{\bf x}_{L}^{q}(J) \in T({P\mathstrut}_{\!X|\,Y}\,|\, {\bf y}), \; {\bf Y}_{\!L}^{q} = {\bf y}
\Big\}.
\end{displaymath}

The total number of codes in the ensemble can be rewritten as
\begin{equation} \label{eqTotalCodes}
n! \;\; = \;\; |\, T({P\mathstrut}_{\!X}) \,| \prod_{x \, \in \, {\cal S}({P\mathstrut}_{\!X})}\big(n{P\mathstrut}_{\!X}(x)\big)!
\;\; \triangleq \;\; |\, T({P\mathstrut}_{\!X}) \,|\cdot \Pi({P\mathstrut}_{\!X}).
\end{equation}
Given ${\bf y} \in T({P\mathstrut}_{\!Y})$, the total number of all the codewords
in the ensemble such that their quantized versions belong to the same conditional type class
$T({P\mathstrut}_{\!X|\,Y}\,|\, {\bf y})$
(counted as distinct if the codewords 
belong to different ensemble member codes or represent different messages) is
given by
\begin{equation} \label{eqTotalCodewords}
S \; = \; \underbrace{|\, T({P\mathstrut}_{\!X|\,Y}\,|\, {\bf y}) \,| \,
\cdot \, \Pi({P\mathstrut}_{\!X})}_{\text{for a message $m$}} \,\, \cdot \,\, M. 
\end{equation}
Let $N(\ell)$ denote the number of the codewords in a codebook ${\cal C}_{\ell}$
such that their quantized versions belong to $T({P\mathstrut}_{\!X|\,Y}\,|\, {\bf y})$.
Given that ${\bf Y}_{\!L}^{q} = {\bf y}$, the channel output vector ${\bf Y}_{\!L}$ falls
into a hypercube region of $\mathbb{R}{\mathstrut}^{n}$:
\begin{displaymath}
{\cal B} \;\; \triangleq \;\; \{\widetilde{\bf y} \in \mathbb{R}{\mathstrut}^{n}: \; Q_{\beta}(\widetilde{\bf y}) = {\bf y}\}.
\end{displaymath}
For any ${\bf x} \in \mathbb{R}{\mathstrut}^{n}$ such that $Q_{\alpha}({\bf x}) \in T({P\mathstrut}_{\!X|\,Y}\,|\, {\bf y})$
and any open region ${\cal D}\subseteq \mathbb{R}{\mathstrut}^{n}$,
by Lemma~\ref{LemPDFExponent} we obtain
\begin{align}
&
b^{\,n\big(\mathbb{E}_{{P\mathstrut}_{\!XY}}\![\,{\log\mathstrut}_{\!b}\, w(Y \, | \, X)\,] \; + \; o_{1}(1)\big)}\cdot\text{vol}({\cal B}\cap {\cal D})
\;\; \geq \;\;
\Pr \big\{
{\bf Y}_{\!L} \in {\cal B}\cap {\cal D} \; | \;
{\bf x}_{L}(J) = {\bf x}
\big\}
\nonumber \\
\geq \;\; &
b^{\,n\big(\mathbb{E}_{{P\mathstrut}_{\!XY}}\![\,{\log\mathstrut}_{\!b}\, w(Y \, | \, X)\,] \; + \; o_{2}(1)\big)}\cdot\text{vol}({\cal B}\cap {\cal D}).
\label{eqBoxProb}
\end{align}
Then, 
since all the 
codes and messages
are equiprobable,
the conditional probability of the code with the index $\ell$ is 
upper-bounded as
\begin{equation} \label{eqCondCode}
\Pr \Big\{\;
L = \ell \;\; \big| \;\;
{\bf x}_{L}^{q}(J) \in T({P\mathstrut}_{\!X|\,Y}\,|\, {\bf y}), \; {\bf Y}_{\!L}^{q} = {\bf y}
\Big\}
\;\; \leq \;\;
b^{\,n 
\big(o_{1}(1)\; - \; o_{2}(1)\big)}N(\ell)/S.
\end{equation}
For $N(\ell) > 0$, let $m_{1} < m_{2} <  .\,.\,. < m_{N(\ell)}$ be the indices of all the codewords in the codebook ${\cal C}_{\ell}$
with their quantized versions in $T({P\mathstrut}_{\!X|\,Y}\,|\, {\bf y})$.
Given that indeed the codebook 
${\cal C}_{\ell}$ has been used for communication,
similarly to (\ref{eqCondCode}),
by (\ref{eqBoxProb})
the conditional probability of correct decoding can be upper-bounded as
\begin{align}
\Pr \Big\{\,
g_{\ell}({\bf Y}_{\!\ell}) = J \; \big| \;
{\bf x}_{\ell}^{q}(J) \in T({P\mathstrut}_{\!X|\,Y}\,|\, {\bf y}), \, {\bf Y}_{\!\ell}^{q} = {\bf y},
\, L = \ell
\Big\}
\,
& \leq
\,
\sum_{j \, = \, 1}^{N(\ell)}
\frac{\text{vol}\big({\cal B}\cap {\cal D}{\mathstrut}_{m_{j}}^{(\ell)}\big)b^{\,n \,\cdot \, o(1)}}
{N(\ell)\text{vol}({\cal B})}
\,
\leq
\,
\frac{b^{\,n \,\cdot \, o(1)}}{N(\ell)},
\label{eqCondCorDec}
\end{align}
where the second inequality follows because the decision regions ${\cal D}{\mathstrut}_{m_{j}}^{(\ell)}$ are disjoint.
Summing up over all the codes, we finally obtain:
\begin{align}
&
\Pr \Big\{\;
g_{L}({\bf Y}_{\!L}) = J \;\; \big| \;\;
{\bf x}_{L}^{q}(J) \in T({P\mathstrut}_{\!X|\,Y}\,|\, {\bf y}), \; {\bf Y}_{\!L}^{q} = {\bf y}
\Big\}
\;\;
\overset{a}{\leq} \;\;
\sum_{\substack{1 \, \leq \, \ell \, \leq \, n! \, : \;
N(\ell)\, > \, 0}}
\frac{N(\ell)}{S}
\cdot
\frac{1}{N(\ell)}\cdot b^{\,n \,\cdot \, o(1)}
\nonumber \\
= \;\; &
\sum_{\substack{1 \, \leq \, \ell \, \leq \, n! \, : \; N(\ell)\, > \, 0}}
\frac{1}{S}\cdot b^{\,n \,\cdot \, o(1)}
\;\; \leq \;\;
\frac{n!}{S}\cdot b^{\,n \,\cdot \, o(1)}
\;\; \overset{b}{=} \;\;
\frac{|\, T({P\mathstrut}_{\!X}) \,|}{|\, T({P\mathstrut}_{\!X|\,Y}\,|\, {\bf y}) \,| \,  \, M}\cdot b^{\,n \,\cdot \, o(1)}
\;\; \overset{c}{\leq} \;\;
b^{\,-n\big(\widetilde{R} \, - \, I({P\mathstrut}_{\!XY})  
\, + \, o(1)\big)},
\nonumber
\end{align}
where ($a$) follows by (\ref{eqCondCode}) and (\ref{eqCondCorDec}),
($b$) follows by (\ref{eqTotalCodes}) and (\ref{eqTotalCodewords}),
and ($c$) follows by (\ref{eqMargExpX}) of Lemma~\ref{LemTypeSize} and (\ref{eqXgivenY}) of Lemma~\ref{LemCondTypeSize}.
\end{proof}

In the next two sections we derive converse bounds on the error and correct-decoding exponents in terms of types.


\section{Error exponent}\label{ErrExp}

The end result of this section is given by Lemma~\ref{LemAllCodebooks} and represents a converse bound on the error exponent by the method of types.

\begin{lemma}[Error exponent of 
mono-composition codebooks]
\label{LemConstComp}
{\em 
Let ${P\mathstrut}_{\!X} \in {\cal P}_{n}({\cal X}_{n})$ be a type,
such that $\mathbb{E}_{{P\mathstrut}_{\!X}}\!\big[X^{2}\big] \leq c_{X}$,
and
let ${\cal C}$ be a codebook, 
such that the quantized versions (\ref{eqQuantizer}) of 
its codewords ${\bf x}(m)$, $m = 1, 2, .\,.\,.\, , \, M(n, R)$,
are all of the 
type ${P\mathstrut}_{\!X}$, that is:
\begin{displaymath} 
{\bf x}^{q}(m) \; = \; Q_{\alpha}({\bf x}(m))
\; = \;
\big(Q_{\alpha}(x_{1}(m)), Q_{\alpha}(x_{2}(m)), .\,.\,.\, , \,Q_{\alpha}(x_{n}(m))\big)
\; \in \;
T({P\mathstrut}_{\!X}), \;\;\; \forall m.
\end{displaymath}
Let $J \sim \text{Unif}\,\big(\{1, 2, .\,.\,.\, , \, M\}\big)$ be a random variable,
independent of the channel, and let ${\bf x}(J) \rightarrow {\bf Y}$ be the random channel-input and channel-output vectors, respectively.
Then for any parameter $c_{XY}$
}
\begin{equation} \label{eqCCBound}
- \frac{1}{n}\,{\log\mathstrut}_{\!b}\Pr \big\{
g({\bf Y}) \neq J \big\}
\;\; \leq \;\;
\min_{\substack{\\{P\mathstrut}_{\!Y|X}:\\
{P\mathstrut}_{\!XY}\,\in \, {\cal P}_{n}({\cal X}_{n}\,\times\, {\cal Y}_{n}),
\\
\mathbb{E}[(Y-X)^{2}] \; \leq \; c_{XY},
\\ I({P\mathstrut}_{\!X}, \, {P\mathstrut}_{\!Y|X}) \; \leq \; \widetilde{R} \, - \, o(1)
}}
\Big\{
D\big({P\mathstrut}_{\!Y|X}\,\|\, {W\mathstrut}_{\!n} \,|\,  {P\mathstrut}_{\!X}\big)
\Big\} \; + \; o(1),
\end{equation}
{\em where
$\widetilde{R} = \frac{1}{n}\,{\log\mathstrut}_{\!b}\,M(n, R)$,
and
$o(1)\rightarrow 0$, as $n\rightarrow \infty$, 
depending only on 
$\alpha$, $\beta$, $c_{X}$, $c_{XY}$, and $\sigma^{2}$.}
\end{lemma}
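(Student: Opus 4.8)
The plan is to establish the matching lower bound on the error probability: for \emph{every} conditional type $P_{Y|X}$ admissible in the minimum on the right of~(\ref{eqCCBound}), I claim $\Pr\{g(\mathbf{Y})\neq J\}\ \ge\ \tfrac12\,b^{-n(D(P_{Y|X}\|W_n\,|\,P_X)+o(1))}$; taking $-\tfrac1n\log_b$ and then the minimum over $P_{Y|X}$ yields~(\ref{eqCCBound}) (if there is no admissible $P_{Y|X}$ the right-hand side is $+\infty$ and nothing is to prove). So fix $P_{XY}=P_X\times P_{Y|X}\in{\cal P}_n({\cal X}_n\times{\cal Y}_n)$ with $\mathbb{E}_{P_X}[X^2]\le c_X$, $\mathbb{E}_{P_{XY}}[(Y-X)^2]\le c_{XY}$ and $I(P_X,P_{Y|X})\le\widetilde R-o(1)$ (the exact slack in this rate constraint is pinned down below); Minkowski's inequality upgrades the first and third moment bounds to $\mathbb{E}_{P_Y}[Y^2]\le c_Y\triangleq(\sqrt{c_X}+\sqrt{c_{XY}})^2$, so Lemmas~\ref{LemTypeSize},~\ref{LemCondTypeSize} and~\ref{LemConvLem} apply to $P_{XY}$. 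First I would bound the error probability below by the joint event and factor it:
\begin{align}
\Pr\{g(\mathbf{Y})\neq J\}\ \ge\ &\,\Pr\big\{(\mathbf{x}^q(J),\mathbf{Y}^q)\in T(P_{XY})\big\}\nonumber\\
&\times\,\Pr\big\{g(\mathbf{Y})\neq J\ \big|\ (\mathbf{x}^q(J),\mathbf{Y}^q)\in T(P_{XY})\big\}.\nonumber
\end{align}
By Lemma~\ref{LemConvLem} the conditional probability of \emph{correct} decoding is at most $b^{-n(\widetilde R-I(P_{XY})+o_1(1))}$. Choosing the slack in $I(P_X,P_{Y|X})\le\widetilde R-o(1)$ to exceed, say, twice $\sup|o_1(1)|$ over the admissible types, the exponent $\widetilde R-I(P_{XY})+o_1(1)$ is bounded below by a positive multiple of $\log_b(n+1)/n^{c}$; since $n$ times it then diverges, the conditional correct-decoding probability tends to $0$ and the conditional error probability is $\ge\tfrac12$ for all large $n$, giving $\Pr\{g(\mathbf{Y})\neq J\}\ge\tfrac12\,\Pr\{(\mathbf{x}^q(J),\mathbf{Y}^q)\in T(P_{XY})\}$.

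Next I would bound $\Pr\{(\mathbf{x}^q(J),\mathbf{Y}^q)\in T(P_{XY})\}$ from below. Conditioning on $J$ and using that all codewords have quantized type $P_X$, this equals $\Pr\{\mathbf{Y}^q\in T(P_{Y|X}\,|\,\mathbf{x}^q(J))\}=\sum_{\mathbf{y}^q\in T(P_{Y|X}|\mathbf{x}^q(J))}\Pr\{\mathbf{Y}^q=\mathbf{y}^q\}$. Each summand is $\int_{\mathcal{B}(\mathbf{y}^q)}w(\widetilde{\mathbf{y}}\,|\,\mathbf{x}(J))\,d\widetilde{\mathbf{y}}\ge\Delta_{\beta,\,n}^{\,n}\inf_{\widetilde{\mathbf{y}}\in\mathcal{B}(\mathbf{y}^q)}w(\widetilde{\mathbf{y}}\,|\,\mathbf{x}(J))$ with $\mathcal{B}(\mathbf{y}^q)=\{\widetilde{\mathbf{y}}:Q_\beta(\widetilde{\mathbf{y}})=\mathbf{y}^q\}$ a cube of volume $\Delta_{\beta,\,n}^{\,n}$; applying Lemma~\ref{LemPDFExponent} to the pair $(\mathbf{x}(J),\widetilde{\mathbf{y}})$, whose quantized version is $(\mathbf{x}^q(J),\mathbf{y}^q)\in T(P_{XY})$ with $\mathbb{E}_{P_{XY}}[(Y-X)^2]\le c_{XY}$, this infimum is $\ge w(\mathbf{y}^q\,|\,\mathbf{x}^q(J))\,b^{-n\,o_2(1)}$, and $\Delta_{\beta,\,n}^{\,n}\,w(\mathbf{y}^q\,|\,\mathbf{x}^q(J))=W_n(\mathbf{y}^q\,|\,\mathbf{x}^q(J))=b^{\,n\,\mathbb{E}_{P_{XY}}[\log_b W_n(Y|X)]}$ by~(\ref{eqChanApprox}) and the fact that $W_n$ is constant on $T(P_{XY})$. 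Multiplying by $|T(P_{Y|X}\,|\,\mathbf{x}^q(J))|\ge b^{\,n(H(Y|X)-o_3(1))}$ from~(\ref{eqYgivenX}) of Lemma~\ref{LemCondTypeSize}, and using the identity $H(Y|X)+\mathbb{E}_{P_{XY}}[\log_b W_n(Y|X)]=-D(P_{Y|X}\|W_n\,|\,P_X)$, I obtain $\Pr\{(\mathbf{x}^q(J),\mathbf{Y}^q)\in T(P_{XY})\}\ge b^{-n(D(P_{Y|X}\|W_n|P_X)+o(1))}$ uniformly in $J$. Combining with the previous step, $\Pr\{g(\mathbf{Y})\neq J\}\ge\tfrac12\,b^{-n(D(P_{Y|X}\|W_n|P_X)+o(1))}$, so $-\tfrac1n\log_b\Pr\{g(\mathbf{Y})\neq J\}\le D(P_{Y|X}\|W_n\,|\,P_X)+o(1)$; since this $o(1)$ depends only on $\alpha,\beta,c_X,c_{XY},\sigma^2$, it can be pulled out of the minimum over $P_{Y|X}$, which gives~(\ref{eqCCBound}).

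The main obstacle I anticipate is not conceptual but the bookkeeping of the several $o(1)$ corrections: those of the form $\mathrm{const}\cdot\log_b(n+1)/n^{c}$ coming from Lemmas~\ref{LemTypeSize}--\ref{LemCondTypeSize} and~\ref{LemConvLem}, and the quantization terms $\sim(\Delta_{\alpha,\,n}+\Delta_{\beta,\,n})\sqrt{c_{XY}}$ coming from Lemma~\ref{LemPDFExponent}; in particular the rate slack in $I(P_X,P_{Y|X})\le\widetilde R-o(1)$ must be made to vanish slowly enough to dominate $o_1(1)$ of Lemma~\ref{LemConvLem}, so that the conditional correct-decoding probability provably tends to zero. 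Everything else --- the elementary volume lower bound for the cube probability, the null set of quantizer-cell boundaries, and the divergence identity --- is routine.
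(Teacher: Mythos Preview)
Your proposal is correct and follows essentially the same route as the paper: factor the error probability through the type event $\{(\mathbf{x}^q(J),\mathbf{Y}^q)\in T(P_{XY})\}$, lower-bound the type probability via Lemma~\ref{LemPDFExponent} and~(\ref{eqYgivenX}) of Lemma~\ref{LemCondTypeSize} to get $b^{-n(D(P_{Y|X}\|W_n|P_X)+o(1))}$, and use Lemma~\ref{LemConvLem} to force the conditional error above $1/2$. The only cosmetic difference is that the paper takes the rate slack to be exactly $\log_b(2)/n$ plus the $o(1)$ of Lemma~\ref{LemConvLem}, which makes the conditional correct-decoding probability $\le 1/2$ directly, whereas you choose the slack to dominate $o_1(1)$ so that the conditional correct-decoding probability tends to $0$; both are valid $o(1)$ conditions and yield the same bound.
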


\bigskip

\begin{proof}
For a joint type ${P\mathstrut}_{\!XY} \in {\cal P}_{n}({\cal X}_{n}\times {\cal Y}_{n})$ with the marginal type ${P\mathstrut}_{\!X}$,
such that
$\mathbb{E}_{{P\mathstrut}_{\!XY}}\!\big[(Y-X)^{2}\big]\leq c_{XY}$, we have also
\begin{displaymath}
\mathbb{E}_{{P\mathstrut}_{\!Y}}\!\big[Y^{2}\big]
\;\; 
\leq
\;\;
\Big(
\mathbb{E}_{{P\mathstrut}_{\!X}}^{1/2}\big[X^{2}\big]\,+\,
\mathbb{E}_{{P\mathstrut}_{\!XY}}^{1/2}\!\big[(Y-X)^{2}\big]
\Big)^{2}
\;\; \leq \;\;
\big(
\sqrt{c_{X}} \, + \, \sqrt{c_{XY}}
\big)^{2}
\;\; \triangleq \;\;
c_{Y}.
\end{displaymath}
Then with $\,{\bf Y}^{q} = Q_{\beta}({\bf Y})\in {\cal Y}_{n}^{n}\,$ for any $1 \leq j \leq M$ we obtain
\begin{align}
\Pr \Big\{
\big({\bf x}^{q}(J), \,{\bf Y}^{q}\big) \, \in \, T({P\mathstrut}_{\!XY}) \; \big| \;
J = j
\Big\}
\;\; & \overset{a}{\geq} \;\;
\big|\, T\big({P\mathstrut}_{\!Y|X}\,|\, {\bf x}^{q}(j)\big) \,\big| \cdot \Delta_{\beta,\,n}^{n}\cdot
b^{\,n\big(\mathbb{E}_{{P\mathstrut}_{\!XY}}\![\,{\log\mathstrut}_{\!b}\, w(Y \, | \, X)\,] \; + \; o(1)\big)}
\nonumber \\
& \overset{b}{\geq} \;\;
b^{\,-n\big(D({P\mathstrut}_{\!Y|X}\,\|\, {W\mathstrut}_{\!n} \,|\,  {P\mathstrut}_{\!X}) \; + \; o(1)\big)},
\;\;\;\;\;\; \forall j,
\label{eqConTypeExp}
\end{align}
where ($a$) follows by Lemma~\ref{LemPDFExponent}, and ($b$) follows by (\ref{eqYgivenX}) of Lemma~\ref{LemCondTypeSize} and (\ref{eqChanApprox}).
This gives
\begin{align}
\Pr \Big\{
\big({\bf x}^{q}(J), \,{\bf Y}^{q}\big) \, \in \, T({P\mathstrut}_{\!XY})
\Big\}
\;\; & \geq \;\;
b^{\,-n\big(D({P\mathstrut}_{\!Y|X}\,\|\, {W\mathstrut}_{\!n} \,|\,  {P\mathstrut}_{\!X}) \; + \; o(1)\big)}.
\label{eqPrior}
\end{align}
Now we are ready to apply Lemma~\ref{LemConvLem}:
\begin{align}
\Pr \big\{
g({\bf Y}) \neq J \big\}
\;\; & \geq \;\;
\Pr \Big\{
\big({\bf x}^{q}(J), \,{\bf Y}^{q}\big) \, \in \, T({P\mathstrut}_{\!XY})
\Big\}\cdot
\Pr \Big\{
g({\bf Y}) \neq J \; \big| \;
\big({\bf x}^{q}(J), \,{\bf Y}^{q}\big) \, \in \, T({P\mathstrut}_{\!XY})
\Big\}
\nonumber \\
& \overset{a}{\geq} \;\;
b^{\,-n\big(D({P\mathstrut}_{\!Y|X}\,\|\, {W\mathstrut}_{\!n} \,|\,  {P\mathstrut}_{\!X}) \; + \; o(1)\big)}
\cdot
\Big[
1 \, - \,
b^{\,-n\big(\widetilde{R} \; - \; I({P\mathstrut}_{\!X}, \, {P\mathstrut}_{\!Y|X})  
\; + \; o(1)\big)}
\Big]
\nonumber \\
& \overset{b}{\geq} \;\;
b^{\,-n\big(D({P\mathstrut}_{\!Y|X}\,\|\, {W\mathstrut}_{\!n} \,|\,  {P\mathstrut}_{\!X}) \; + \; o(1)\big)}
\cdot 1/2,
\nonumber
\end{align}
where ($a$) follows by (\ref{eqPrior}) and Lemma~\ref{LemConvLem},
and ($b$) holds for $I({P\mathstrut}_{\!X}, {P\mathstrut}_{\!Y|X}) \,\leq \, \widetilde{R} - {\log\mathstrut}_{\!b}(2)/n + o(1)$.
\end{proof}

\bigskip

\begin{lemma}[Type constraint] \label{LemTypeConstraint}
{\em
For any $\epsilon > 0$ there exists $n_{0} = n_{0}(\alpha, s^{2}, \epsilon) \in \mathbb{N}$,
such that for any $n > n_{0}$ and any codeword ${\bf x} \in \mathbb{R}{\mathstrut}^{n}$,
satisfying the power constraint (\ref{eqPowerConstraint}), the quantized version of that codeword, 
defined by (\ref{eqQuantizer}),
satisfies the power constraint (\ref{eqPowerConstraint}) within $\epsilon$, that is
with $s^{2}$ replaced by $s^{2} + \epsilon$.
}
\end{lemma}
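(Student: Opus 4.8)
The plan is to control the energy $\frac{1}{n}\sum_{k=1}^n (x_k^q)^2$ of the quantized codeword by the energy $\frac{1}{n}\sum_{k=1}^n x_k^2$ of the original codeword plus a term that vanishes with $n$ and depends only on $\alpha$ and $s$. The tool for this is Minkowski's inequality (the triangle inequality for the $\ell^2$ norm) applied to the split $x_k^q = x_k + (x_k^q - x_k)$, together with the elementary observation that uniform quantization with step $\Delta_{\alpha,n}$ moves each coordinate by at most half a step.

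Concretely, I would proceed as follows. First, from the definition of $Q_\alpha$ in \eqref{eqQuantizer} and of $\Delta_{\alpha,n}$ in \eqref{eqDelta}, each coordinate satisfies $|x_k^q - x_k| \le \Delta_{\alpha,n}/2 = 1/(2n^\alpha)$, so $\frac{1}{n}\sum_{k=1}^n (x_k^q - x_k)^2 \le 1/(4n^{2\alpha})$. Second, I would apply Minkowski's inequality to get
\[
\Big(\tfrac{1}{n}\textstyle\sum_{k=1}^n (x_k^q)^2\Big)^{1/2}
\;\le\;
\Big(\tfrac{1}{n}\textstyle\sum_{k=1}^n x_k^2\Big)^{1/2}
+
\Big(\tfrac{1}{n}\textstyle\sum_{k=1}^n (x_k^q - x_k)^2\Big)^{1/2}
\;\le\;
s + \tfrac{1}{2n^\alpha},
\]
where the power constraint \eqref{eqPowerConstraint} on $\mathbf{x}$ bounds the first term by $s$ and the previous estimate bounds the second by $1/(2n^\alpha)$. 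Squaring yields $\frac{1}{n}\sum_{k=1}^n (x_k^q)^2 \le s^2 + s/n^\alpha + 1/(4n^{2\alpha})$. Finally, since $s/n^\alpha + 1/(4n^{2\alpha}) \to 0$ as $n \to \infty$, one may take $n_0 = n_0(\alpha, s^2, \epsilon)$ to be any integer with $s/n_0^\alpha + 1/(4n_0^{2\alpha}) < \epsilon$, and then for every $n > n_0$ the right-hand side is at most $s^2 + \epsilon$, which is exactly the asserted power constraint within $\epsilon$.

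I do not expect a genuine obstacle here; the argument is a one-line $\ell^2$ triangle inequality. The only point deserving attention is uniformity: the correction term $s/n^\alpha + 1/(4n^{2\alpha})$ is independent of the particular codeword $\mathbf{x}$ (it uses only the coordinate-wise quantization bound and the normalized power constraint), so a single threshold $n_0$ works simultaneously for all $\mathbf{x} \in \mathbb{R}^n$ satisfying \eqref{eqPowerConstraint}, and $n_0$ depends only on $\alpha$, $s^2$, and $\epsilon$ as claimed.
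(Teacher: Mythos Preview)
Your proof is correct and essentially the same as the paper's. The paper refers back to the expansion in (\ref{eqRealAverage})--(\ref{eqRealBounded2}): write $(x_k^q)^2 = x_k^2 + 2x_k\delta_k + \delta_k^2$ with $\delta_k = x_k^q - x_k$, bound the cross term via Cauchy--Schwarz/Jensen and the square term by $\Delta_{\alpha,n}^2/4$, arriving at the identical estimate $s^2 + s/n^\alpha + 1/(4n^{2\alpha})$; your Minkowski formulation is just the same inequality packaged at the level of norms rather than expanded termwise.
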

The proof is the same as (\ref{eqRealAverage})-(\ref{eqRealBounded2}).

\bigskip

\begin{lemma}[Error exponent] \label{LemAllCodebooks}
{\em
Let $J \sim \text{Unif}\,\big(\{1, 2, .\,.\,.\, , \, M\}\big)$ be a random variable,
independent of the channel, and let ${\bf x}(J) \rightarrow {\bf Y}$ be the random channel-input and channel-output vectors, respectively.
Then for any 
$c_{XY}$ and $\epsilon > 0$ there exists
$n_{0} = n_{0}(\alpha, \,\beta, \,s^{2}, \, \sigma^{2}, \,c_{XY}, 
\,\epsilon) \in \mathbb{N}$,
such that for any $n > n_{0}$
}
\begin{equation} \label{eqBoundTypes}
- \frac{1}{n}\,{\log\mathstrut}_{\!b}\Pr \big\{
g({\bf Y}) \neq J \big\}
\;\; \leq \;\;
\max_{\substack{\\{P\mathstrut}_{\!X{\color{white}|}}\!\!:\\
{P\mathstrut}_{\!X}\,\in \, {\cal P}_{n}({\cal X}_{n}),
\\
\mathbb{E}[X^{2}] \; \leq \; s^{2}\, + \, \epsilon}}
\;\;
\min_{\substack{\\{P\mathstrut}_{\!Y|X}:\\
{P\mathstrut}_{\!XY}\,\in \, {\cal P}_{n}({\cal X}_{n}\,\times\, {\cal Y}_{n}),
\\
\mathbb{E}[(Y-X)^{2}] \; \leq \; c_{XY},
\\ I({P\mathstrut}_{\!X}, \, {P\mathstrut}_{\!Y|X}) \; \leq \; R \, - \, \epsilon
}}
\Big\{
D\big({P\mathstrut}_{\!Y|X}\,\|\, {W\mathstrut}_{\!n} \,|\,  {P\mathstrut}_{\!X}\big)
\Big\} \; + \; o(1),
\end{equation}
{\em where $o(1)\rightarrow 0$, as $n\rightarrow \infty$, 
depending only on the parameters
$\alpha$, $\beta$, $s^{2} + \epsilon$, $c_{XY}$, and $\sigma^{2}$.}
\end{lemma}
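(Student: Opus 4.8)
The plan is to reduce the general-codebook statement to the mono-composition bound of Lemma~\ref{LemConstComp} by a pigeonhole argument over the compositions of the \emph{quantized} codewords, the crucial point being that the number of admissible compositions is sub-exponential in $n$. First I would fix $\epsilon>0$ and take $n$ larger than the threshold of Lemma~\ref{LemTypeConstraint}; then for every codeword $\mathbf{x}(m)$ of a codebook $\mathcal{C}$ obeying (\ref{eqPowerConstraint}), its quantized version $\mathbf{x}^{q}(m)=Q_{\alpha}(\mathbf{x}(m))$ from (\ref{eqQuantizer}) obeys the power constraint within $\epsilon$, so the type of $\mathbf{x}^{q}(m)$ lies in $\mathcal{P}_{n}(\mathcal{X}_{n},\,s^{2}+\epsilon)$. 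Partitioning $\{1,\dots,M\}$ according to the type of $\mathbf{x}^{q}(m)$ produces at most $\nu_{n}\triangleq|\mathcal{P}_{n}(\mathcal{X}_{n},\,s^{2}+\epsilon)|$ non-empty classes, and by Lemma~\ref{LemNumofTypes} this is sub-exponential, i.e. $\tfrac1n\log_{b}\nu_{n}=o(1)$. By pigeonhole there is a type $P_{X}^{\star}$ with $\mathbb{E}_{P_{X}^{\star}}[X^{2}]\le s^{2}+\epsilon$ whose class $\mathcal{M}^{\star}$ has size $M'\triangleq|\mathcal{M}^{\star}|\ge M/\nu_{n}$, hence $\tfrac1n\log_{b}M'\ge R-o(1)$ and $M'\ge 2$ for $n$ large.

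Next I would pass to the sub-codebook $\mathcal{C}'=\{\mathbf{x}(m):m\in\mathcal{M}^{\star}\}$, relabelled $1,\dots,M'$, equipped with the decoder $g'$ that keeps the decision regions $\mathcal{D}_{m}$, $m\in\mathcal{M}^{\star}$, and outputs $0$ elsewhere; these are disjoint open (or empty) sets, so $g'$ has the form (\ref{eqDec}) with $M'$ messages. Conditioning on $\{J\in\mathcal{M}^{\star}\}$ makes $J$ uniform on $\mathcal{M}^{\star}$ and leaves the conditional law of $\mathbf{Y}$ given $J$ unchanged, and $\{g(\mathbf{Y})=m\}$ coincides with $\{g'(\mathbf{Y})=m'\}$ for the relabelled index; therefore
\[ \Pr\{g(\mathbf{Y})\neq J\}\ \ge\ \Pr\{g(\mathbf{Y})\neq J,\ J\in\mathcal{M}^{\star}\}\ =\ \tfrac{M'}{M}\,\Pr\{g'(\mathbf{Y}')\neq J'\}, \]
where $(\mathbf{x}'(J')\rightarrow\mathbf{Y}')$ is the input/output pair of $\mathcal{C}'$. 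Since $M\le b^{nR}$ and $\tfrac1n\log_{b}M'\ge R-o(1)$, we get $\tfrac1n\log_{b}(M/M')=o(1)$, hence $-\tfrac1n\log_{b}\Pr\{g(\mathbf{Y})\neq J\}\le-\tfrac1n\log_{b}\Pr\{g'(\mathbf{Y}')\neq J'\}+o(1)$.

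Now $\mathcal{C}'$ is a mono-composition codebook of type $P_{X}^{\star}$ with exactly $M(n,R')$ codewords for $R'\triangleq\tfrac1n\log_{b}M'>0$, so Lemma~\ref{LemConstComp} applies with $c_{X}=s^{2}+\epsilon$ and the given $c_{XY}$; as the $o(1)$ in (\ref{eqCCBound}) depends only on $\alpha,\beta,c_{X},c_{XY},\sigma^{2}$ and $\widetilde{R}$ there equals $R'$, this bounds $-\tfrac1n\log_{b}\Pr\{g'(\mathbf{Y}')\neq J'\}$ by $\min D\big(P_{Y|X}\,\|\,W_{n}\,|\,P_{X}^{\star}\big)$ over joint types $P_{XY}\in\mathcal{P}_{n}(\mathcal{X}_{n}\times\mathcal{Y}_{n})$ with marginal $P_{X}^{\star}$, $\mathbb{E}[(Y-X)^{2}]\le c_{XY}$, and $I(P_{X}^{\star},P_{Y|X})\le R'-o(1)$, plus $o(1)$. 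For $n$ large, $R'-o(1)\ge R-o(1)\ge R-\epsilon$, so the constraint set $\{I\le R-\epsilon\}$ is contained in $\{I\le R'-o(1)\}$, whence that minimum is at most the one taken under $I(P_{X}^{\star},P_{Y|X})\le R-\epsilon$, which is at most the maximum over $P_{X}\in\mathcal{P}_{n}(\mathcal{X}_{n})$ with $\mathbb{E}[X^{2}]\le s^{2}+\epsilon$ of the analogous minimum. Chaining these estimates yields (\ref{eqBoundTypes}), with $n_{0}$ the largest of the finitely many thresholds invoked; its stated dependence holds because every contributing $o(1)$ depends only on $\alpha,\beta,s^{2}+\epsilon,c_{XY},\sigma^{2}$.

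The heart of the argument — and the only place the paper's extended method of types is really used — is the sub-exponential count of compositions (Lemma~\ref{LemNumofTypes}): it makes the pigeonhole rate loss $\tfrac1n\log_{b}\nu_{n}$ vanish, so that collapsing a general codebook onto a single mono-composition sub-codebook costs nothing asymptotically. The remaining steps (the conditioning identity, the relabelled decoder of the form (\ref{eqDec}), matching the two constraint sets, and the arithmetic of the $o(1)$ terms) are routine, and all the quantization bookkeeping is already absorbed into Lemmas~\ref{LemConstComp} and~\ref{LemTypeConstraint}.
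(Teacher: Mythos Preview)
Your proposal is correct and follows essentially the same approach as the paper's own proof: a pigeonhole reduction via Lemmas~\ref{LemTypeConstraint} and~\ref{LemNumofTypes} to a dominant composition $P_{X}^{\star}$, passage to the corresponding sub-codebook with the inherited decoder, and application of Lemma~\ref{LemConstComp} at the reduced rate $R'=\tfrac{1}{n}\log_{b}M'$, then relaxing the constraint $I\le R'-o(1)$ to $I\le R-\epsilon$. The only cosmetic difference is that the paper writes the reduction as the chain $\Pr\{g(\mathbf{Y})\neq J\}\ge b^{\,n\cdot o(1)}\Pr\{\tilde g(\tilde{\mathbf{Y}})\neq\tilde J\}$ directly, whereas you phrase it via the conditioning identity $\Pr\{g(\mathbf{Y})\neq J,\,J\in\mathcal{M}^{\star}\}=\tfrac{M'}{M}\Pr\{g'(\mathbf{Y}')\neq J'\}$; these are the same computation.
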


\bigskip

\begin{proof}
For a type ${P\mathstrut}_{\!X} \in {\cal P}_{n}({\cal X}_{n})$ let us define $\,M({P\mathstrut}_{\!X}) \, \triangleq \,
\big|\,
\big\{
j : \,
{\bf x}^{q}(j) \in T({P\mathstrut}_{\!X})
\big\}\,\big|$.
Then for any $n$ greater than $n_{0}$ of Lemma~\ref{LemTypeConstraint}
there exists at least one type ${P\mathstrut}_{\!X}$ such that
\begin{equation} \label{eqFrequentType}
M({P\mathstrut}_{\!X})
\;\; \geq \;\;
\frac{M}{\big|\,  {\cal P}_{n}\big({\cal X}_{n}, \,s^{2} + \epsilon\big) \,\big|}
\;\; \geq \;\;
\frac{M}{\big((n + 1)c\big)^{\tilde{c}\, n^{(1\,+\, 2\alpha)/3}}},
\end{equation}
where the second inequality follows by Lemma~\ref{LemNumofTypes} applied with $c_{X} = s^{2} + \epsilon$.
Then we can use such a type for a bound:
\begin{align}
& \Pr \big\{
g({\bf Y}) \neq J \big\} \;\; = \;\; \frac{1}{M}\sum_{j \, = \, 1}^{M} \Pr \big\{{\bf Y} \notin {\cal D}{\mathstrut}_{J} \; | \; J = j\big\}
\;\; \geq \;\;
\frac{1}{M}\sum_{\substack{1\,\leq\,j\,\leq\,M \, : \\
{\bf x}^{q}(j) \, \in \, T({P\mathstrut}_{\!X})
}} \Pr \big\{{\bf Y} \notin {\cal D}{\mathstrut}_{J} \; | \; J = j\big\}
\nonumber \\
&
\;\;\;\;\;\;\;\;\;\;\;\;\;\;\;\;\;\;\;\;\;\;\;\;\;\;\;\;\;\;\;\;\;\;\;\;\;\;\;\;\;\;\;\;\;\;\;
\;\;\;\;\;\;\;\;\;\;\;\,\,\,\,
\overset{a}{\geq} \;\;
b^{\,n \,\cdot \, o(1)}\,
\frac{1}{M({P\mathstrut}_{\!X})}\sum_{\substack{1\,\leq\,j\,\leq\,M \, : \\
{\bf x}^{q}(j) \, \in \, T({P\mathstrut}_{\!X})
}} \Pr \big\{{\bf Y} \notin {\cal D}{\mathstrut}_{J} \; | \; J = j\big\}
\nonumber \\
&
\;\;\;\;\;\;\;\;\;\;\;\;\;\;\;\;\;\;\;\;\;\;\;\;\;\;\;\;\;\;\;\;\;\;\;\;\;\;\;\;\;\;\;\;\;\;\;
\;\;\;\;\;\;\;\;\;\;\;\,\,\,\,
\overset{b}{=} \;\;
b^{\,n \,\cdot \, o(1)}\,
\Pr \big\{\,
\widetilde{g}(\widetilde{\bf Y}) \,\neq\, \widetilde{J} \,\big\},
\label{eqSmallerCodebook}
\end{align}
where ($a$) follows by (\ref{eqFrequentType}),
and ($b$) holds for the random variable $\widetilde{J} \sim \text{Unif}\,\big(\{1, 2, .\,.\,.\, , \, M({P\mathstrut}_{\!X})\}\big)$,
independent of the channel,
and the 
channel input/output
random vectors
${\bf x}\big(m_{\widetilde{J}}\big) \rightarrow \widetilde{\bf Y}$
with the decoder
\begin{displaymath} 
\widetilde{g}({\bf y})
\;\; \triangleq \;\;
\left\{
\begin{array}{r l}
0, & \;\;\; {\bf y} \in \bigcap_{\,j \, = \, 1}^{\,M({P\mathstrut}_{\!X})} {\cal D}{\mathstrut}_{m_{j}}^{c}, \\
j, & \;\;\; {\bf y} \in {\cal D}{\mathstrut}_{m_{j}}, \;\;\; 
j \in \{1, 2, .\,.\,.\, , \, M({P\mathstrut}_{\!X})\},
\end{array}
\right.
\end{displaymath}
where $m_{1} < .\,.\,. < m_{M({P\mathstrut}_{\!X})}$ are the indices of the codewords in ${\cal C}$ with
their quantized versions in $T({P\mathstrut}_{\!X})$.

It follows now from (\ref{eqSmallerCodebook}) that the LHS of (\ref{eqBoundTypes}) can be upper-bounded by
(\ref{eqCCBound}) of Lemma~\ref{LemConstComp}
with $\widetilde{R} = \frac{1}{n}\,{\log\mathstrut}_{\!b}\,M({P\mathstrut}_{\!X})$.
Substituting then (\ref{eqFrequentType}) in place of $M({P\mathstrut}_{\!X})$
we obtain a stricter condition under the minimum of (\ref{eqCCBound}), leading to 
an upper bound with
a condition $I({P\mathstrut}_{\!X}, {P\mathstrut}_{\!Y|X}) \,\leq \, R - o(1)$
and to (\ref{eqBoundTypes}).
\end{proof}


\section{Correct-decoding exponent}\label{CorDecExp}

The end result of this section is Lemma~\ref{LemCorDec}, which is a converse bound on the correct-decoding exponent by the method of types.

\begin{lemma}[Joint type constraint] \label{LemTypeNoise}
{\em
For any $\epsilon > 0$ and $\widetilde{\sigma}^{2}$ there exists $n_{0} = n_{0}(\alpha, \,\beta, \,\widetilde{\sigma}^{2}, \,\epsilon) \in \mathbb{N}$,
such that for any $n > n_{0}$ and any pair of vectors
${\bf x}, {\bf y} \in \mathbb{R}{\mathstrut}^{n}$ 
satisfying}
\begin{displaymath}
\frac{1}{n}\sum_{k \, = \, 1}^{n}(y_{k} - x_{k})^{2} \;\; \leq \;\; \widetilde{\sigma}^{2},
\end{displaymath}
{\em the respective quantized versions ${\bf x}^{q} = Q_{\alpha}({\bf x})$ and ${\bf y}^{q} = Q_{\beta}({\bf y})$,
defined as in (\ref{eqQuantizer}),
satisfy}
\begin{displaymath}
\frac{1}{n}\sum_{k \, = \, 1}^{n}(y_{k}^{q} - x_{k}^{q})^{2} \;\; \leq \;\; \widetilde{\sigma}^{2} \, + \, \epsilon.
\end{displaymath}
\end{lemma}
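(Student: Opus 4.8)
The plan is to reuse the elementary estimate (\ref{eqRealAverage})--(\ref{eqRealBounded2}) from the proof of Lemma~\ref{LemPDFExponent}, now with $\widetilde{\sigma}^{2}$ playing the role of $c_{XY}$. First I would introduce the per-coordinate discrepancies $\delta_{k} \triangleq (y_{k} - x_{k}) - (y_{k}^{q} - x_{k}^{q})$, $k = 1, \dots, n$. Since $Q_{\alpha}$ and $Q_{\beta}$ round a real number to the nearest point of their respective lattices, $|x_{k} - x_{k}^{q}| \leq \Delta_{\alpha,\,n}/2$ and $|y_{k} - y_{k}^{q}| \leq \Delta_{\beta,\,n}/2$, hence $|\delta_{k}| \leq (\Delta_{\alpha,\,n} + \Delta_{\beta,\,n})/2$ for every $k$, uniformly in ${\bf x}$ and ${\bf y}$.

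Next I would expand $\frac{1}{n}\sum_{k=1}^{n}(y_{k}^{q} - x_{k}^{q})^{2} = \frac{1}{n}\sum_{k=1}^{n}[(y_{k} - x_{k}) - \delta_{k}]^{2}$ into the three sums appearing in (\ref{eqRealAverage}). The first sum is the hypothesis, $\frac{1}{n}\sum_{k}(y_{k}-x_{k})^{2} \leq \widetilde{\sigma}^{2}$. For the cross term $\frac{2}{n}\sum_{k}(y_{k}-x_{k})\delta_{k}$ I would factor out $\Delta_{\alpha,\,n} + \Delta_{\beta,\,n}$ using the bound on $|\delta_{k}|$ and then apply Jensen's inequality for the concave function $t \mapsto \sqrt{t}$, exactly as in (\ref{eqRealBounded}); this bounds its absolute value by $(\Delta_{\alpha,\,n} + \Delta_{\beta,\,n})\sqrt{\widetilde{\sigma}^{2}}$. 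The last sum $\frac{1}{n}\sum_{k}\delta_{k}^{2}$ is at most $(\Delta_{\alpha,\,n} + \Delta_{\beta,\,n})^{2}/4$, as in (\ref{eqRealBounded2}). Combining the three pieces gives
\begin{displaymath}
\frac{1}{n}\sum_{k=1}^{n}(y_{k}^{q} - x_{k}^{q})^{2}
\;\; \leq \;\;
\widetilde{\sigma}^{2}
\, + \,
(\Delta_{\alpha,\,n} + \Delta_{\beta,\,n})\sqrt{\widetilde{\sigma}^{2}}
\, + \,
\tfrac{1}{4}(\Delta_{\alpha,\,n} + \Delta_{\beta,\,n})^{2}.
\end{displaymath}

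To close the argument I would invoke $\alpha, \beta \in (0,1)$, which forces $\Delta_{\alpha,\,n} = n^{-\alpha} \to 0$ and $\Delta_{\beta,\,n} = n^{-\beta} \to 0$, so the last two terms on the right-hand side tend to $0$ as $n \to \infty$; one then chooses $n_{0} = n_{0}(\alpha, \beta, \widetilde{\sigma}^{2}, \epsilon)$ large enough that their sum is at most $\epsilon$ for all $n > n_{0}$, which is exactly the claim. I do not anticipate a genuine obstacle here, since this is the same computation already performed for Lemma~\ref{LemPDFExponent} and cited for Lemma~\ref{LemTypeConstraint}; the only point I would be careful to state is the uniformity over $({\bf x}, {\bf y})$, which holds because every estimate above uses only the hypothesis $\frac{1}{n}\sum_{k}(y_{k}-x_{k})^{2} \leq \widetilde{\sigma}^{2}$ together with the deterministic quantizer-error bounds on $|\delta_{k}|$, none of which depends on the particular vectors.
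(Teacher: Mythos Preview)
Your proposal is correct and is exactly what the paper intends: it cites the proof as ``the same as (\ref{eqRealAverage})--(\ref{eqRealBounded2}),'' and you have carried out precisely that computation with $\widetilde{\sigma}^{2}$ in place of $c_{XY}$. The only cosmetic difference is that you expand $(y_{k}^{q}-x_{k}^{q})^{2}=[(y_{k}-x_{k})-\delta_{k}]^{2}$ directly, so the cross term involves $(y_{k}-x_{k})$ rather than $(y_{k}^{q}-x_{k}^{q})$; this is in fact the natural direction here, since the hypothesis bounds the unquantized average, and it avoids any circularity.
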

The proof is the same as (\ref{eqRealAverage})-(\ref{eqRealBounded2}).


We use a Chernoff bound for the probability of an event when the method of types cannot be applied:

\begin{lemma}[Chernoff bound] \label{LemChernoff}
{\em Let $Z_{k} \sim {\cal N}(0, \sigma^{2})$, $k = 1, 2, .\,.\,.\, , \, n$, be $n$ independent random variables.
Then for $\,\widetilde{\sigma}^{2} \geq \sigma^{2}\,$
and $\,f(x) \,\triangleq\, \tfrac{1}{2}\big[x - 1 - \ln(x)\big]$:
}
\begin{displaymath}
\Pr\bigg\{
\frac{1}{n}\sum_{k \, = \, 1}^{n}Z_{k}^{2} \; \geq \; \widetilde{\sigma}^{2}
\bigg\}
\;\; \leq \;\; \exp\big\{-nf(\widetilde{\sigma}^{2}/\sigma^{2})\big\}.
\end{displaymath}
\end{lemma}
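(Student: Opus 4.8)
The plan is to treat $\sum_{k=1}^{n}Z_{k}^{2}$ as a scaled chi-squared random variable and apply the exponential Markov (Chernoff) bound, optimizing the tilting parameter at the end.

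First I would recall the elementary Gaussian moment generating function: for every $k$ and every real $t < 1/(2\sigma^{2})$,
\[
\mathbb{E}\big[e^{\,t Z_{k}^{2}}\big] \;=\; \frac{1}{\sqrt{1 - 2t\sigma^{2}}}.
\]
Fixing $t \in \big[0,\, 1/(2\sigma^{2})\big)$ and using the independence of the $Z_{k}$, Markov's inequality applied to $\exp\{t\sum_{k}Z_{k}^{2}\}$ gives
\[
\Pr\bigg\{\frac{1}{n}\sum_{k=1}^{n}Z_{k}^{2} \;\geq\; \widetilde{\sigma}^{2}\bigg\}
\;\leq\;
e^{-t n \widetilde{\sigma}^{2}}\prod_{k=1}^{n}\mathbb{E}\big[e^{\,t Z_{k}^{2}}\big]
\;=\;
\exp\Big\{-n\big[\,t\widetilde{\sigma}^{2} + \tfrac{1}{2}\ln(1 - 2t\sigma^{2})\,\big]\Big\}.
\]

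Next I would optimize the exponent over $t$. Substituting $u \triangleq 2t\sigma^{2} \in [0,1)$, the bracketed quantity becomes $\tfrac{u\widetilde{\sigma}^{2}}{2\sigma^{2}} + \tfrac{1}{2}\ln(1-u)$, which is concave in $u$ and maximized at $u^{\ast} = 1 - \sigma^{2}/\widetilde{\sigma}^{2}$. The hypothesis $\widetilde{\sigma}^{2}\geq\sigma^{2}$ is precisely what places $u^{\ast}$ in $[0,1)$, so the corresponding $t^{\ast}$ is admissible. Substituting $u^{\ast}$ and writing $x \triangleq \widetilde{\sigma}^{2}/\sigma^{2}\geq 1$, the exponent simplifies to $\tfrac{1}{2}\big[(x-1)-\ln x\big]=f(x)$, which is exactly the claimed inequality.

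I do not anticipate a genuine obstacle here; this is a textbook chi-squared tail estimate. The only points deserving a word of care are keeping $t$ strictly below $1/(2\sigma^{2})$ so that the moment generating function is finite (the optimizer $u^{\ast}$ automatically lies in the open interval), and observing that the assumption $\widetilde{\sigma}^{2}\geq\sigma^{2}$ both makes $u^{\ast}\geq 0$ and guarantees $f(\widetilde{\sigma}^{2}/\sigma^{2})\geq 0$, so that the bound is non-vacuous. If one prefers to avoid differentiation, the same $t^{\ast}$ can be guessed directly and the inequality verified by substitution.
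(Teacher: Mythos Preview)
Your argument is correct and is exactly the standard Chernoff/exponential-Markov derivation for the chi-squared tail. The paper itself states this lemma without proof, treating it as a known fact, so there is nothing further to compare.
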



\begin{lemma}[Correct-decoding exponent of 
mono-composition codebooks]
\label{LemConstComp2}
{\em 
Let ${P\mathstrut}_{\!X} \in {\cal P}_{n}({\cal X}_{n})$ be a type,
such that $\mathbb{E}_{{P\mathstrut}_{\!X}}\!\big[X^{2}\big] \leq c_{X}$,
and
let ${\cal C}$ be a codebook, 
such that the quantized versions (\ref{eqQuantizer}) of 
its codewords ${\bf x}(m)$, $m = 1, 2, .\,.\,.\, , \, M(n, R)$,
are all of the 
type ${P\mathstrut}_{\!X}$, that is:
\begin{displaymath} 
{\bf x}^{q}(m) \; = \; Q_{\alpha}({\bf x}(m))
\; = \;
\big(Q_{\alpha}(x_{1}(m)), Q_{\alpha}(x_{2}(m)), .\,.\,.\, , \,Q_{\alpha}(x_{n}(m))\big)
\; \in \;
T({P\mathstrut}_{\!X}), \;\;\; \forall m.
\end{displaymath}
Let $J \sim \text{Unif}\,\big(\{1, 2, .\,.\,.\, , \, M\}\big)$ be a random variable,
independent of the channel, and let ${\bf x}(J) \rightarrow {\bf Y}$ be the random channel-input and channel-output vectors, respectively.
Then for any $\epsilon > 0$ and $\,\widetilde{\sigma}^{2} \geq \sigma^{2}$
there exists $n_{0} = n_{0}(\alpha, \,\beta, \,\widetilde{\sigma}^{2}, \,\epsilon) \in \mathbb{N}$,
such that for any $n > n_{0}$
}
\begin{align}
- \frac{1}{n}\,{\log\mathstrut}_{\!b}\Pr \big\{
g({\bf Y}) = J \big\}
\;\; \geq \;\; &
\min \big\{E_{n}({P\mathstrut}_{\!X}, \,R, \,\widetilde{\sigma}, \, \epsilon), \; E(\widetilde{\sigma})\big\}
\, + \, o(1),
\label{eqMinofTwoExponents} \\
E(\widetilde{\sigma}) \;\; \triangleq \;\; & \frac{1}{2\ln b}
\big[\,\widetilde{\sigma}^{2}/\sigma^{2} \, - \, 1 \, - \, \ln\,(\widetilde{\sigma}^{2}/\sigma^{2})
\,\big],
\label{eqOutlierExp} \\
E_{n}({P\mathstrut}_{\!X}, \,R, \,\widetilde{\sigma}, \, \epsilon)
\;\; \triangleq \;\; &
\min_{\substack{\\{P\mathstrut}_{\!Y|X}:\\
{P\mathstrut}_{\!XY}\,\in \, {\cal P}_{n}({\cal X}_{n}\,\times\, {\cal Y}_{n}),
\\
\mathbb{E}[(Y-X)^{2}] \; \leq \; \widetilde{\sigma}^{2} \, + \, \epsilon
}}
\Big\{
D\big({P\mathstrut}_{\!Y|X}\,\|\, {W\mathstrut}_{\!n} \,|\,  {P\mathstrut}_{\!X}\big)
\, + \,
\big|\,
R \, - \, I\big({P\mathstrut}_{\!X}, {P\mathstrut}_{\!Y|X}\big)
\,\big|^{+}
\Big\}, 
\label{eqPartialExp}
\end{align}
{\em where $|\,t\,|^{+}\triangleq \max\,\{0, t\}$,
and
$o(1)\rightarrow 0$, as $n\rightarrow \infty$, 
depending only on 
$\alpha$, $\beta$, $c_{X}$, $\widetilde{\sigma}^{2} + \epsilon$, and $\sigma^{2}$.}
\end{lemma}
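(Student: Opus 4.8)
The plan is to split the probability of correct decoding according to whether the realized additive noise is ``typical'' and to treat the two parts by different tools. Writing ${\bf Z} = {\bf Y} - {\bf x}(J) \sim \mathcal{N}(0, \sigma^{2} I)$, which is independent of $J$, I would bound $\Pr\{g({\bf Y}) = J\}$ above by the sum of $\Pr\{g({\bf Y}) = J,\ \tfrac{1}{n}\sum_{k=1}^{n}Z_{k}^{2} \leq \widetilde{\sigma}^{2}\}$ and $\Pr\{\tfrac{1}{n}\sum_{k=1}^{n}Z_{k}^{2} > \widetilde{\sigma}^{2}\}$, and control the second summand via Lemma~\ref{LemChernoff} by $e^{-nf(\widetilde{\sigma}^{2}/\sigma^{2})} = b^{-nE(\widetilde{\sigma})}$, since $f(x)/\ln b = \tfrac{1}{2\ln b}(x - 1 - \ln x)$ reproduces (\ref{eqOutlierExp}).

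For the first (typical-noise) summand I would invoke Lemma~\ref{LemTypeNoise}: for $n$ large the typicality event forces $\tfrac{1}{n}\sum_{k=1}^{n}(Y_{k}^{q} - x_{k}^{q}(J))^{2} \leq \widetilde{\sigma}^{2} + \epsilon$, so $({\bf x}^{q}(J), {\bf Y}^{q})$ lies in some joint type class $T(P_{XY})$ whose $X$-marginal is the prescribed $P_{X}$ (by the mono-composition hypothesis) and with $\mathbb{E}_{P_{XY}}[(Y - X)^{2}] \leq \widetilde{\sigma}^{2} + \epsilon$; the triangle inequality also gives $\mathbb{E}_{P_{Y}}[Y^{2}] \leq (\sqrt{c_{X}} + \sqrt{\widetilde{\sigma}^{2} + \epsilon})^{2} =: c_{Y}$, exactly as in Lemma~\ref{LemConstComp}. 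A union bound over these joint types --- of which, by Lemma~\ref{LemNumofJTypes} applied with $c_{X}$ and $c_{Y}$, there are only $b^{\,n \cdot o(1)}$ --- reduces the task to bounding, for each admissible $P_{XY}$, the product $\Pr\{({\bf x}^{q}(J), {\bf Y}^{q}) \in T(P_{XY})\} \cdot \Pr\{g({\bf Y}) = J \mid ({\bf x}^{q}(J), {\bf Y}^{q}) \in T(P_{XY})\}$.

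The conditional factor is at most $\min\{1,\ b^{-n(\widetilde{R} - I(P_{XY}) + o(1))}\} \leq b^{-n(|R - I(P_{X}, P_{Y|X})|^{+} + o(1))}$ by Lemma~\ref{LemConvLem} together with $\widetilde{R} = \tfrac{1}{n}\log_{b}M(n, R) = R + O(1/n)$. For the prior factor I would run the chain of Lemma~\ref{LemConstComp} with all inequalities reversed: conditioned on $J = j$, $\Pr\{({\bf x}^{q}(j), {\bf Y}^{q}) \in T(P_{XY})\}$ equals the sum over ${\bf y}^{q} \in T(P_{Y|X}\,|\,{\bf x}^{q}(j))$ of the integrals of $w(\widetilde{\bf y}\,|\,{\bf x}(j))$ over the $\beta$-quantization cells $\{\widetilde{\bf y} : Q_{\beta}(\widetilde{\bf y}) = {\bf y}^{q}\}$, each of volume $\Delta_{\beta,\,n}^{n}$; Lemma~\ref{LemPDFExponent} with $c_{XY} = \widetilde{\sigma}^{2} + \epsilon$ bounds each integrand uniformly by $w({\bf y}^{q}\,|\,{\bf x}^{q}(j))\, b^{\,n \cdot o(1)} = b^{\,n(\mathbb{E}_{P_{XY}}[\log_{b} w(Y\,|\,X)] + o(1))}$, and combining the cell volume with the upper bound $|T(P_{Y|X}\,|\,{\bf x}^{q}(j))| \leq b^{\,n(H(Y|X) + o(1))}$ of (\ref{eqYgivenX}) in Lemma~\ref{LemCondTypeSize} and the identity $H(Y|X) + \mathbb{E}_{P_{XY}}[\log_{b} W_{n}(Y|X)] = -D(P_{Y|X} \,\|\, W_{n} \,|\, P_{X})$ yields $\Pr\{({\bf x}^{q}(J), {\bf Y}^{q}) \in T(P_{XY})\} \leq b^{-n(D(P_{Y|X} \,\|\, W_{n} \,|\, P_{X}) + o(1))}$, uniformly in $j$ and hence also unconditionally. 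Multiplying the two factors, maximizing over admissible $P_{Y|X}$ and reabsorbing the sub-exponential type count shows the typical-noise summand is $\leq b^{-n(E_{n}(P_{X}, R, \widetilde{\sigma}, \epsilon) - o(1))}$ with $E_{n}$ as in (\ref{eqPartialExp}); adding the Chernoff term and applying $-\tfrac{1}{n}\log_{b}(\cdot)$ to $2\, b^{-n(\min\{E_{n}(P_{X}, R, \widetilde{\sigma}, \epsilon),\ E(\widetilde{\sigma})\} - o(1))}$ gives (\ref{eqMinofTwoExponents}).

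The main difficulty is not any single inequality --- each is supplied by an earlier lemma --- but the bookkeeping of the error terms: one must check that every $o(1)$ collected (from Lemmas~\ref{LemPDFExponent}, \ref{LemTypeNoise}, \ref{LemCondTypeSize}, \ref{LemNumofJTypes} and from $\widetilde{R} - R$) tends to $0$ uniformly over the admissible joint types and over $R$, at a rate depending only on $\alpha, \beta, c_{X}, \widetilde{\sigma}^{2} + \epsilon, \sigma^{2}$. The one genuinely structural point is that the method of types is unavailable over the countably infinite output alphabet, so the truncation $\mathbb{E}[(Y - X)^{2}] \leq \widetilde{\sigma}^{2} + \epsilon$ --- legitimized by Lemma~\ref{LemTypeNoise} --- is exactly what keeps the relevant set of joint types sub-exponential, while the probability mass it discards must be absorbed separately, which is precisely the role of Lemma~\ref{LemChernoff}; the parameter $\widetilde{\sigma} \geq \sigma$ is kept free so that the two resulting exponents $E_{n}$ and $E(\widetilde{\sigma})$ can be balanced when this lemma is applied.
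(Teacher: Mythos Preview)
Your proposal is correct and follows essentially the same route as the paper: the same split into typical and atypical noise via $\|{\bf Y}-{\bf x}(J)\|^{2}\lessgtr n\widetilde{\sigma}^{2}$, the same Chernoff bound (Lemma~\ref{LemChernoff}) for the atypical part, and for the typical part the same passage to quantized vectors via Lemma~\ref{LemTypeNoise}, the same decomposition over joint types with Lemma~\ref{LemNumofJTypes} controlling their count, the same use of Lemma~\ref{LemConvLem} for the conditional factor, and the same upper bound on $\Pr\{({\bf x}^{q}(J),{\bf Y}^{q})\in T(P_{XY})\}$ obtained from Lemma~\ref{LemPDFExponent} together with (\ref{eqYgivenX}). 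Your identification of $c_{Y}=(\sqrt{c_{X}}+\sqrt{\widetilde{\sigma}^{2}+\epsilon})^{2}$ and your remarks on the uniformity of the $o(1)$ terms also match the paper's treatment.
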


\bigskip

\begin{proof}
We consider probabilities of two disjoint events:
\begin{align}
\Pr \big\{
g({\bf Y}) = J \big\}
\;\, & \leq \;\,
\Pr \big\{
g({\bf Y}) = J, \;
\|{\bf Y} - {\bf x}(J)\|^{2}
\leq n\widetilde{\sigma}^{2}
\big\} \; + \;
\Pr \big\{
\|{\bf Y} - {\bf x}(J)\|^{2}
>
n\widetilde{\sigma}^{2}
\big\}
\nonumber \\
& \leq \;\,
2 \max \Big\{
\Pr \big\{
g({\bf Y}) = J, \;
\|{\bf Y} - {\bf x}(J)\|^{2}
\leq n\widetilde{\sigma}^{2}
\big\}, \;\,
\Pr \big\{
\|{\bf Y} - {\bf x}(J)\|^{2}
\geq n\widetilde{\sigma}^{2}
\big\}
\Big\}.
\label{eqTwoEvents}
\end{align}
For the first term in the maximum we obtain:
\begin{align}
&
\;\,\,
\Pr \big\{
g({\bf Y}) = J, \;
\|{\bf Y} - {\bf x}(J)\|^{2}
\leq n\widetilde{\sigma}^{2}
\big\}
\;\; \overset{a}{\leq} \;\;
\Pr \big\{
g({\bf Y}) = J, \;
\|{\bf Y}^{q} - {\bf x}^{q}(J)\|^{2}
\leq n(\widetilde{\sigma}^{2} + \epsilon)
\big\}
\nonumber \\
&
\;\;\;\;\;\;\;\;\;\;\;\;\;\;\;\;\;\;\;\;\;\;\;\;\;\;\;\;\;\;\;\;\;\;\;\;\;\;\,
\overset{b}{=}
\sum_{\substack{{P\mathstrut}_{\!Y|X}:\\
{P\mathstrut}_{\!XY}\,\in \, {\cal P}_{n}({\cal X}_{n}\,\times\, {\cal Y}_{n}),
\\
\mathbb{E}[(Y-X)^{2}] \; \leq \; \widetilde{\sigma}^{2} \, + \, \epsilon
}}
\Pr \big\{
g({\bf Y}) = J, \;
\big({\bf x}^{q}(J), \,{\bf Y}^{q}\big) \, \in \, T({P\mathstrut}_{\!XY})
\big\}
\nonumber \\
& \overset{c}{\leq} \;\;
\big|\,  {\cal P}_{n}\big({\cal X}_{n}\times {\cal Y}_{n}, \,c_{X}, \,c_{Y}\big) \,\big|
\max_{\substack{\\{P\mathstrut}_{\!Y|X}:\\
{P\mathstrut}_{\!XY}\,\in \, {\cal P}_{n}({\cal X}_{n}\,\times\, {\cal Y}_{n}),
\\
\mathbb{E}[(Y-X)^{2}] \; \leq \; \widetilde{\sigma}^{2} \, + \, \epsilon
}}
\Pr \big\{
g({\bf Y}) = J, \;
\big({\bf x}^{q}(J), \,{\bf Y}^{q}\big) \, \in \, T({P\mathstrut}_{\!XY})
\big\},
\label{eqMaxoverTypes}
\end{align}
where ($a$) holds
with ${\bf Y}^{q} = Q_{\beta}({\bf Y})\in {\cal Y}_{n}^{n}\,$
for all $n > n_{0}(\alpha, \,\beta, \,\widetilde{\sigma}^{2}, \,\epsilon)$ of Lemma~\ref{LemTypeNoise},
($b$) holds because ${\bf x}^{q}(J) \in T({P\mathstrut}_{\!X})$,
in ($c$) we use the notation 
${\cal P}_{n}\big({\cal X}_{n}\times {\cal Y}_{n}, \,c_{X}, \,c_{Y}\big)$ for the set of all
the joint types ${P\mathstrut}_{\!XY} \in {\cal P}_{n}({\cal X}_{n}\times {\cal Y}_{n})$
satisfying both $\mathbb{E}_{{P\mathstrut}_{\!X}}\!\big[X^{2}\big] \leq c_{X}$
and
$\mathbb{E}_{{P\mathstrut}_{\!Y}}\!\big[Y^{2}\big] \leq \big(\sqrt{c_{X}} + \sqrt{\widetilde{\sigma}^{2} + \epsilon}\big)^{2} \triangleq c_{Y}$.
By the same steps as in (\ref{eqConTypeExp}) we further obtain
\begin{align}
\Pr \Big\{
\big({\bf x}^{q}(J), \,{\bf Y}^{q}\big) \, \in \, T({P\mathstrut}_{\!XY})
\Big\}
\;\; & \leq \;\;
b^{\,-n\big(D({P\mathstrut}_{\!Y|X}\,\|\, {W\mathstrut}_{\!n} \,|\,  {P\mathstrut}_{\!X}) \; + \; o(1)\big)},
\label{eqPrior2}
\end{align}
while Lemma~\ref{LemConvLem} gives
\begin{equation} \label{eqPositivePart}
\Pr \Big\{
g({\bf Y}) = J \; \big| \;
\big({\bf x}^{q}(J), \,{\bf Y}^{q}\big) \, \in \, T({P\mathstrut}_{\!XY})
\Big\}
\;\; \leq \;\;
b^{\,-n\big(|\, 
R \; - \; I({P\mathstrut}_{\!XY})\,
|{\mathstrut}^{+}
\, + \; o(1)\big)}.
\end{equation}
Now by Lemma~\ref{LemNumofJTypes} for the number of joint types and (\ref{eqMaxoverTypes})-(\ref{eqPositivePart})
we obtain
\begin{displaymath}
\Pr \big\{
g({\bf Y}) = J, \;
\|{\bf Y} - {\bf x}(J)\|^{2}
\leq n\widetilde{\sigma}^{2}
\big\}
\;\; 
\leq
\;\;
b^{\,-n\big(
E_{n}({P\mathstrut}_{\!X}, \,R, \,\widetilde{\sigma}, \, \epsilon) \, + \, o(1) \big)
},
\end{displaymath}
where $E_{n}({P\mathstrut}_{\!X}, \,R, \,\widetilde{\sigma}, \, \epsilon)$ denotes the expression (\ref{eqPartialExp}).
Applying Lemma~\ref{LemChernoff} to the second term in the maximum of (\ref{eqTwoEvents})
we obtain 
(\ref{eqMinofTwoExponents})-(\ref{eqPartialExp}).
\end{proof}


\begin{lemma}[Correct-decoding exponent] \label{LemCorDec}
{\em
Let $J \sim \text{Unif}\,\big(\{1, 2, .\,.\,.\, , \, M\}\big)$ be a random variable,
independent of the channel, and let ${\bf x}(J) \rightarrow {\bf Y}$ be the random channel-input and channel-output vectors, respectively.
Then for any $\widetilde{\epsilon}, \,\epsilon > 0$ and $\,\widetilde{\sigma}^{2} \geq \sigma^{2}$
there exists $n_{0} = n_{0}(\alpha, \,\beta, \,s^{2}, \,\widetilde{\sigma}^{2}, \, \widetilde{\epsilon}, \,\epsilon) \in \mathbb{N}$,
such that for any $n > n_{0}$}
\begin{align}
- \frac{1}{n}\,{\log\mathstrut}_{\!b}\Pr \big\{
g({\bf Y}) = J \big\}
\;\; \geq \;\; &
\min \big\{E_{n}(R, \,\widetilde{\sigma}, \, \widetilde{\epsilon}, \, \epsilon), \; E(\widetilde{\sigma})\big\}
\, + \, o(1),
\label{eqMinofTwoExponents2} \\
E_{n}(R, \,\widetilde{\sigma}, \, \widetilde{\epsilon}, \, \epsilon)
\;\; \triangleq \;\; &
\min_{\substack{\\{P\mathstrut}_{\!X{\color{white}|}}\!\!:\\
{P\mathstrut}_{\!X}\,\in \, {\cal P}_{n}({\cal X}_{n}),
\\
\mathbb{E}[X^{2}] \; \leq \; s^{2}\, + \, \epsilon}}
E_{n}({P\mathstrut}_{\!X}, \,R, \,\widetilde{\sigma}, \,\widetilde{\epsilon}\,)
\label{eqPartialExp2}
\end{align}
{\em where $E(\widetilde{\sigma})$ and $E_{n}({P\mathstrut}_{\!X}, \,R, \,\widetilde{\sigma}, \, \widetilde{\epsilon}\,)$
are as defined in (\ref{eqOutlierExp}) and (\ref{eqPartialExp}), respectively,
and
$o(1)\rightarrow 0$, as $n\rightarrow \infty$, 
depending only on the parameters
$\alpha$, $\beta$, $s^{2} + \epsilon$, $\widetilde{\sigma}^{2} + \widetilde{\epsilon}$, and $\sigma^{2}$.}
\end{lemma}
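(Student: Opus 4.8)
The plan is to reduce the general-codebook bound to the mono-composition bound of Lemma~\ref{LemConstComp2}, exactly paralleling the way Lemma~\ref{LemAllCodebooks} is obtained from Lemma~\ref{LemConstComp}. First I would classify the $M = M(n,R)$ messages by the type ${P\mathstrut}_{\!X} \in {\cal P}_{n}({\cal X}_{n})$ of the quantized codeword ${\bf x}^{q}(j) = Q_{\alpha}({\bf x}(j))$. By Lemma~\ref{LemTypeConstraint}, for every $n$ exceeding some $n_{0}(\alpha, s^{2}, \epsilon)$ each such type satisfies $\mathbb{E}_{{P\mathstrut}_{\!X}}[X^{2}] \leq s^{2} + \epsilon$, so only types in ${\cal P}_{n}({\cal X}_{n}, s^{2} + \epsilon)$ occur. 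Writing $M({P\mathstrut}_{\!X})$ for the number of messages $j$ with ${\bf x}^{q}(j) \in T({P\mathstrut}_{\!X})$ and letting $\widetilde{g}_{{P\mathstrut}_{\!X}}$ be the decoder obtained by restricting $g$ to the corresponding mono-composition subcodebook ${\cal C}_{{P\mathstrut}_{\!X}}$ — still a valid decoder of the form (\ref{eqDec}), since its decision regions form a subfamily of disjoint open regions — conditioning on the sent message gives
\[
\Pr\{g({\bf Y}) = J\} \;\; = \;\; \sum_{{P\mathstrut}_{\!X}} \frac{M({P\mathstrut}_{\!X})}{M}\, \Pr\big\{\widetilde{g}_{{P\mathstrut}_{\!X}}(\widetilde{\bf Y}) = \widetilde{J}\big\},
\]
where $\widetilde{J} \sim \text{Unif}\big(\{1, \dots, M({P\mathstrut}_{\!X})\}\big)$ and $\widetilde{\bf Y}$ is the output of ${\cal C}_{{P\mathstrut}_{\!X}}$ through the channel, whose statistics are unchanged because it is memoryless.

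Next I would bound each summand by Lemma~\ref{LemConstComp2} applied to $({\cal C}_{{P\mathstrut}_{\!X}}, \widetilde{g}_{{P\mathstrut}_{\!X}})$ with slack parameter $\widetilde{\epsilon}$ and $c_{X} = s^{2} + \epsilon$; here, exactly as in the proof of Lemma~\ref{LemAllCodebooks}, the nominal $R$ is replaced by the effective rate $\widetilde{R}_{{P\mathstrut}_{\!X}} \triangleq \frac{1}{n}\,{\log\mathstrut}_{\!b}\,M({P\mathstrut}_{\!X}) \leq R$ of the subcodebook. This gives $\Pr\{\widetilde{g}_{{P\mathstrut}_{\!X}}(\widetilde{\bf Y}) = \widetilde{J}\} \leq b^{-n(\min\{E_{n}({P\mathstrut}_{\!X}, \widetilde{R}_{{P\mathstrut}_{\!X}}, \widetilde{\sigma}, \widetilde{\epsilon}),\, E(\widetilde{\sigma})\} + o(1))}$.

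The crux is then that $\widetilde{R}_{{P\mathstrut}_{\!X}}$ can be much smaller than $R$, which makes the per-type exponent $E_{n}({P\mathstrut}_{\!X}, \widetilde{R}_{{P\mathstrut}_{\!X}}, \cdot)$ weaker than the target $E_{n}({P\mathstrut}_{\!X}, R, \cdot)$; this deficit must be recovered from the prior factor $M({P\mathstrut}_{\!X})/M \leq 2\,b^{-n(R - \widetilde{R}_{{P\mathstrut}_{\!X}})}$ (the constant $2$ absorbing the floor in $M = \lfloor b^{nR}\rfloor$). Since $R' \mapsto |R' - I({P\mathstrut}_{\!XY})|^{+}$ is nondecreasing and $1$-Lipschitz, minimizing over ${P\mathstrut}_{\!Y|X}$ yields $E_{n}({P\mathstrut}_{\!X}, R, \widetilde{\sigma}, \widetilde{\epsilon}) - E_{n}({P\mathstrut}_{\!X}, \widetilde{R}_{{P\mathstrut}_{\!X}}, \widetilde{\sigma}, \widetilde{\epsilon}) \leq R - \widetilde{R}_{{P\mathstrut}_{\!X}}$, hence $\frac{M({P\mathstrut}_{\!X})}{M} b^{-n E_{n}({P\mathstrut}_{\!X}, \widetilde{R}_{{P\mathstrut}_{\!X}}, \widetilde{\sigma}, \widetilde{\epsilon})} \leq b^{-n(E_{n}({P\mathstrut}_{\!X}, R, \widetilde{\sigma}, \widetilde{\epsilon}) - o(1))}$, while trivially $\frac{M({P\mathstrut}_{\!X})}{M} b^{-n E(\widetilde{\sigma})} \leq b^{-n E(\widetilde{\sigma})}$. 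Taking the larger of these two estimates — one for each branch of the $\min$ — and noting that ${P\mathstrut}_{\!X}$ is feasible in the minimization (\ref{eqPartialExp2}) defining $E_{n}(R, \widetilde{\sigma}, \widetilde{\epsilon}, \epsilon)$, I get, for every admissible ${P\mathstrut}_{\!X}$,
\[
\frac{M({P\mathstrut}_{\!X})}{M}\,\Pr\big\{\widetilde{g}_{{P\mathstrut}_{\!X}}(\widetilde{\bf Y}) = \widetilde{J}\big\} \;\; \leq \;\; b^{-n(\min\{E_{n}(R, \widetilde{\sigma}, \widetilde{\epsilon}, \epsilon),\, E(\widetilde{\sigma})\} + o(1))}.
\]

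Finally, by Lemma~\ref{LemNumofTypes} (with $c_{X} = s^{2} + \epsilon$) the number of admissible marginal types is at most $\big((n+1)c\big)^{\tilde{c}\, n^{(1 + 2\alpha)/3}}$, which is sub-exponential in $n$, so summing the last display over ${P\mathstrut}_{\!X}$ in the decomposition of $\Pr\{g({\bf Y}) = J\}$ inflates the bound only by a $b^{\,n \cdot o(1)}$ factor; taking $-\frac{1}{n}\,{\log\mathstrut}_{\!b}$ then gives (\ref{eqMinofTwoExponents2})-(\ref{eqPartialExp2}). The admissible range of $n$ is the maximum of the thresholds from Lemmas~\ref{LemTypeConstraint} and~\ref{LemConstComp2}, and the $o(1)$ terms inherit dependence only on $\alpha$, $\beta$, $s^{2} + \epsilon$, $\widetilde{\sigma}^{2} + \widetilde{\epsilon}$, and $\sigma^{2}$, as claimed. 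I expect the main obstacle to be precisely the rate-mismatch bookkeeping of the third paragraph — verifying that the prior probability $M({P\mathstrut}_{\!X})/M$ offsets the drop from $R$ to $\widetilde{R}_{{P\mathstrut}_{\!X}}$ in \emph{both} branches of the $\min$ simultaneously, so that what survives is exactly the minimization over ${P\mathstrut}_{\!X}$ in $E_{n}(R, \widetilde{\sigma}, \widetilde{\epsilon}, \epsilon)$; the remaining manipulations are routine once Lemmas~\ref{LemTypeConstraint}, \ref{LemConstComp2}, and~\ref{LemNumofTypes} are in hand.
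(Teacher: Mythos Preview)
Your proposal is correct, but it takes a slightly different route from the paper at the ``crux'' step. The paper sidesteps the rate-mismatch bookkeeping entirely: after bounding the sum over types by the number of types times the maximum, it takes the maximizing type ${P\mathstrut}_{\!X}^{*}$ and \emph{pads} the corresponding subcodebook back to exactly $M$ codewords by appending $M - M({P\mathstrut}_{\!X}^{*})$ dummy codewords whose quantized versions lie in $T({P\mathstrut}_{\!X}^{*})$ but whose decision regions are empty. Since dummies can never be decoded correctly, the correct-decoding probability of this padded code equals $\tfrac{1}{M}\sum_{j:\,{\bf x}^{q}(j)\in T({P\mathstrut}_{\!X}^{*})}\Pr\{{\bf Y}\in{\cal D}_{j}\mid J=j\}$, and Lemma~\ref{LemConstComp2} applies directly at the \emph{original} rate $R$, so $E_{n}({P\mathstrut}_{\!X}^{*}, R, \widetilde{\sigma}, \widetilde{\epsilon})$ appears without any Lipschitz adjustment.

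Your approach instead keeps each subcodebook at its true size $M({P\mathstrut}_{\!X})$, applies Lemma~\ref{LemConstComp2} at the reduced rate $\widetilde{R}_{{P\mathstrut}_{\!X}}$, and then recovers the gap $R - \widetilde{R}_{{P\mathstrut}_{\!X}}$ from the prior $M({P\mathstrut}_{\!X})/M$ via the $1$-Lipschitz property of $R'\mapsto |R'-I|^{+}$. This is a legitimate alternative; what it costs is the extra inequality $E_{n}({P\mathstrut}_{\!X}, R, \cdot) \leq E_{n}({P\mathstrut}_{\!X}, \widetilde{R}_{{P\mathstrut}_{\!X}}, \cdot) + (R - \widetilde{R}_{{P\mathstrut}_{\!X}})$, which you have to verify holds after the minimization over ${P\mathstrut}_{\!Y|X}$ (it does, by evaluating at the minimizer for $\widetilde{R}_{{P\mathstrut}_{\!X}}$). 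The paper's padding trick is cleaner and is worth knowing --- it is the same device used in \cite[Lemma~5]{DueckKorner79} --- but your argument is sound as written.
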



\begin{proof}
Similarly as in \cite[Lemma~5]{DueckKorner79}:
\begin{align}
\Pr \big\{
g({\bf Y}) = J \big\} \; = \;
\frac{1}{M}
\sum_{\substack{
{P\mathstrut}_{\!X}\,\in \, {\cal P}_{n}({\cal X}_{n})}}
&
\sum_{\substack{1\,\leq\,j\,\leq\,M \, : \\
{\bf x}^{q}(j) \, \in \, T({P\mathstrut}_{\!X})
}} \Pr \big\{{\bf Y} \in {\cal D}{\mathstrut}_{J} \; | \; J = j\big\}
\nonumber \\
\overset{a}{\leq} \;
\big|\,  {\cal P}_{n}\big({\cal X}_{n}, \,s^{2} + \epsilon\big) \,\big| \cdot
\frac{1}{M}
\max_{\substack{
\\{P\mathstrut}_{\!X}\,\in \, {\cal P}_{n}({\cal X}_{n})}}
&
\sum_{\substack{1\,\leq\,j\,\leq\,M \, : \\
{\bf x}^{q}(j) \, \in \, T({P\mathstrut}_{\!X})
}} \Pr \big\{{\bf Y} \in {\cal D}{\mathstrut}_{J} \; | \; J = j\big\}
\label{eqMaximumoverTypes} \\
\overset{b}{\leq} \;
b^{\,n \,\cdot \, o(1)}\,\frac{1}{M}
&
\sum_{\substack{1\,\leq\,j\,\leq\,M \, : \\
{\bf x}^{q}(j) \, \in \, T({P\mathstrut}_{\!X}^{*})
}} \Pr \big\{{\bf Y} \in {\cal D}{\mathstrut}_{J} \; | \; J = j\big\}
\; \overset{c}{=} \;
b^{\,n \,\cdot \, o(1)}
\Pr \big\{
\widetilde{g}(\widetilde{\bf Y}) = J \big\},
\label{eqCorDecCC}
\end{align}
where:

($a$) holds for $n > n_{0}(\alpha, s^{2}, \epsilon)$ of Lemma~\ref{LemTypeConstraint};

($b$) follows by Lemma~\ref{LemNumofTypes} 
with $c_{X} = s^{2} + \epsilon$,
while ${P\mathstrut}_{\!X}^{*} \in {\cal P}_{n}({\cal X}_{n})$
is a maximizer of (\ref{eqMaximumoverTypes});

($c$) holds for
the channel input/output random vectors $\widetilde{\bf x}(J) \rightarrow \widetilde{\bf Y}$ and
a code $(\widetilde{\cal C}, \widetilde{g})$, such that
$\widetilde{\bf x}(j) = {\bf x}(m_{j})$ with $\widetilde{\cal D}{\mathstrut}_{j} = {\cal D}{\mathstrut}_{m_{j}}$ for
$1 \leq j \leq M({P\mathstrut}_{\!X}^{*})$,
and $\widetilde{\bf x}^{q}(j) \in T({P\mathstrut}_{\!X}^{*})$ with $\widetilde{\cal D}{\mathstrut}_{j} = \varnothing$
for $M({P\mathstrut}_{\!X}^{*}) < j \leq M$, where
$m_{1} < .\,.\,. < m_{M({P\mathstrut}_{\!X}^{*})}$ are the indices of the codewords
in the original codebook ${\cal C}$
with their quantized versions in $T({P\mathstrut}_{\!X}^{*})$.
Since all the codewords of $\widetilde{\cal C}$ have their quantized versions in $T({P\mathstrut}_{\!X}^{*})$,
we can apply Lemma~\ref{LemConstComp2} with $c_{X} = s^{2} + \epsilon$ for the RHS of (\ref{eqCorDecCC}) to obtain (\ref{eqMinofTwoExponents2}), (\ref{eqPartialExp2}).
\end{proof}


\section{PDF to type}\label{PDFtypePDF}


Lemmas~\ref{LemPDFtoT} and~\ref{LemTtoPDF} of this section relate between minimums over types and over PDF's.
The next Lemma~\ref{LemQuant}, which has a laborious proof, is required only in the proof of 
Lemma~\ref{LemPDFtoT}, used for Theorem~\ref{thmErrorExp}.

\bigskip

\begin{lemma}[Quantization of PDF] \label{LemQuant}
{\em Let ${\cal X}_{n}$ be an alphabet defined as in (\ref{eqAlphabets}), (\ref{eqDelta})
with $\alpha \in \big(0, \tfrac{1}{4}\big)$.
Let ${P\mathstrut}_{\!X} \in {\cal P}_{n}({\cal X}_{n})$ be a type and ${p\mathstrut}_{Y|X}(\,\cdot\,|\,x) \in {\cal L}$, $\forall x \in {\cal X}_{n}\,$,
be a collection of functions from (\ref{eqLipschitz}), such that $\mathbb{E}_{{P\mathstrut}_{\!X}}\!\big[X^{2}\big] \leq c_{X}$,
 $\mathbb{E}_{{P\mathstrut}_{\!X}{p\mathstrut}_{Y|X}}\!\big[Y^{2}\big]\leq c_{Y}$,
and $\mathbb{E}_{{P\mathstrut}_{\!X}{p\mathstrut}_{Y|X}}\!\big[(Y-X)^{2}\big]\leq c_{XY}$.
Then
for any alphabet ${\cal Y}_{n}$ defined as in (\ref{eqAlphabets}), (\ref{eqDelta})
with $\tfrac{1}{3} + \tfrac{2}{3}\alpha < \beta < \tfrac{2}{3} - \tfrac{2}{3}\alpha$,
there exists a joint type ${P\mathstrut}_{\!XY} \in {\cal P}_{n}({\cal X}_{n}\times {\cal Y}_{n})$ with the marginal type ${P\mathstrut}_{\!X}$,
such that}
\begin{align}
\sum_{x\,\in\,{\cal X}_{n}}{P\mathstrut}_{\!X}(x)\int_{\mathbb{R}}{p\mathstrut}_{Y|X}(y\,|\,x)
{\log\mathstrut}_{\!b}\,{p\mathstrut}_{Y|X}(y\,|\,x)dy
\;\; & \geq \;\;
\sum_{\substack{x\,\in\,{\cal X}_{n}\\
y\,\in\,{\cal Y}_{n}}}
{P\mathstrut}_{\!XY}(x, y){\log\mathstrut}_{\!b}\,\frac{{P\mathstrut}_{\!Y|X}(y\,|\,x)}{\Delta_{\beta,\,n}}
\, + \, o(1),
\label{eqCondEntropy} \\
\sum_{x\,\in\,{\cal X}_{n}}{P\mathstrut}_{\!X}(x)\int_{\mathbb{R}}{p\mathstrut}_{Y|X}(y\,|\,x)
(y - x)^{2}dy
\;\; & \geq \;\;
\sum_{\substack{x\,\in\,{\cal X}_{n}\\
y\,\in\,{\cal Y}_{n}}}
{P\mathstrut}_{\!XY}(x, y)(y - x)^{2} \, + \, o(1),
\label{eqLogGaussian} \\
\int_{\mathbb{R}}{p\mathstrut}_{Y}(y)
{\log\mathstrut}_{\!b}\,{p\mathstrut}_{Y}(y)dy
\;\; & \leq \;\;
\sum_{y\,\in\,{\cal Y}_{n}}
{P\mathstrut}_{\!Y}(y){\log\mathstrut}_{\!b}\,\frac{{P\mathstrut}_{\!Y}(y)}{\Delta_{\beta,\,n}}
\, + \, o(1),
\label{eqMargEntropy}
\end{align}
{\em where ${p\mathstrut}_{Y}(y) = \sum_{\,x\,\in\,{\cal X}_{n}}{P\mathstrut}_{\!X}(x){p\mathstrut}_{Y|X}(y\,|\,x)$, $\forall y \in \mathbb{R}$,
and $o(1)\rightarrow 0$, as $n\rightarrow \infty$, and depends only on the parameters $\alpha$, $\beta$, $c_{X}$, $c_{Y}$, $c_{XY}$,
and $\sigma^2$ (through (\ref{eqLipschitz})).}
\end{lemma}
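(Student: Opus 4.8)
The plan is to construct $P_{XY}$ in two stages: a \emph{quantization} stage, in which the density $p_{XY}\triangleq P_X\,p_{Y|X}$ is replaced by an honest finitely-supported distribution $\bar P_{XY}$ on $\mathcal X_n\times\mathcal Y_n$ with $X$-marginal exactly $P_X$, obtained by integrating each $p_{Y|X}(\cdot\,|\,x)$ over the $\Delta_{\beta,n}$-cells of $\mathcal Y_n$; and a \emph{rounding} stage, in which $\bar P_{XY}$ is turned into a type of denominator $n$, still with marginal $P_X$, and the three target quantities are shown to move by $o(1)$. It helps to note that the three relations point in different directions. Writing $\bar p_i$ for the cell-average of $p_Y$ over the $i$-th $\mathcal Y_n$-cell $C_i$ (and $\bar p_i^{(x)}$ for that of $p_{Y|X}(\cdot\,|\,x)$), so that $\bar P_{Y|X}(y\,|\,x)/\Delta_{\beta,n}$ and $\bar P_Y(y)/\Delta_{\beta,n}$ equal these cell-averages up to the truncation and tail corrections introduced below, one sees that: relation \eqref{eqCondEntropy} is, cell by cell, the Jensen inequality $\int_{C_i}p_{Y|X}\log_b p_{Y|X}\,dy\ \ge\ \Delta_{\beta,n}\,\bar p_i^{(x)}\log_b\bar p_i^{(x)}$ for the convex map $t\mapsto t\log_b t$, hence holds with a nonnegative slack; relation \eqref{eqMargEntropy} asks, in the opposite direction, that the aggregate \emph{Jensen gap} $G\triangleq\sum_i\big(\int_{C_i}p_Y\log_b p_Y\,dy-\Delta_{\beta,n}\,\bar p_i\log_b\bar p_i\big)$ be $o(1)$; and relation \eqref{eqLogGaussian} is elementary, replacing $(\tilde y-x)^2$ by its cell-centre value perturbing $\mathbb{E}[(Y-X)^2\,|\,X=x]$ by $O\!\big(\Delta_{\beta,n}(\Delta_{\beta,n}+\sqrt{c_{XY}})\big)$.

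For the quantization stage, fix a slowly growing cutoff $\tau_n=n^{\eta}$ with $\eta>0$ small, and for each $x\in\mathcal S(P_X)$ set $\bar P_{Y|X}(y\,|\,x)\triangleq\int_{C(y)}p_{Y|X}(\tilde y\,|\,x)\,d\tilde y$ on the cells with $|y-x|\le\tau_n$, adding the residual tail mass $q_x\triangleq\Pr\{|Y-x|>\tau_n\,|\,X=x\}$ to the single $\mathcal Y_n$-cell containing $x$, and $\bar P_{XY}(x,y)\triangleq P_X(x)\bar P_{Y|X}(y\,|\,x)$; this is a finitely-supported distribution with $X$-marginal $P_X$. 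For \eqref{eqCondEntropy} the per-cell Jensen inequality already yields ``$\ge$'', so only the truncation and the tail dump cost anything, and by concavity of $t\mapsto t\log_b(1/t)$ their cost is at most $\sum_x P_X(x)\,q_x\log_b(1/q_x)\le\bar q\log_b(1/\bar q)$, where $\bar q\triangleq\sum_x P_X(x)q_x\le c_{XY}/\tau_n^2=o(1)$. For \eqref{eqLogGaussian} the averaged cell-centre error is $O(\Delta_{\beta,n}(\Delta_{\beta,n}+\sqrt{c_{XY}}))$, truncation deletes only mass located at large $(y-x)^2$ (the favourable direction), and the dump adds at most $q_x\Delta_{\beta,n}^2$ per column. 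For \eqref{eqMargEntropy} it remains to bound $G$: since $\mathcal L$ is convex, $p_Y=\sum_x P_X(x)p_{Y|X}(\cdot\,|\,x)$ is again $K$-Lipschitz, which is precisely the hypothesis the $G$-estimate requires.

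For the rounding stage, put $n_x\triangleq nP_X(x)\in\mathbb{N}$ and apply the largest-remainder method to $n_x\,\bar P_{Y|X}(\cdot\,|\,x)$, obtaining $P_{Y|X}(\cdot\,|\,x)$ with $\|P_{Y|X}(\cdot\,|\,x)-\bar P_{Y|X}(\cdot\,|\,x)\|_\infty\le 1/n_x$ and $\mathcal S(P_{Y|X}(\cdot\,|\,x))\subseteq\mathcal S(\bar P_{Y|X}(\cdot\,|\,x))$; then $P_{XY}(x,y)\triangleq P_X(x)P_{Y|X}(y\,|\,x)$ is a joint type with marginal exactly $P_X$. Each of the three quantities changes by $o(1)$ in passing from $\bar P_{XY}$ to $P_{XY}$ because of the cancellation $P_X(x)\cdot(\text{per-column }\ell_1\text{-error})\le \ell_x/n$, where $\ell_x\triangleq|\mathcal S(\bar P_{Y|X}(\cdot\,|\,x))|$: for the second moment, $\sum_{x,y}|P_{XY}-\bar P_{XY}|(y-x)^2\le\tfrac{\tau_n^2}{n}\,|\mathcal S(P_{XY})|=O\!\big(\tau_n^2\,n^{(\alpha+\beta-1)/2}\big)$, using $(y-x)^2\le\tau_n^2$ on the truncated support and Lemma~\ref{LemSupportXY}; for the two entropies, Fannes-type continuity gives a change of order $\sum_x P_X(x)\|P_{Y|X}(\cdot\,|\,x)-\bar P_{Y|X}(\cdot\,|\,x)\|_1\log_b|\mathcal S(P_{XY})|=O\!\big(\tfrac{|\mathcal S(P_{XY})|}{n}\log n\big)$, with the matching $Y$-marginal estimate via Lemma~\ref{LemSupportX}, and the $\log_b(1/\Delta_{\beta,n})$ shift contributes only a further term proportional to the total $\ell_1$-change. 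Choosing $\eta$, the truncation scale, and the Jensen-gap threshold below as suitable powers of $n$ makes every error a negative power of $n$; the admissible ranges $\alpha\in(0,\tfrac14)$ and $\tfrac13+\tfrac23\alpha<\beta<\tfrac23-\tfrac23\alpha$ are precisely those for which the support bounds $|\mathcal S(P_{XY})|=O(n^{(1+\alpha+\beta)/2})$ and $|\mathcal S(P_Y)|=O(n^{(1+2\beta)/3})$ of Section~\ref{MOT} beat $n$ after the relevant divisions, and for which the $G$-estimate below closes.

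The principal difficulty is the bound $G=o(1)$, because $t\mapsto t\log_b t$ has curvature $\tfrac1{t\ln b}$ that blows up as $t\downarrow 0$ and $p_Y$ may vanish. I would split the $\mathcal Y_n$-cells according to whether $\bar p_i$ exceeds a threshold $\delta_n=n^{-\zeta}\to 0$. On the heavy cells the second-order Taylor bound gives $|G_i|\lesssim\tfrac1{\delta_n}\big(\mathrm{osc}_{C_i}p_Y\big)^2\,\Delta_{\beta,n}\lesssim\tfrac{K^2}{\delta_n}\,\Delta_{\beta,n}^3$, and since $\sum_i\Delta_{\beta,n}\bar p_i=1$ there are at most $1/(\delta_n\Delta_{\beta,n})$ of them, so their total is $O(\Delta_{\beta,n}^2/\delta_n^2)$. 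On the light cells the $K$-Lipschitz bound forces $p_Y\le\delta_n+K\Delta_{\beta,n}$ throughout the cell, whence $|G_i|\lesssim\Delta_{\beta,n}(\delta_n+K\Delta_{\beta,n})\log_b\tfrac1{\delta_n+K\Delta_{\beta,n}}$, and multiplying by the number of occupied cells --- bounded by the $Y$-support bound $O(n^{(1+2\beta)/3})$ --- controls their total. Balancing $\zeta$ against $\beta$ and against $(1+2\beta)/3$, together with $\alpha+\beta<1$ coming from the rounding-stage estimates, is exactly what pins down the stated windows for $\alpha$ and $\beta$. Everything else is routine, if laborious, bookkeeping of the error terms --- which is presumably why the paper defers it to the appendix.
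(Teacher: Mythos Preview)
Your approach is genuinely different from the paper's and is, in outline, sound: you build the type by cell-averaging $p_{Y|X}$ and then applying largest-remainder rounding per column, whereas the paper quantizes the \emph{values} of the density $p_{XY}$ to the lattice $\Delta_{\gamma,n}\mathbb{Z}$ (crucially exploiting $\Delta_{\alpha,n}\Delta_{\beta,n}\Delta_{\gamma,n}=1/n$, so that the resulting masses are automatically multiples of $1/n$) and dumps the deficit on the ``diagonal'' $y=Q_\beta(x)$. The paper's route avoids a separate rounding stage; yours is more pedestrian but perfectly reasonable, and the Jensen observation for \eqref{eqCondEntropy} is the right one.

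There is, however, a real gap in your light-cell estimate for the Jensen gap in \eqref{eqMargEntropy}. You bound the number of occupied light cells by the $Y$-support bound $O(n^{(1+2\beta)/3})$ of Lemma~\ref{LemSupportX}, but that lemma applies only to \emph{types}: its proof uses that every support point carries mass at least $1/n$. At your quantization stage $\bar P_Y$ is not a type, and the set $\{y:p_Y(y)>0\}$ can have unbounded length (e.g.\ if each $p_{Y|X}(\cdot\,|\,x)$ is Gaussian), so the count is simply false as stated. The fix is to use the second-moment constraint on the \emph{density} directly, via a rearrangement inequality: for any threshold $\delta$, $c_Y\ge\int_{\{p_Y>\delta\}}p_Y\,y^2\,dy\ge\delta\int_{\{p_Y>\delta\}}y^2\,dy\ge\tfrac{\delta}{12}\,\mathrm{vol}\{p_Y>\delta\}^3$, hence $\mathrm{vol}\{p_Y>\delta\}\le(12c_Y/\delta)^{1/3}$; and on the complementary set $\{p_Y\le\delta\}$ the integral $\int p_Y|\log p_Y|$ is controlled by normalizing to a PDF and invoking the Gaussian entropy bound through the second moment. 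This is exactly what the paper does (with a slightly different threshold $\tilde h$), and it closes precisely under the stated window for $\beta$. A smaller issue of the same kind appears in your rounding stage: the sum $\sum_{x,y}|P_{XY}-\bar P_{XY}|(y-x)^2$ ranges over $\mathcal S(\bar P_{XY})$, not $\mathcal S(P_{XY})$, and $\bar P_{XY}$ is not a type, so Lemma~\ref{LemSupportXY} does not apply; here the direct bound $|\mathcal S(\bar P_{XY})|\le|\mathcal S(P_X)|\cdot O(\tau_n/\Delta_{\beta,n})$ via Lemma~\ref{LemSupportX} and the truncation does the job, and this is where the upper bound $\beta<\tfrac23-\tfrac23\alpha$ actually bites.
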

The proof is given in the Appendix B.

\bigskip

\begin{lemma}[PDF to type]\label{LemPDFtoT}
{\em Let ${\cal X}_{n}$ and ${\cal Y}_{n}$ be two alphabets defined as in (\ref{eqAlphabets}), (\ref{eqDelta})
with $\alpha \in \big(0, \tfrac{1}{4}\big)$
and $\tfrac{1}{3} + \tfrac{2}{3}\alpha < \beta < \tfrac{2}{3} - \tfrac{2}{3}\alpha$.
Then for any $c_{X}$, $c_{XY}$, and $\epsilon > 0$ there exists $n_{0} = n_{0}(\alpha, \,\beta, \,c_{X}, \,c_{XY}, \,\sigma^{2}, \,\epsilon) \in \mathbb{N}$,
such that for any $n > n_{0}$ and
for any type ${P\mathstrut}_{\!X} \in {\cal P}_{n}({\cal X}_{n})$ with $\mathbb{E}_{{P\mathstrut}_{\!X}}\!
\big[X^{2}\big] \leq c_{X}$:
}
\begin{equation} \label{eqTransitionB}
\inf_{\substack{{p\mathstrut}_{Y|X}:\\
{p\mathstrut}_{Y|X}(\, \cdot \, \,|\, x) \, \in \, {\cal L}, \; \forall x,
\\
\mathbb{E}_{{P\mathstrut}_{\!X}{p\mathstrut}_{Y|X}}[(Y-X)^{2}] \; \leq \; c_{XY}\! ,
\\ I({P\mathstrut}_{\!X}, \,\, {p\mathstrut}_{Y|X}) \; \leq \; R \, - \, 2\epsilon
}}
\!\!\!\!
\Big\{
D\big(\,{p\mathstrut}_{Y|X}\,\|\, \, w \, \,|\, {P\mathstrut}_{\!X}\big)
\Big\}
\; \;
\geq
\min_{\substack{\\{P\mathstrut}_{\!Y|X}:\\
{P\mathstrut}_{\!XY}\,\in \, {\cal P}_{n}({\cal X}_{n}\,\times\, {\cal Y}_{n}),
\\
\mathbb{E}_{{P\mathstrut}_{\!XY}}[(Y-X)^{2}] \; \leq \; c_{XY} \, + \, \epsilon,    
\\ I({P\mathstrut}_{\!X}, \, {P\mathstrut}_{\!Y|X}) \; \leq \; R \, - \, \epsilon
}}
\!\!\!\!
\Big\{
D\big({P\mathstrut}_{\!Y|X}\,\|\, {W\mathstrut}_{\!n} \,|\,  {P\mathstrut}_{\!X}\big)
\Big\} \; + \; o(1),
\end{equation}
{\em where $o(1)\rightarrow 0$, as $n\rightarrow \infty$,
and depends only on the parameters $\alpha$, $\beta$, $c_{X}$, $c_{XY}$, and $\sigma^{2}$.}
\end{lemma}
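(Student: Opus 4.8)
The plan is to push everything through Lemma~\ref{LemQuant}, which does the real work of replacing a conditional PDF by a joint type while simultaneously controlling the conditional entropy, the quadratic moment, and the output entropy. First I would dispose of the trivial case: if the feasible set of the infimum on the left is empty, that infimum equals $+\infty$ and (\ref{eqTransitionB}) holds. Otherwise, for each $n$ I fix a feasible $p_{Y|X}$ within $1/n$ of the infimum, so that $p_{Y|X}(\cdot\mid x)\in\mathcal{L}$ for all $x$, $\mathbb{E}_{P_X p_{Y|X}}[(Y-X)^2]\le c_{XY}$, and $I(P_X,p_{Y|X})\le R-2\epsilon$. By Minkowski's inequality $\mathbb{E}_{P_X p_{Y|X}}[Y^2]\le(\sqrt{c_X}+\sqrt{c_{XY}})^2\triangleq c_Y$, so the hypotheses of Lemma~\ref{LemQuant} hold with these $c_X,c_Y,c_{XY}$ (the alphabet-exponent conditions $\alpha\in(0,\tfrac{1}{4})$ and $\tfrac{1}{3}+\tfrac{2}{3}\alpha<\beta<\tfrac{2}{3}-\tfrac{2}{3}\alpha$ are assumed). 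Lemma~\ref{LemQuant} then yields a joint type $P_{XY}\in\mathcal{P}_n(\mathcal{X}_n\times\mathcal{Y}_n)$ with marginal $P_X$ satisfying (\ref{eqCondEntropy})--(\ref{eqMargEntropy}), whose $o(1)$ terms depend only on $\alpha,\beta,c_X,c_Y,c_{XY},\sigma^2$, hence only on $\alpha,\beta,c_X,c_{XY},\sigma^2$.

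Next I would compare the two divergences. Using $\log_b w(y\mid x)=-\log_b(\sigma\sqrt{2\pi})-(y-x)^2/(2\sigma^2\ln b)$ and $W_n=w\,\Delta_{\beta,\,n}$, both divergences take a common shape,
\[
D\big(p_{Y|X}\,\|\,w\,|\,P_X\big)=\Big[\textstyle\sum_x P_X(x)\!\int p_{Y|X}\log_b p_{Y|X}\,dy\Big]+\log_b(\sigma\sqrt{2\pi})+\tfrac{1}{2\sigma^2\ln b}\Big[\textstyle\sum_x P_X(x)\!\int p_{Y|X}(y-x)^2\,dy\Big],
\]
\[
D\big(P_{Y|X}\,\|\,W_n\,|\,P_X\big)=\Big[\textstyle\sum_{x,y}P_{XY}\log_b\tfrac{P_{Y|X}}{\Delta_{\beta,\,n}}\Big]+\log_b(\sigma\sqrt{2\pi})+\tfrac{1}{2\sigma^2\ln b}\Big[\textstyle\sum_{x,y}P_{XY}(y-x)^2\Big],
\]
the $\log_b\Delta_{\beta,\,n}$ coming from the normalization of $P_{Y|X}$ cancelling the one coming from $W_n$. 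Since (\ref{eqCondEntropy}) bounds the first bracket of the first line below by that of the second line up to $o(1)$, (\ref{eqLogGaussian}) does the same for the quadratic brackets, and $1/(2\sigma^2\ln b)>0$, subtracting gives $D(P_{Y|X}\|W_n\mid P_X)\le D(p_{Y|X}\|w\mid P_X)+o(1)$.

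Finally I would check that $P_{XY}$ is admissible on the right once $n$ is large. From (\ref{eqLogGaussian}), $\mathbb{E}_{P_{XY}}[(Y-X)^2]\le c_{XY}+o(1)\le c_{XY}+\epsilon$. Writing both mutual informations as a conditional-entropy term minus an output-entropy term — again the $\Delta_{\beta,\,n}$'s cancel in the discrete expression — and combining (\ref{eqCondEntropy}) with (\ref{eqMargEntropy}) gives $I(P_X,P_{Y|X})\le I(P_X,p_{Y|X})+o(1)\le R-2\epsilon+o(1)\le R-\epsilon$. Hence for $n$ past a threshold $n_0(\alpha,\beta,c_X,c_{XY},\sigma^2,\epsilon)$ the right-hand minimum is over a nonempty set and is at most $D(P_{Y|X}\|W_n\mid P_X)\le D(p_{Y|X}\|w\mid P_X)+o(1)\le(\text{left-hand infimum})+1/n+o(1)$, which is exactly (\ref{eqTransitionB}).

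I do not expect a genuine obstacle in \emph{this} lemma: the delicate construction — producing one joint type for which all three functionals transfer with vanishing error at the rates that force $\alpha\in(0,\tfrac{1}{4})$ and $\tfrac{1}{3}+\tfrac{2}{3}\alpha<\beta<\tfrac{2}{3}-\tfrac{2}{3}\alpha$ — is precisely Lemma~\ref{LemQuant}, proved in Appendix~B. The only care needed here is that the $o(1)$ of Lemma~\ref{LemQuant} is uniform in the choice of $p_{Y|X}$ (its statement guarantees this, as the bound depends only on the parameters, not on $p_{Y|X}$), plus the routine bookkeeping so that a single $n_0$ absorbs all the $\epsilon$-slacks.
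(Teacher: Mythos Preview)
Your proposal is correct and follows essentially the same route as the paper: invoke Lemma~\ref{LemQuant} (after bounding $c_Y$ via Cauchy--Schwarz/Minkowski), add (\ref{eqCondEntropy}) and (\ref{eqLogGaussian}) to compare the two divergences, subtract (\ref{eqMargEntropy}) from (\ref{eqCondEntropy}) to compare the mutual informations, and use the uniformity of the $o(1)$ terms to absorb the $\epsilon$-slacks. The only cosmetic difference is that you fix a near-infimizer $p_{Y|X}$ (introducing a harmless extra $1/n$), whereas the paper argues directly for every feasible $p_{Y|X}$ and then passes to the infimum.
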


\bigskip

\begin{proof}
For a type ${P\mathstrut}_{\!X}$ with a collection of ${p\mathstrut}_{Y|X}$ such that $\mathbb{E}_{{P\mathstrut}_{\!X}}\!
\big[X^{2}\big] \leq c_{X}$
and $\mathbb{E}_{{P\mathstrut}_{\!X}{p\mathstrut}_{Y|X}}\!\big[(Y-X)^{2}\big]\leq c_{XY}$,
we can find also an upper bound $c_{Y}$ on $\mathbb{E}_{{P\mathstrut}_{\!X}{p\mathstrut}_{Y|X}}\!\big[Y^{2}\big]$.
For example, using the Cauchy-Schwarz inequality:
\begin{displaymath}
\mathbb{E}_{{P\mathstrut}_{\!X}{p\mathstrut}_{Y|X}}\!\big[Y^{2}\big]
\;\; 
\leq
\;\;
\Big(
\mathbb{E}_{{P\mathstrut}_{\!X}}^{1/2}\big[X^{2}\big]\,+\,
\mathbb{E}_{{P\mathstrut}_{\!X}{p\mathstrut}_{Y|X}}^{1/2}\!\big[(Y-X)^{2}\big]
\Big)^{2}
\;\; \leq \;\;
\big(
\sqrt{c_{X}} \, + \, \sqrt{c_{XY}}
\big)^{2}
\;\; \triangleq \;\;
c_{Y}.
\end{displaymath}
Then by Lemma~\ref{LemQuant} there exists a joint type ${P\mathstrut}_{\!XY}$
with the marginal type ${P\mathstrut}_{\!X}$, such that simultaneously
the three inequalities (\ref{eqCondEntropy})-(\ref{eqMargEntropy}) are satisfied
and it also follows by (\ref{eqLogGaussian}) and (\ref{eqChanApprox}) that
\begin{equation} \label{eqConseq4}
-\mathbb{E}_{{P\mathstrut}_{\!X}{p\mathstrut}_{Y|X}}\!\big[{\log\mathstrut}_{\!b}\,w(Y\,|\,X)\big]
\;\; \geq \;\;
-\mathbb{E}_{{P\mathstrut}_{\!XY}}\!\big[{\log\mathstrut}_{\!b}\,{W\mathstrut}_{\!n}(Y\,|\,X)\big]
 \, + \,
{\log\mathstrut}_{\!b}\,\Delta_{\beta,\,n}
\, + \, o(1).
\end{equation}
Then the sum of (\ref{eqCondEntropy}) and (\ref{eqConseq4}) gives
\begin{equation} \label{eqObjective}
D\big(\,{p\mathstrut}_{Y|X}\,\|\, \, w \, \,|\, {P\mathstrut}_{\!X}\big)
\;\; \geq \;\;
D\big({P\mathstrut}_{\!Y|X}\,\|\, {W\mathstrut}_{\!n} \,|\,  {P\mathstrut}_{\!X}\big)
\, + \, o(1),
\end{equation}
while the difference of
(\ref{eqCondEntropy})
and
(\ref{eqMargEntropy}) gives
\begin{equation} \label{eqMutualIneq}
I\big({P\mathstrut}_{\!X}, {P\mathstrut}_{\!Y|X}\big)
\;\; \leq \;\;
I\big({P\mathstrut}_{\!X}, \, {p\mathstrut}_{Y|X}\big)
\, + \, o(1).
\end{equation}
Note that all $o(1)$ in the above relations are independent 
of the joint type ${P\mathstrut}_{\!XY}$
and the functions
${p\mathstrut}_{Y|X}$. Therefore
by (\ref{eqObjective}), (\ref{eqMutualIneq}), and (\ref{eqLogGaussian})
we conclude, that
given any $\epsilon > 0$ for $n$ sufficiently large
for every type ${P\mathstrut}_{\!X}$ with the prerequisites of this lemma and every collection of ${p\mathstrut}_{Y|X}$
which satisfy the conditions under the infimum on the LHS of (\ref{eqTransitionB})
there exists a joint type ${P\mathstrut}_{\!XY}$
such that simultaneously
\begin{align}
I\big({P\mathstrut}_{\!X}, {P\mathstrut}_{\!Y|X}\big)
\;\; & \leq \;\;
I\big({P\mathstrut}_{\!X}, \, {p\mathstrut}_{Y|X}\big)
\, + \, \epsilon,
\nonumber \\
\mathbb{E}_{{P\mathstrut}_{\!XY}}\!\big[(Y - X)^{2}\big]
\;\; & \leq \;\;
\mathbb{E}_{{P\mathstrut}_{\!X}{p\mathstrut}_{Y|X}}\!\big[(Y - X)^{2}\big]
\, + \, \epsilon,
\nonumber
\end{align}
and (\ref{eqObjective}) 
holds with a uniform $o(1)$, i.e., independent of ${P\mathstrut}_{\!XY}$ and ${p\mathstrut}_{Y|X}$.
It follows that such
${P\mathstrut}_{\!XY}$
satisfies also the conditions under the {\em minimum} on the RHS of (\ref{eqTransitionB})
and results in the objective function of (\ref{eqTransitionB}) 
satisfying (\ref{eqObjective}) with the uniform $o(1)$.
Then the minimum itself, which can only possibly be taken over 
a greater variety of ${P\mathstrut}_{\!XY}$,
satisfies the inequality (\ref{eqTransitionB}).
\end{proof}

\bigskip

\begin{lemma}[Type to PDF]\label{LemTtoPDF}
{\em For any $c_{XY}$ and $\epsilon > 0$ there exists $n_{0} = n_{0}(\beta, \,c_{XY}, \,\epsilon) \in \mathbb{N}$,
such that for any $n > n_{0}$ and
for any type ${P\mathstrut}_{\!X} \in {\cal P}_{n}({\cal X}_{n})$:
}
\begin{align}
\min_{\substack{\\{P\mathstrut}_{\!Y|X}:\\
{P\mathstrut}_{\!XY}\,\in \, {\cal P}_{n}({\cal X}_{n}\,\times\, {\cal Y}_{n}),
\\
\mathbb{E}_{{P\mathstrut}_{\!XY}}[(Y-X)^{2}] \; \leq \; c_{XY}
}}
\;\;\;\;
&
\Big\{
D\big({P\mathstrut}_{\!Y|X}\,\|\, {W\mathstrut}_{\!n} \,|\,  {P\mathstrut}_{\!X}\big)
\, + \,
\big|\,
R \, - \, I\big({P\mathstrut}_{\!X}, {P\mathstrut}_{\!Y|X}\big)
\,\big|^{+}
\Big\}
\nonumber \\
\geq \;\;
\inf_{\substack{{p\mathstrut}_{Y|X}:\\
{p\mathstrut}_{Y|X}(\, \cdot \, \,|\, x) \, \in \, {\cal F}_{n}, \; \forall x,
\\
\mathbb{E}_{{P\mathstrut}_{\!X}{p\mathstrut}_{Y|X}}[(Y-X)^{2}] \; \leq \; c_{XY} \, + \, \epsilon
}}
&
\Big\{
D\big(\,{p\mathstrut}_{Y|X}\,\|\, \, w \, \,|\, {P\mathstrut}_{\!X}\big)
\, + \,
\big|\,
R \, - \, I\big({P\mathstrut}_{\!X}, \, {p\mathstrut}_{Y|X}\big)
\,\big|^{+}
\Big\}
\; + \; o(1),
\label{eqTypePDF}
\end{align}
{\em where
$o(1)\rightarrow 0$, as $n\rightarrow \infty$,
and depends only on the parameters $\beta$ and $c_{XY}$.}
\end{lemma}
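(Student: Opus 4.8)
The plan is to go in the direction opposite to Lemma~\ref{LemPDFtoT}: there we quantized a PDF down to a type; here we must take a joint type ${P\mathstrut}_{\!XY}$ achieving the minimum on the LHS of (\ref{eqTypePDF}) and ``smooth'' it into a PDF ${p\mathstrut}_{Y|X}(\,\cdot\,|\,x) \in {\cal F}_{n}$ that does almost as well on the RHS. Since ${\cal F}_{n}$ requires only boundedness, an integral of $1$, and continuity off the shifted lattice ${\cal Y}_{n} + \Delta_{\beta,\,n}/2$, the natural smoothing is the piecewise-constant density
\begin{displaymath}
{p\mathstrut}_{Y|X}(y\,|\,x) \;\triangleq\; \frac{{P\mathstrut}_{\!Y|X}(y^{q}\,|\,x)}{\Delta_{\beta,\,n}}, \qquad y^{q} = Q_{\beta}(y),
\end{displaymath}
for each $x$ in the support of ${P\mathstrut}_{\!X}$; extend arbitrarily (say as ${W\mathstrut}_{\!n}(\,\cdot\,|\,x)/\Delta_{\beta,\,n}$ renormalized, or just copy a neighbor) for $x \notin {\cal S}({P\mathstrut}_{\!X})$, since those $x$ carry zero ${P\mathstrut}_{\!X}$-mass and do not affect the objective. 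This ${p\mathstrut}_{Y|X}(\,\cdot\,|\,x)$ integrates to $1$, is bounded, and is constant on each interval $(y^{q} - \Delta_{\beta,\,n}/2, \, y^{q} + \Delta_{\beta,\,n}/2)$, hence continuous off ${\cal Y}_{n} + \Delta_{\beta,\,n}/2$, so it lies in ${\cal F}_{n}$.

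First I would check the three quantities on the RHS against their LHS counterparts. \textbf{Mutual information:} because ${p\mathstrut}_{Y|X}(\,\cdot\,|\,x)$ is the rescaled histogram, one gets the exact identity $I\big({P\mathstrut}_{\!X}, \, {p\mathstrut}_{Y|X}\big) = I\big({P\mathstrut}_{\!X}, {P\mathstrut}_{\!Y|X}\big)$ — the differential entropies of the piecewise-constant densities differ from the discrete entropies of the types by the same additive constant $-{\log\mathstrut}_{\!b}\Delta_{\beta,\,n}$ for both the conditional and the marginal, which cancels in $I$. \textbf{The divergence term:} $D\big(\,{p\mathstrut}_{Y|X}\,\|\, \, w \, \,|\, {P\mathstrut}_{\!X}\big)$ equals $-H_{\text{diff}}({p\mathstrut}_{Y|X}\,|\,{P\mathstrut}_{\!X}) - \mathbb{E}_{{P\mathstrut}_{\!X}{p\mathstrut}_{Y|X}}[{\log\mathstrut}_{\!b} w(Y|X)]$; the first part is $H({P\mathstrut}_{\!Y|X}\,|\,{P\mathstrut}_{\!X}) - {\log\mathstrut}_{\!b}\Delta_{\beta,\,n}$ exactly, and for the second part I would compare $\mathbb{E}_{{P\mathstrut}_{\!X}{p\mathstrut}_{Y|X}}[(Y-X)^2]$ with $\mathbb{E}_{{P\mathstrut}_{\!XY}}[(Y-X)^2]$ (the only $x,y$-dependence of $\log w$), using the very estimates (\ref{eqRealAverage})--(\ref{eqRealBounded2}) from Lemma~\ref{LemPDFExponent}: within any $\Delta_{\beta,\,n}$-cell, $(y-x)$ differs from $(y^q - x)$ by at most $\Delta_{\beta,\,n}/2$, producing a discrepancy of order $(\Delta_{\beta,\,n}\sqrt{c_{XY}} + \Delta_{\beta,\,n}^2)/\sigma^2 = o(1)$, uniformly in ${P\mathstrut}_{\!XY}$ and depending only on $\beta$ and $c_{XY}$. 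Combining, $D\big(\,{p\mathstrut}_{Y|X}\,\|\, \, w \, \,|\, {P\mathstrut}_{\!X}\big) = D\big({P\mathstrut}_{\!Y|X}\,\|\, {W\mathstrut}_{\!n} \,|\,  {P\mathstrut}_{\!X}\big) + o(1)$, using (\ref{eqChanApprox}) to absorb the $-{\log\mathstrut}_{\!b}\Delta_{\beta,\,n}$ into ${W\mathstrut}_{\!n}$. \textbf{The second-moment constraint:} the same cell-wise bound shows $\mathbb{E}_{{P\mathstrut}_{\!X}{p\mathstrut}_{Y|X}}[(Y-X)^2] \le \mathbb{E}_{{P\mathstrut}_{\!XY}}[(Y-X)^2] + o(1) \le c_{XY} + \epsilon$ once $n$ is large, so the smoothed PDF is admissible for the infimum on the RHS.

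Putting it together: take ${P\mathstrut}_{\!XY}^{*}$ achieving the LHS minimum, form ${p\mathstrut}_{Y|X}^{*}$ as above; it satisfies the RHS constraints (for $n > n_0(\beta, c_{XY}, \epsilon)$), and its objective value equals the LHS value plus $o(1)$ by the two comparisons (the $|\,\cdot\,|^{+}$ function is $1$-Lipschitz, so the $o(1)$ in $I$ transfers through). Hence the infimum on the RHS is $\le$ LHS $+ o(1)$, which is (\ref{eqTypePDF}). \textbf{The main obstacle} I anticipate is bookkeeping the uniformity of the $o(1)$ terms — making sure every error bound depends only on $\beta$ and $c_{XY}$ and not on the particular type, and carefully handling $x$'s outside ${\cal S}({P\mathstrut}_{\!X})$ (where ${P\mathstrut}_{\!Y|X}$ is not even defined) by noting they contribute zero weight everywhere they appear; also one should double-check that the chosen extension keeps ${p\mathstrut}_{Y|X}(\,\cdot\,|\,x)$ in ${\cal F}_{n}$ for those $x$, though since ${\cal F}_{n}$ is defined per-function this is immediate for the piecewise-constant choice. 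None of this is deep, but it is the fiddly part; the conceptual content — histogram-rescaling is a lossless-up-to-$o(1)$ dictionary between types and piecewise-constant PDFs — is exactly the mirror image of Lemma~\ref{LemQuant}, only far easier here because ${\cal F}_{n}$ imposes no Lipschitz demand.
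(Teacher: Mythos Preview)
Your proposal is correct and follows essentially the same approach as the paper: construct the piecewise-constant histogram density ${p\mathstrut}_{Y|X}(y\,|\,x) = {P\mathstrut}_{\!Y|X}(Q_{\beta}(y)\,|\,x)/\Delta_{\beta,\,n}$, observe that $I$ is preserved exactly and that the second-moment and divergence comparisons both reduce to the estimates (\ref{eqRealAverage})--(\ref{eqRealBounded2}), yielding the $o(1)$ discrepancy depending only on $\beta$ and $c_{XY}$. The paper's proof is just a terser version of what you wrote, and your additional remarks about $x \notin {\cal S}({P\mathstrut}_{\!X})$ and the Lipschitz property of $|\cdot|^{+}$ are correct (if slightly more careful than necessary).
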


\bigskip

\begin{proof}
Observe first that any collection of conditional PDF's ${p\mathstrut}_{Y|X}$, which satisfies the conditions under the infimum of (\ref{eqTypePDF}),
has finite differential entropies and well-defined quantities $D\big(\,{p\mathstrut}_{Y|X}\,\|\, \, w \, \,|\, {P\mathstrut}_{\!X}\big)$
and $I\big({P\mathstrut}_{\!X}, \, {p\mathstrut}_{Y|X}\big)$.
For any
conditional type ${P\mathstrut}_{\!Y|X}$ from the LHS of (\ref{eqTypePDF}) 
we can define a
set of histogram-like conditional
PDF's:
\begin{align}
{p\mathstrut}_{Y|X}(y\,|\,x) \; & \triangleq \;
\frac{{P\mathstrut}_{\!Y|X}(j\Delta_{\beta,\,n} \, | \, x)}{\Delta_{\beta,\,n}},
\;\;\;\;\;
\forall y \in \big[(j - 1/2)\Delta_{\beta,\,n}, \;  (j + 1/2)\Delta_{\beta,\,n}\big),
\;\forall j \in \mathbb{Z}, \;\forall x \in {\cal S}({P\mathstrut}_{\!X}),
\label{eqStepFunction}
\end{align}
which are step functions of $y \in \mathbb{R}$ for each $x \in {\cal S}({P\mathstrut}_{\!X}) $. 
Then
$I\big({P\mathstrut}_{\!X}, \, {p\mathstrut}_{Y|X}\big) = I\big({P\mathstrut}_{\!X}, {P\mathstrut}_{\!Y|X}\big)$,
and
${p\mathstrut}_{Y|X}(\,\cdot \, \, | \, x) \in {\cal F}_{n}$,
as defined in (\ref{eqBoundedContinuous}).
Analogously to (\ref{eqRealAverage})-(\ref{eqRealBounded2}), it can be obtained that
\begin{displaymath}
\mathbb{E}_{{P\mathstrut}_{\!X}{p\mathstrut}_{Y|X}}\big[(Y-X)^{2}\big] \;\; \leq \;\;
\mathbb{E}_{{P\mathstrut}_{\!XY}}\big[(Y-X)^{2}\big]  \, + \,\Delta_{\beta,\,n}
\sqrt{c_{XY}}
\, + \, \Delta_{\beta,\,n}^{2}/4.
\end{displaymath}
Then also
$D\big(\,{p\mathstrut}_{Y|X}\,\|\, \, w \, \,|\, {P\mathstrut}_{\!X}\big)  \leq
D\big({P\mathstrut}_{\!Y|X}\,\|\, {W\mathstrut}_{\!n} \,|\,  {P\mathstrut}_{\!X}\big)  +  o(1)$.
Then the lemma follows.
\end{proof}


We use Lemmas~\ref{LemPDFtoT} and~\ref{LemTtoPDF} in Section~\ref{Main}
in the derivation of Theorems~\ref{thmErrorExp} and~\ref{thmCorDecExp},
respectively.



\section*{Appendix A}\label{AppendixA}


\subsection*{Proof of Lemma~\ref{LemSupportXY}:}
Adding the two power constraints together, we successively obtain the following inequalities:
\begin{align}
\sum_{(x, \, y) \, \in \, {\cal S}({P\mathstrut}_{\!XY})}{P\mathstrut}_{\!XY}(x, y) (x^{2} + y^{2})
\;\; & \leq \;\;
c_{X} + c_{Y},
\nonumber \\
\sum_{(x, \, y) \, \in \, {\cal S}({P\mathstrut}_{\!XY})}\frac{1}{n} (x^{2} + y^{2})
\;\; & \leq \;\;
c_{X} + c_{Y},
\nonumber \\
\sum_{(x, \, y) \, \in \, {\cal S}({P\mathstrut}_{\!XY})}\frac{1}{|\, {\cal S}({P\mathstrut}_{\!XY}) \,|} (x^{2} + y^{2})
\;\; & \leq \;\;
\frac{n(c_{X} + c_{Y})}{|\, {\cal S}({P\mathstrut}_{\!XY}) \,|},
\nonumber \\
\mathbb{E} \big[\|{\bf U}\|^{2}\big]
\;\; & \leq \;\;
\frac{n(c_{X} + c_{Y})}{|\, {\cal S}({P\mathstrut}_{\!XY}) \,|},
\nonumber
\end{align}
where ${\bf U} \sim \text{Discrete-Uniform}\big({\cal S}({P\mathstrut}_{\!XY})\big)$. To this let us add a {\em continuously} distributed random vector
\begin{displaymath}
{\bf D} \; \sim \; \text{Continuous-Uniform}\big(\,[-\Delta_{\alpha,\,n}/2, \, \Delta_{\alpha,\,n}/2) \, \times \,
[-\Delta_{\beta,\,n}/2, \, \Delta_{\beta,\,n}/2)
\,\big),
\end{displaymath}
independent of ${\bf U}$.
Then $\widetilde{\bf U} \, = \, {\bf U} + {\bf D} \, \sim \, \text{Continuous-Uniform}(A)$,
where
\begin{displaymath}
A \;\; \triangleq \;\; \bigcup_{(x, \, y) \, \in \, {\cal S}({P\mathstrut}_{\!XY})}
\big\{
x \, + \, [-\Delta_{\alpha,\,n}/2, \, \Delta_{\alpha,\,n}/2)
\big\}
\, \times \,
\big\{
y \, + \, [-\Delta_{\beta,\,n}/2, \, \Delta_{\beta,\,n}/2)
\big\}
.
\end{displaymath}
So we have

\begin{align}
\frac{n(c_{X} + c_{Y})}{|\, {\cal S}({P\mathstrut}_{\!XY}) \,|}
\, + \,
\frac{\Delta_{\alpha,\,n}^{2} + \Delta_{\beta,\,n}^{2}}{12}
\;\; & \geq \;\;
\mathbb{E} \big[\|{\bf U}\|^{2}\big] \, + \, \mathbb{E} \big[\|{\bf D}\|^{2}\big]
\;\; = \;\;
\mathbb{E} \big[\|\widetilde{\bf U}\|^{2}\big]
\;\; = \;\;
\frac{1}{\text{vol}(A)}
\int_{A}\|\widetilde{\bf u}\|^{2}d^{2}\widetilde{\bf u}
\nonumber \\
& = \;\;
\frac{1}{\text{vol}(A)}
\bigg[
\int_{A\,\cap\,B}\|\widetilde{\bf u}\|^{2}d^{2}\widetilde{\bf u}
\, +
\int_{A\,\cap\,B{\mathstrut}^{c}}\|\widetilde{\bf u}\|^{2}d^{2}\widetilde{\bf u}
\bigg]
\nonumber \\
& \overset{a}{\geq} \;\;
\frac{1}{\text{vol}(A)}
\bigg[
\int_{A\,\cap\,B}\|\widetilde{\bf u}\|^{2}d^{2}\widetilde{\bf u}
\, +
\int_{A{\mathstrut}^{c}\,\cap\,B}\|{\bf t}\|^{2}d^{2}{\bf t}
\bigg]
\nonumber \\
& = \;\;
\frac{1}{\text{vol}(A)}
\int_{B}\|{\bf t}\|^{2}d^{2}{\bf t},
\label{eqCenteredDisk}
\end{align}
where for ($a$) we use the disk set $B \, \triangleq \, \big\{{\bf t}: \; \pi\|{\bf t}\|^{2} \, \leq \, \text{vol}(A)\big\}$,
centered around zero and of the same total area as $A$,
and the resulting property
\begin{align}
\text{vol}(A\cap B) \, + \, \text{vol}(A\cap B{\mathstrut}^{c})
\;\; = \;\;
\text{vol}(A)
\;\; & = \;\;
\text{vol}(B)
\;\; = \;\;
\text{vol}(A\cap B) \, + \, \text{vol}(A{\mathstrut}^{c}\cap B),
\nonumber \\
\text{vol}(A\cap B{\mathstrut}^{c})
\;\; & = \;\;
\text{vol}(A{\mathstrut}^{c}\cap B),
\nonumber \\
\int_{A\,\cap\,B{\mathstrut}^{c}}\|\widetilde{\bf u}\|^{2}d^{2}\widetilde{\bf u}
\;\; \geq \;\;
\frac{\text{vol}(A)}{\pi}\int_{A\,\cap\,B{\mathstrut}^{c}}d^{2}\widetilde{\bf u}
\;\; & = \;\;
\frac{\text{vol}(A)}{\pi}\int_{A{\mathstrut}^{c}\,\cap\,B}d^{2}{\bf t}
\;\; \geq \;\;
\int_{A{\mathstrut}^{c}\,\cap\,B}\|{\bf t}\|^{2}d^{2}{\bf t}.
\nonumber
\end{align}
Integrating on the RHS of (\ref{eqCenteredDisk}), we obtain
\begin{align}
\frac{\text{vol}(A)}{2\pi}
\;\; = \;\;
\frac{|\, {\cal S}({P\mathstrut}_{\!XY}) \,| \cdot \Delta_{\alpha,\,n}\Delta_{\beta,\,n}}{2\pi}
\;\; & \leq \;\;
\frac{n(c_{X} + c_{Y})}{|\, {\cal S}({P\mathstrut}_{\!XY}) \,|}
\, + \,
\frac{\Delta_{\alpha,\,n}^{2} + \Delta_{\beta,\,n}^{2}}{12},
\nonumber \\
\frac{|\, {\cal S}({P\mathstrut}_{\!XY}) \,|^{\,2} \cdot \Delta_{\alpha,\,n}\Delta_{\beta,\,n}}{2\pi n}
\;\; & \leq \;\;
c_{X} + c_{Y} \, + \,
\frac{\Delta_{\alpha,\,n}^{2} + \Delta_{\beta,\,n}^{2}}{12}\cdot
\underbrace{\frac{|\, {\cal S}({P\mathstrut}_{\!X}) \,|}{n}}_{\leq \; 1}
\;\; \leq \;\;
c_{X} + c_{Y} + 1/6,
\nonumber \\
|\, {\cal S}({P\mathstrut}_{\!XY}) \,|^{\,2}
\;\; & \leq \;\;
2\pi(c_{X} + c_{Y} + 1/6)\cdot n\Delta_{\alpha,\,n}^{-1}\Delta_{\beta,\,n}^{-1}
\nonumber \\
& = \;\;
2\pi(c_{X} + c_{Y} + 1/6)\cdot n^{1 \, + \, \alpha \, + \, \beta}.
\nonumber
\end{align}
$\square$


\bigskip

\subsection*{Proof of Lemma~\ref{LemSupportX}:}
Similarly as in Lemma~\ref{LemSupportXY}, the power constraint gives the following succession of inequalities:
\begin{align}
\sum_{x \, \in \, {\cal S}({P\mathstrut}_{\!X})}{P\mathstrut}_{\!X}(x) x^{2}
\;\; & \leq \;\;
c_{X},
\nonumber \\
\sum_{x \, \in \, {\cal S}({P\mathstrut}_{\!X})} \frac{1}{n} x^{2}
\;\; & \leq \;\;
c_{X},
\nonumber \\
\sum_{x \, \in \, {\cal S}({P\mathstrut}_{\!X})} \frac{1}{|\, {\cal S}({P\mathstrut}_{\!X}) \,|} x^{2}
\;\; & \leq \;\;
\frac{n}{|\, {\cal S}({P\mathstrut}_{\!X}) \,|}c_{X},
\nonumber \\
\mathbb{E} \big[U^{2}\big]
\;\; & \leq \;\;
\frac{n}{|\, {\cal S}({P\mathstrut}_{\!X}) \,|}c_{X},
\nonumber
\end{align}
where $U \sim \text{Discrete-Uniform}\big({\cal S}({P\mathstrut}_{\!X})\big)$. To this let us add a {\em continuously} distributed random variable
\begin{displaymath}
D \; \sim \; \text{Continuous-Uniform}\big(\,[-\Delta_{\alpha,\,n}/2, \, \Delta_{\alpha,\,n}/2)\,\big),
\end{displaymath}
independent of $U$.
Then $\widetilde{U} \, = \, U + D \, \sim \, \text{Continuous-Uniform}(A)$,
where
\begin{displaymath} 
A \;\; \triangleq \;\; \bigcup_{x \, \in \, {\cal S}({P\mathstrut}_{\!X})}
\big\{
x + [-\Delta_{\alpha,\,n}/2, \, \Delta_{\alpha,\,n}/2)
\big\}.
\end{displaymath}
So we have
\begin{align}
\frac{n}{|\, {\cal S}({P\mathstrut}_{\!X}) \,|}c_{X}
\, + \,
\frac{\Delta_{\alpha,\,n}^{2}}{12}
\;\; 
\geq \;\;
\mathbb{E} \big[U^{2}\big] \, + \, \mathbb{E} \big[D^{2}\big]
\;\; = \;\;
\mathbb{E} \big[\widetilde{U}^{2}\big]
\;\; & = \;\;
\frac{1}{\text{vol}(A)}
\int_{A}\widetilde{u}^{\,2}d\widetilde{u}
\nonumber \\
& = \;\;
\frac{1}{\text{vol}(A)}
\bigg[
\int_{A\,\cap\,B}\widetilde{u}^{\,2}d\widetilde{u}
\, +
\int_{A\,\cap\,B{\mathstrut}^{c}}\widetilde{u}^{\,2}d\widetilde{u}
\bigg]
\nonumber \\
& \overset{a}{\geq} \;\;
\frac{1}{\text{vol}(A)}
\bigg[
\int_{A\,\cap\,B}\widetilde{u}^{\,2}d\widetilde{u}
\, +
\int_{A{\mathstrut}^{c}\,\cap\,B}t^{\,2}dt
\bigg]
\nonumber \\
& = \;\;
\frac{1}{\text{vol}(A)}
\int_{B}t^{\,2}d t,
\label{eqCentered}
\end{align}
where for ($a$) we use the interval set $B \, \triangleq \, \big[-\!\text{vol}(A)/2, \; \text{vol}(A)/2 \,\big]$,
centered around zero and of the same total length as $A$
with the resulting property that $\text{vol}(A\cap B{\mathstrut}^{c}) = \text{vol}(A{\mathstrut}^{c}\cap B)$,
so that
\begin{displaymath}
\int_{A\,\cap\,B{\mathstrut}^{c}}\widetilde{u}^{\,2}d\widetilde{u}
\;\; \geq \;\;
\frac{(\text{vol}(A))^{2}}{4}\int_{A\,\cap\,B{\mathstrut}^{c}}d\widetilde{u}
\;\; = \;\;
\frac{(\text{vol}(A))^{2}}{4}\int_{A{\mathstrut}^{c}\,\cap\,B}dt
\;\; \geq \;\;
\int_{A{\mathstrut}^{c}\,\cap\,B}t^{\,2}dt.
\end{displaymath}
Integrating on the RHS of (\ref{eqCentered}), we obtain
\begin{align}
\frac{(\text{vol}(A))^{2}}{12}
\;\; = \;\;
\frac{(|\, {\cal S}({P\mathstrut}_{\!X}) \,|\cdot \Delta_{\alpha,\,n})^{2}}{12}
\;\; & \leq \;\;
\frac{n}{|\, {\cal S}({P\mathstrut}_{\!X}) \,|}c_{X}
\, + \,
\frac{\Delta_{\alpha,\,n}^{2}}{12},
\nonumber \\
|\, {\cal S}({P\mathstrut}_{\!X}) \,|^{\,3}
\;\; & \leq \;\;
12c_{X}n\Delta_{\alpha,\,n}^{-2} \, + \, |\, {\cal S}({P\mathstrut}_{\!X}) \,|
\nonumber \\
& = \;\;
12c_{X}n^{1\,+\,2\alpha} \, + \, \underbrace{|\, {\cal S}({P\mathstrut}_{\!X}) \,|}_{\leq \; n}
\;\; \leq \;\;
(12c_{X} + 1)n^{1\,+\,2\alpha}.
\nonumber
\end{align}
$\square$



\section*{Appendix B}\label{AppendixB}


\subsection*{Proof of Lemma~\ref{LemQuant}:}


Using ${P\mathstrut}_{\!X}$ and ${p\mathstrut}_{Y|X}$, it is convenient to define a joint probability density function over $\mathbb{R}^{2}$ as
\begin{align}
{p\mathstrut}_{XY}(x, y) \;\; & \triangleq \;\;
\frac{{P\mathstrut}_{\!X}(i\Delta_{\alpha,\,n})}{\Delta_{\alpha,\,n}}
{p\mathstrut}_{Y|X}(y\,|\,i\Delta_{\alpha,\,n}),
\nonumber \\
& \;\;\;\;\;\;\;\;\;\;\;\;\;
\forall x \in \big[(i - 1/2)\Delta_{\alpha,\,n}, \;  (i + 1/2)\Delta_{\alpha,\,n}\big),
\;\forall i \in \mathbb{Z}, \;\forall y \in \mathbb{R},
\label{eqJointPDF}
\end{align}
which is changing only stepwise in $x$-direction.
Note that ${p\mathstrut}_{Y}$ 
of (\ref{eqMargEntropy}) is the $y$-marginal of ${p\mathstrut}_{XY}$.
This gives

\begin{equation} \label{eqJPDFVariance}
\mathbb{E}_{{p\mathstrut}_{XY}}\!\big[X^{2}\big]  \; = \; \mathbb{E}_{{P\mathstrut}_{\!X}}\!\big[X^{2}\big]
\, + \,
\frac{\Delta_{\alpha,\,n}^{2}}{12},
\;\;\;\;\;\;
\mathbb{E}_{{p\mathstrut}_{XY}}\!\big[Y^{2}\big] \; = \;
\mathbb{E}_{{P\mathstrut}_{\!X}{p\mathstrut}_{Y|X}}\!\big[Y^{2}\big].
\end{equation}
We proceed in two stages. First we quantize ${p\mathstrut}_{XY}(x, y)$ by rounding it {\em down} and check
the effect of this on the LHS of (\ref{eqCondEntropy})-(\ref{eqMargEntropy}).
Then we complement the total probability back to $1$, so that the type ${P\mathstrut}_{\!X}$ is conserved,
and check the effect of this on the RHS of (\ref{eqCondEntropy})-(\ref{eqMargEntropy}).

The quantization of ${p\mathstrut}_{XY}(x, y)$ is done by first replacing it with its infimum in each
rectangle $$\big[(i-1/2)\Delta_{\alpha,\,n}, \; (i+1/2)\Delta_{\alpha,\,n}\big) \; \times \;
\big[(j-1/2)\Delta_{\beta,\,n}, \; (j+1/2)\Delta_{\beta,\,n}\big):$$
\begin{align}
{p\mathstrut}_{XY}^{\inf}(x, y)
\;\; & \triangleq \;\;
\inf_{(j \, - \, 1/2)\Delta_{\beta,\,n} \; \leq \; \tilde{y} \; < \; (j \, + \, 1/2)\Delta_{\beta,\,n}}
{p\mathstrut}_{XY}(x, \tilde{y})
,
\nonumber \\
&
\;\;\;\;\;\;\;\;\;\;\;\;\;\;\;\;\;\;\;\;\;\;\;\;
\forall y \in \big[(j - 1/2)\Delta_{\beta,\,n}, \;  (j + 1/2)\Delta_{\beta,\,n}\big),
\;\forall j \in \mathbb{Z}, \;\forall x \in \mathbb{R},
\label{eqInf}
\end{align}
and then quantizing this infimum
down
to the nearest value $k\Delta_{\gamma,\,n}$, $k = 0, 1, 2, .\,.\,.\,$:  
\begin{align}
{p\mathstrut}_{XY}^{q}(x, y)
\;\; & \triangleq \;\;
\left\lfloor
{p\mathstrut}_{XY}^{\inf}(x, y)/\Delta_{\gamma,\,n}\right\rfloor\cdot\Delta_{\gamma,\,n},
\;\;\;\;\;\;
\forall (x, y) \in \mathbb{R}^{2}.
\label{eqQuantized}
\end{align}
Due to (\ref{eqCube}), the integral of ${p\mathstrut}_{XY}^{q}$ over $\mathbb{R}^{2}$ can be smaller than $1$ only by an integer 
multiple
of $1/n$.
The resulting difference from ${p\mathstrut}_{XY}(x, y)$ at each point $(x, y) \in \mathbb{R}^{2}$ can be bounded 
by a sum of two terms as
\begin{align}
0 \;\; \leq \;\;
{p\mathstrut}_{XY}(x, y) \, - \, {p\mathstrut}_{XY}^{q}(x, y)
\;\; & \leq \;\;
\underbrace{(K/\Delta_{\alpha,\,n})\cdot\Delta_{\beta,\,n}}_{\text{minimization}} \; + \!\!
\underbrace{\Delta_{\gamma,\,n}}_{\text{quantization}}
\nonumber \\
& = \;\;
K n^{\alpha \, - \, \beta}
\, + \, n^{\alpha \, + \, \beta \, - \, 1}
\;\; \leq \;\;
(K + 1)n^{-\delta}
\;\; \triangleq \;\; h,
\label{eqDifference}
\end{align}
where $\delta \,\triangleq\, \min\,\{\beta, \, 1-\beta\} \, - \, \alpha$ and $K$ is the parameter from (\ref{eqLipschitz}).

For (\ref{eqMargEntropy}) we will require the $y$-marginal of ${p\mathstrut}_{XY}^{q}$ from (\ref{eqQuantized}), defined in the usual manner:
\begin{align}
{p\mathstrut}_{Y}^{q}(y) \;\; \triangleq \;\;
\int_{\mathbb{R}}{p\mathstrut}_{XY}^{q}(x, y)dx
\;\; & \overset{a}{=} \;\;
\sum_{x\,\in\, {\cal S}({P\mathstrut}_{\!X})}{p\mathstrut}_{XY}^{q}(x, y)\Delta_{\alpha,\,n}
\;\; \overset{b}{\geq} \;\;
\sum_{x\,\in\, {\cal S}({P\mathstrut}_{\!X})}({p\mathstrut}_{XY}(x, y) - h)\Delta_{\alpha,\,n}
\label{eqMargUsual} \\
& = \;\;
\sum_{x\,\in\, {\cal S}({P\mathstrut}_{\!X})}{p\mathstrut}_{XY}(x, y)\Delta_{\alpha,\,n} \; - \; h\sum_{x\,\in\, {\cal S}({P\mathstrut}_{\!X})}\Delta_{\alpha,\,n}
\nonumber \\
& = \;\; {p\mathstrut}_{Y}(y) \; - \; h \,\,|\, {\cal S}({P\mathstrut}_{\!X}) \,|\,\,\Delta_{\alpha,\,n},
\;\;\;\;\;\; \forall y \in \mathbb{R},
\nonumber
\end{align}
where
the equality
($a$) follows from (\ref{eqJointPDF}), (\ref{eqInf}), (\ref{eqQuantized}),
and the inequality ($b$) follows by (\ref{eqDifference}).
Then
\begin{align}
0 \;\; \leq \;\;
{p\mathstrut}_{Y}(y) \, - \, {p\mathstrut}_{Y}^{q}(y)
\;\; & \leq \;\;
h \,\,|\, {\cal S}({P\mathstrut}_{\!X}) \,|\,\,\Delta_{\alpha,\,n}
\;\; \leq \;\;
(K + 1)(12c_{X} + 1)^{1/3}n^{-\delta_{1}} \;\; \triangleq \;\; \widetilde{h},
\label{eqYDifference}
\end{align}
where the last inequality follows by Lemma~\ref{LemSupportX}, (\ref{eqDifference}), (\ref{eqDelta}),
with $\delta_{1} \,\triangleq\, \min\,\{\beta, \, 1-\beta\} \, - \, (1 + 2\alpha)/3$.
Note that the previously defined $\delta > \delta_{1}$, 
while $\delta_{1} > 0$ if and only if $\alpha \in \big(0, \tfrac{1}{4}\big)$ and
$\tfrac{1}{3} + \tfrac{2}{3}\alpha < \beta < \tfrac{2}{3} - \tfrac{2}{3}\alpha$.

\bigskip

{\em The LHS of (\ref{eqCondEntropy})}


Now 
consider
the LHS of (\ref{eqCondEntropy}).
Note that each function ${p\mathstrut}_{Y|X}(\,\cdot\,|\,x)$ in (\ref{eqCondEntropy})
is bounded and has
a finite variance. It follows that it has 
a finite
differential entropy.
With (\ref{eqJointPDF}) we can rewrite the LHS of (\ref{eqCondEntropy}) as
\begin{align}
\sum_{x\,\in\,{\cal X}_{n}}{P\mathstrut}_{\!X}(x)\int_{\mathbb{R}}{p\mathstrut}_{Y|X}(y\,|\,x)
{\log\mathstrut}_{\!b}\,{p\mathstrut}_{Y|X}(y\,|\,x)dy
\;\; = \;\; &
\int\!\!\!\!\int_{\mathbb{R}^{2}}{p\mathstrut}_{XY}(x, y)
{\log\mathstrut}_{\!b}\,{p\mathstrut}_{XY}(x, y)dxdy
\nonumber \\
&
- \sum_{x\,\in\,{\cal X}_{n}}
{P\mathstrut}_{\!X}(x){\log\mathstrut}_{\!b}\,\frac{{P\mathstrut}_{\!X}(x)}{\Delta_{\alpha,\,n}}.
\label{eqCondEntIdeal}
\end{align}
Let us examine the possible increase in (\ref{eqCondEntIdeal})
when ${p\mathstrut}_{XY}$ is replaced with ${p\mathstrut}_{XY}^{q}$ defined by (\ref{eqInf})-(\ref{eqQuantized}).
For this, let us define a set in $\mathbb{R}^{2}$ with respect to the parameter $h$ of (\ref{eqDifference}):
\begin{align}
A \;\; & \triangleq \;\;
\big\{
(x, y): \;\; {p\mathstrut}_{XY}(x, y) > h
\big\},
\label{eqCountableUnion}
\end{align}
which is a countable union of disjoint rectangles
by the definition of ${p\mathstrut}_{XY}$ in (\ref{eqJointPDF}). Then
\begin{align}
&
\;\;\;\;
\int\!\!\!\!\int_{\mathbb{R}^{2}}{p\mathstrut}_{XY}^{q}(x, y)
{\log\mathstrut}_{\!b}\,{p\mathstrut}_{XY}^{q}(x, y)dxdy
\, - \,
\int\!\!\!\!\int_{\mathbb{R}^{2}}{p\mathstrut}_{XY}(x, y)
{\log\mathstrut}_{\!b}\,{p\mathstrut}_{XY}(x, y)dxdy
\nonumber \\
=  &
\;\;
\bigg[\int\!\!\!\!\int_{A{\mathstrut}^{c}}{p\mathstrut}_{XY}^{q}(x, y)
{\log\mathstrut}_{\!b}\,{p\mathstrut}_{XY}^{q}(x, y)dxdy
\, - \,
\int\!\!\!\!\int_{A{\mathstrut}^{c}}{p\mathstrut}_{XY}(x, y)
{\log\mathstrut}_{\!b}\,{p\mathstrut}_{XY}(x, y)dxdy
\bigg]
\nonumber \\
&
+
\int\!\!\!\!\int_{A}
\big[\,{p\mathstrut}_{XY}^{q}(x, y)
{\log\mathstrut}_{\!b}\,{p\mathstrut}_{XY}^{q}(x, y)
\, - \,
{p\mathstrut}_{XY}(x, y)
{\log\mathstrut}_{\!b}\,{p\mathstrut}_{XY}(x, y)\big]
\, dxdy.
\label{eqByParts}
\end{align}
Note that the minimum of the function $f(t) = t\,{\log\mathstrut}_{\!b} \,t$ occurs at $t = 1/e$.
Then for $h \leq 1/e$
we have ${p\mathstrut}_{XY}(x, y)
{\log\mathstrut}_{\!b}\,{p\mathstrut}_{XY}(x, y) \, \leq \, {p\mathstrut}_{XY}^{q}(x, y)
{\log\mathstrut}_{\!b}\,{p\mathstrut}_{XY}^{q}(x, y) \, \leq \, 0$
for all $(x, y) \in A{\mathstrut}^{c}$
and the first of the two terms in (\ref{eqByParts}) is upper-bounded as
\begin{align}
&
\;\;\,\,\,
\int\!\!\!\!\int_{A{\mathstrut}^{c}}{p\mathstrut}_{XY}^{q}(x, y)
{\log\mathstrut}_{\!b}\,{p\mathstrut}_{XY}^{q}(x, y)dxdy
\, - \,
\int\!\!\!\!\int_{A{\mathstrut}^{c}}{p\mathstrut}_{XY}(x, y)
{\log\mathstrut}_{\!b}\,{p\mathstrut}_{XY}(x, y)dxdy
\label{eqLocalEntropy} \\
\overset{h\, \leq \, 1/e}{\leq} \;\; &
-\int\!\!\!\!\int_{A{\mathstrut}^{c}}{p\mathstrut}_{XY}(x, y)
{\log\mathstrut}_{\!b}\,{p\mathstrut}_{XY}(x, y)dxdy
\;\; \overset{(*)}{=} \;\;
-p
\int\!\!\!\!\int_{\mathbb{R}^{2}}{p\mathstrut}_{XY}^{c}(x, y)
{\log\mathstrut}_{\!b}\,{p\mathstrut}_{XY}^{c}(x, y)dxdy
\, - \,
p\,{\log\mathstrut}_{\!b}\,p,
\nonumber
\end{align}
where the equality ($*$) is appropriate
for the case when the upper bound is positive, 
with the definitions:
\begin{align}
p \;\; & \triangleq \;\;
\int\!\!\!\!\int_{A{\mathstrut}^{c}}{p\mathstrut}_{XY}(x, y)dxdy,
\label{eqP} \\
{p\mathstrut}_{XY}^{c}(x, y)
\;\; & \triangleq \;\;
\Bigg\{
\begin{array}{r r}
{p\mathstrut}_{XY}(x, y)/p, & (x, y) \in  A{\mathstrut}^{c}, \\
0, & \text{o.w.}
\end{array}
\nonumber
\end{align}
Next we upper-bound the entropy of the probability density function ${p\mathstrut}_{XY}^{c}$ on the RHS of (\ref{eqLocalEntropy})
by that of 
a Gaussian PDF.
By (\ref{eqJPDFVariance}) we have

\begin{align}
&
\;\;\;\;\;\;\;\;\;\;\;\;\;\;\,\,
c_{X} \, + \, 1/12 \, + \, c_{Y}
\, \geq \,
\int\!\!\!\!\int_{A{\mathstrut}^{c}}{p\mathstrut}_{XY}(x, y)(x^{2} + y^{2})dxdy
\, = \,
p\int\!\!\!\!\int_{\mathbb{R}^{2}}{p\mathstrut}_{XY}^{c}(x, y)(x^{2} + y^{2})dxdy,
\label{eqPartialMoment} \\
&
\;\;\;\;\;\;\;\;\;\;\;\;\,\,\,
(c_{X} \, + \, 1/12 \, + \, c_{Y})/p
\, \geq \,
\int_{\mathbb{R}}{p\mathstrut}_{X}^{c}(x)x^{2}dx
\, + \,
\int_{\mathbb{R}}{p\mathstrut}_{Y}^{c}(y)y^{2}dy,
\nonumber \\
&
{\log\mathstrut}_{\!b}\,\big(2\pi e(c_{X} \, + \, 1/12 \, + \, c_{Y})/p\big)
\, \geq \,
-\int\!\!\!\!\int_{\mathbb{R}^{2}}{p\mathstrut}_{XY}^{c}(x, y)
{\log\mathstrut}_{\!b}\,{p\mathstrut}_{XY}^{c}(x, y)dxdy.
\nonumber
\end{align}
So we can rewrite the bound of (\ref{eqLocalEntropy}) in terms of $p$ defined by (\ref{eqP}):
\begin{align}
&
\int\!\!\!\!\int_{A{\mathstrut}^{c}}{p\mathstrut}_{XY}^{q}(x, y)
{\log\mathstrut}_{\!b}\,{p\mathstrut}_{XY}^{q}(x, y)dxdy
\, - \,
\int\!\!\!\!\int_{A{\mathstrut}^{c}}{p\mathstrut}_{XY}(x, y)
{\log\mathstrut}_{\!b}\,{p\mathstrut}_{XY}(x, y)dxdy
\nonumber \\
\overset{h\, \leq \, 1/e}{\leq} \;\; &
-2p\,{\log\mathstrut}_{\!b}\,p
\, + \, p\,{\log\mathstrut}_{\!b}\,\big(2\pi e(c_{X} + c_{Y} + 1/12)\big).
\label{eqLocalp}
\end{align}
From (\ref{eqP}) and (\ref{eqCountableUnion}) it is clear that $p \rightarrow 0$ as $h \rightarrow 0$.
In order to relate between them, let us rewrite the inequality in (\ref{eqPartialMoment})
again as
\begin{align}
& c_{X} \, + \, 1/12 \, + \, c_{Y}
\;\; \geq \;\;
\int\!\!\!\!\int_{A{\mathstrut}^{c}}{p\mathstrut}_{XY}(x, y)(x^{2} + y^{2})dxdy
\nonumber \\
= \;\; &
\int\!\!\!\!\int_{B{\mathstrut}_{1}}h\cdot(x^{2} + y^{2})dxdy
\; + \;
\int\!\!\!\!\int_{A{\mathstrut}^{c}\,\cap\,B{\mathstrut}^{c}_{1}}{p\mathstrut}_{XY}(x, y)(x^{2} + y^{2})dxdy
\; - \;
\int\!\!\!\!\int_{A\,\cap\,B{\mathstrut}_{1}}h\cdot(x^{2} + y^{2})dxdy
\nonumber \\
&
\;\;\;\;\;\;\;\;\;\;\;\;\;\;\;\;\;\;\;\;\;\;\;\;\;\;\;\;\;\;\;\;\;\;\;\;\;\;\;\;
\;\;\;\;\;\;\;\;\;\;\;\;\;\;\;\;\;\;\;\;\;\;\;\;\;\;\;\;\;\;\,\,
-\int\!\!\!\!\int_{A{\mathstrut}^{c}\,\cap\,B{\mathstrut}_{1}}\big(h - {p\mathstrut}_{XY}(x, y)\big)(x^{2} + y^{2})dxdy
\nonumber \\
\geq \;\; &
\int\!\!\!\!\int_{B{\mathstrut}_{1}}h\cdot(x^{2} + y^{2})dxdy
\; + \;
\frac{p}{h\pi}\int\!\!\!\!\int_{A{\mathstrut}^{c}\,\cap\,B{\mathstrut}^{c}_{1}}{p\mathstrut}_{XY}(x, y)dxdy
\; - \;
\frac{p}{h\pi}\int\!\!\!\!\int_{A\,\cap\,B{\mathstrut}_{1}}hdxdy
\nonumber \\
&
\;\;\;\;\;\;\;\;\;\;\;\;\;\;\;\;\;\;\;\;\;\;\;\;\;\;\;\;\;\;\;\;\;\;\;\;\;\;\;\;
\;\;\;\;\;\;\;\;\;\;\;\;\;\;\;\;\;\;\;\,\,
-\frac{p}{h\pi}\int\!\!\!\!\int_{A{\mathstrut}^{c}\,\cap\,B{\mathstrut}_{1}}\big(h - {p\mathstrut}_{XY}(x, y)\big)dxdy
\nonumber \\
= \;\; &
\int\!\!\!\!\int_{B{\mathstrut}_{1}}h\cdot(x^{2} + y^{2})dxdy \;\; = \;\; \frac{p^{2}}{2\pi h},
\nonumber
\end{align}
where 
we use the disk set $B{\mathstrut}_{1} \triangleq \big\{ (x, y): \; h\pi(x^{2} + y^{2}) \, \leq \, p\big\}$,
centered around zero.
This results in the following upper bound on $p$ in terms of $h$:
\begin{equation} \label{eqHP}
c_{1} h^{1/2}
\;\; \geq \;\; p,
\end{equation}
where $c_{1} \triangleq \sqrt{2\pi(c_{X} + c_{Y} + 1/12)}$.
Substituting the LHS of (\ref{eqHP}) in (\ref{eqLocalp})
in place of $p$, we obtain the following upper bound on
the first half 
of (\ref{eqByParts}) in terms of $h$
of (\ref{eqDifference}), (\ref{eqCountableUnion}):
\begin{align}
&
\int\!\!\!\!\int_{A{\mathstrut}^{c}}{p\mathstrut}_{XY}^{q}(x, y)
{\log\mathstrut}_{\!b}\,{p\mathstrut}_{XY}^{q}(x, y)dxdy
\, - \,
\int\!\!\!\!\int_{A{\mathstrut}^{c}}{p\mathstrut}_{XY}(x, y)
{\log\mathstrut}_{\!b}\,{p\mathstrut}_{XY}(x, y)dxdy
\nonumber \\
\leq \;\; &
c_{1}
h^{1/2}
\,{\log\mathstrut}_{\!b}\,(e/h),
\;\;\;\;\;\;
h \,\leq \,
1/(c_{1}e)^{2}.
\label{eqFirstTerm}
\end{align}

In the second term of (\ref{eqByParts})
for $(x, y) \in A$ the integrand can be upper-bounded by Lemma~\ref{Lemxlogx} with its parameters $t$ and $t_{1}$ such that
\begin{align}
t_{1} \; = \; {p\mathstrut}_{XY}^{q}(x, y)
\; & \leq \; {p\mathstrut}_{XY}(x, y) \; = \; t_{1} + t,
\;\;\;\;\;\;\;\;\;
t \, \leq \, h \, \leq \, 1/e.
\nonumber
\end{align}
This gives
\begin{align}
&
\int\!\!\!\!\int_{A}
\big[\,{p\mathstrut}_{XY}^{q}(x, y)
{\log\mathstrut}_{\!b}\,{p\mathstrut}_{XY}^{q}(x, y)
\, - \,
{p\mathstrut}_{XY}(x, y)
{\log\mathstrut}_{\!b}\,{p\mathstrut}_{XY}(x, y)\big]
\,
dxdy
\nonumber \\
\;\; 
\leq
\;\; &
\text{vol}(A)\,
h\,
{\log\mathstrut}_{\!b}\,(1/h),
\;\;\;\;\;\; h \, \leq \, 1/e,
\label{eqVolh}
\end{align}
where
$\text{vol}(A)$ is the total area of $A$.
To find an upper bound on $\text{vol}(A)$, we use (\ref{eqJPDFVariance}):
\begin{align}
c_{X} \, + \, 1/12 \, + \, c_{Y}
\;\; & \geq \;\;
\int\!\!\!\!\int_{A}{p\mathstrut}_{XY}(x, y)(x^{2} + y^{2})dxdy
\;\; \geq \;\;
\int\!\!\!\!\int_{A}h \cdot(x^{2} + y^{2})dxdy
\nonumber \\
& = \;\;
h\left[
\int\!\!\!\!\int_{A\,\cap\,B{\mathstrut}_{2}}(x^{2} + y^{2})dxdy
\, +
\int\!\!\!\!\int_{A\,\cap\,B{\mathstrut}^{c}_{2}}(x^{2} + y^{2})dxdy
\right]
\nonumber \\
& \overset{a}{\geq} \;\;
h\left[
\int\!\!\!\!\int_{A\,\cap\,B{\mathstrut}_{2}}(x^{2} + y^{2})dxdy
\, +
\int\!\!\!\!\int_{A{\mathstrut}^{c}\,\cap\,B{\mathstrut}_{2}}(x^{2} + y^{2})dxdy
\right]
\nonumber \\
& = \;\;
\int\!\!\!\!\int_{B{\mathstrut}_{2}}h \cdot(x^{2} + y^{2})dxdy
\;\; = \;\;
\frac{h}{2\pi}(\text{vol}(A))^{2},
\nonumber
\end{align}
where in ($a$) we use the disk set
$B{\mathstrut}_{2} \, \triangleq \, \big\{(x, y): \; \pi(x^{2} + y^{2}) \, \leq \, \text{vol}(A)\big\}$,
centered around zero,
of the same total area as $A$, and the resulting property that
$\text{vol}(A{\mathstrut}^{c}\cap B{\mathstrut}_{2}) = \text{vol}(A\cap B{\mathstrut}^{c}_{2})$.
So that
\begin{equation} \label{eqVol}
c_{1}h^{-1/2}
\;\; \geq \;\;
\text{vol}(A).
\end{equation}
Continuing (\ref{eqVolh}), therefore we obtain the following
upper bound on the second term in (\ref{eqByParts}):
\begin{align}
\int\!\!\!\!\int_{A}
\big[\,{p\mathstrut}_{XY}^{q}(x, y)
{\log\mathstrut}_{\!b}\,{p\mathstrut}_{XY}^{q}(x, y)
\, - \,
{p\mathstrut}_{XY}(x, y)
{\log\mathstrut}_{\!b}\,{p\mathstrut}_{XY}(x, y)
\big]
dxdy
\;
& \leq
\; 
c_{1}
h^{1/2}\,
{\log\mathstrut}_{\!b}\,(1/h),
\;\;\; h \leq 1/e.
\label{eqSecondTerm}
\end{align}
Putting (\ref{eqByParts}), (\ref{eqFirstTerm}) and (\ref{eqSecondTerm}) together:
\begin{align}
&
\int\!\!\!\!\int_{\mathbb{R}^{2}}{p\mathstrut}_{XY}^{q}(x, y)
{\log\mathstrut}_{\!b}\,{p\mathstrut}_{XY}^{q}(x, y)dxdy
\, - \,
\int\!\!\!\!\int_{\mathbb{R}^{2}}{p\mathstrut}_{XY}(x, y)
{\log\mathstrut}_{\!b}\,{p\mathstrut}_{XY}(x, y)dxdy
\nonumber \\
\leq \;\; &
2c_{1}
h^{1/2}
\,{\log\mathstrut}_{\!b} (\sqrt{e}/h),
\;\;\;\;\;\; h \, \leq \, 1/(c_{1}e)^{2},
\label{eqPutting}
\end{align}
where $c_{1}$ 
and $h$ are such as in (\ref{eqHP}) 
and (\ref{eqDifference}), respectively.
So that if $\delta > 0$ in (\ref{eqDifference}), then the 
possible increase in (\ref{eqCondEntIdeal})
caused by substitution of ${p\mathstrut}_{XY}^{q}$ in place of ${p\mathstrut}_{XY}$ is at most $o(1)$.

Later on, for the RHS of (\ref{eqCondEntropy})-(\ref{eqMargEntropy}) we will require also
the loss in the total probability incurred in the replacement of ${p\mathstrut}_{XY}$ by ${p\mathstrut}_{XY}^{q}$.
This loss is strictly positive and tends to zero with $h$ of (\ref{eqDifference}): 

\begin{align}
0 \;\; < \;\; p_{1} \;\; & \triangleq \;\;
\underbrace{\int\!\!\!\!\int_{\mathbb{R}^{2}}{p\mathstrut}_{XY}(x, y)dxdy}_{= \; 1} \; - \;
\int\!\!\!\!\int_{\mathbb{R}^{2}}{p\mathstrut}_{XY}^{q}(x, y)dxdy
\nonumber \\
& \overset{a}{\leq} \;\;
\underbrace{\int\!\!\!\!\int_{A{\mathstrut}^{c}}{p\mathstrut}_{XY}(x, y)dxdy}_{= \; p} \; + \;
\int\!\!\!\!\int_{A}\underbrace{\big({p\mathstrut}_{XY}(x, y) - {p\mathstrut}_{XY}^{q}(x, y)\big)}_{\leq \; h}dxdy
\nonumber \\
& \overset{b}{\leq} \;\;
p \, + \, h\cdot\text{vol}(A) \;\; \overset{c}{\leq} \;\;
c_{1}h^{1/2} \, + \, c_{1}h^{1/2} \;\; = \;\; 2c_{1}h^{1/2},
\label{eqProbLoss}
\end{align}
where the set $A$ in ($a$) is defined in (\ref{eqCountableUnion}),
($b$) follows by (\ref{eqP}) and (\ref{eqDifference}),
and ($c$) follows by (\ref{eqHP}), 
(\ref{eqVol}).

\bigskip


{\em The LHS of (\ref{eqMargEntropy})}


Consider next the LHS of (\ref{eqMargEntropy}). Since ${p\mathstrut}_{Y} \in {\cal L}$
and has a finite variance, its differential entropy is finite.
Let us examine the possible decrease in the LHS of (\ref{eqMargEntropy})
when ${p\mathstrut}_{Y}$ is replaced with ${p\mathstrut}_{Y}^{q}$ defined in (\ref{eqMargUsual}).
For this, let us define a set in $\mathbb{R}$ with respect to the parameter $\widetilde{h}$ of (\ref{eqYDifference}):
\begin{align}
\widetilde{A} \;\; & \triangleq \;\;
\big\{
y: \;\; {p\mathstrut}_{Y}(y) > \widetilde{h}
\big\},
\label{eqCountableUnion2}
\end{align}
which is a countable union of disjoint open intervals. Then
\begin{align}
&
\int_{\mathbb{R}}{p\mathstrut}_{Y}(y)
{\log\mathstrut}_{\!b}\,{p\mathstrut}_{Y}(y)dy
\, - \,
\int_{\mathbb{R}}{p\mathstrut}_{Y}^{q}(y)
{\log\mathstrut}_{\!b}\,{p\mathstrut}_{Y}^{q}(y)dy
\nonumber \\
= \; &
\int_{\widetilde{A}{\mathstrut}^{c}}
\big[\,{p\mathstrut}_{Y}(y)
{\log\mathstrut}_{\!b}\,{p\mathstrut}_{Y}(y)
\, - \,
{p\mathstrut}_{Y}^{q}(y)
{\log\mathstrut}_{\!b}\,{p\mathstrut}_{Y}^{q}(y)
\big]
\,dy
\; + \, 
\int_{\widetilde{A}}
\big[\,
{p\mathstrut}_{Y}(y)
{\log\mathstrut}_{\!b}\,{p\mathstrut}_{Y}(y)
\, - \,
{p\mathstrut}_{Y}^{q}(y)
{\log\mathstrut}_{\!b}\,{p\mathstrut}_{Y}^{q}(y)
\big]
\,dy.
\label{eqByParts2}
\end{align}
For $\widetilde{h} \leq 1/e$
we have ${p\mathstrut}_{Y}(y)
{\log\mathstrut}_{\!b}\,{p\mathstrut}_{Y}(y) \, \leq \, {p\mathstrut}_{Y}^{q}(y)
{\log\mathstrut}_{\!b}\,{p\mathstrut}_{Y}^{q}(y) \, \leq \, 0$
for all $y \in \widetilde{A}{\mathstrut}^{c}$
and the first of the two terms in (\ref{eqByParts2}) is non-positive:
\begin{align}
&
\int_{\widetilde{A}{\mathstrut}^{c}}
\big[\,
{p\mathstrut}_{Y}(y)
{\log\mathstrut}_{\!b}\,{p\mathstrut}_{Y}(y)
\, - \,
{p\mathstrut}_{Y}^{q}(y)
{\log\mathstrut}_{\!b}\,{p\mathstrut}_{Y}^{q}(y)
\big]
\,dy
\;\; \leq \;\; 0.
\label{eqLocalEntropy2} 
\end{align}

In the second term of (\ref{eqByParts2})
for $y \in \widetilde{A}$ the integrand can be upper-bounded by Lemma~\ref{Lemxlogx} with its parameters $t$ and $t_{1}$ such that
\begin{align}
t_{1} \; & = \; {p\mathstrut}_{Y}^{q}(y)
\; \leq \; t_{1} + t \; 
= \; {p\mathstrut}_{Y}(y)
\; \leq \; \sup_{y\, \in \, \mathbb{R}}{p\mathstrut}_{Y}(y)
\; \leq \; \sqrt{K},
\;\;\;\;\;\;\;\;\;
t \, 
\leq \, \widetilde{h} \, \leq \, 1/e,
\nonumber
\end{align}
where $K$
is the parameter from (\ref{eqLipschitz}).
This gives
\begin{align}
\int_{\widetilde{A}}
\big[\,
{p\mathstrut}_{Y}(y)
{\log\mathstrut}_{\!b}\,{p\mathstrut}_{Y}(y)
\, - \,
{p\mathstrut}_{Y}^{q}(y)
{\log\mathstrut}_{\!b}\,{p\mathstrut}_{Y}^{q}(y)
\big]
\,dy
\;\;
&
\overset{\widetilde{h}\, \leq \, 1/e}{\leq}
\;\;
\text{vol}(\widetilde{A})\,
\widetilde{h}\cdot
\max
\big\{
{\log\mathstrut}_{\!b}\,(1/\widetilde{h}),
\;
{\log\mathstrut}_{\!b} (e\sqrt{K})
\big\},
\label{eqVolh2}
\end{align}
where $\text{vol}(\widetilde{A})$ is the total length of $\widetilde{A}$.
It remains to find an upper bound on $\text{vol}(\widetilde{A})$. 
We use (\ref{eqJPDFVariance}):

\begin{align}
c_{Y}
\;\; \geq \;\;
\int_{\widetilde{A}}{p\mathstrut}_{Y}(y)y^{2}dy
\;\; 
\geq \;\;
\int_{\widetilde{A}}\widetilde{h} \cdot y^{2}dy
\;\; & = \;\;
\widetilde{h}\left[
\int_{\widetilde{A}\,\cap\,\widetilde{B}{\mathstrut}_{2}}y^{2}dy
\, +
\int_{\widetilde{A}\,\cap\,\widetilde{B}{\mathstrut}^{c}_{2}}y^{2}dy
\right]
\nonumber \\
& \overset{a}{\geq} \;\;
\widetilde{h}\left[
\int_{\widetilde{A}\,\cap\,\widetilde{B}{\mathstrut}_{2}}y^{2}dy
\, +
\int_{\widetilde{A}{\mathstrut}^{c}\,\cap\,\widetilde{B}{\mathstrut}_{2}}y^{2}dy
\right]
\nonumber \\
& = \;\;
\int_{\widetilde{B}{\mathstrut}_{2}}\widetilde{h} \cdot y^{2}dy
\;\; = \;\;
\frac{\widetilde{h}}{12}(\text{vol}(\widetilde{A}))^{3},
\nonumber
\end{align}
where in ($a$) we use the interval set
$\widetilde{B}{\mathstrut}_{2} \, \triangleq \, \big[
-\!\text{vol}(\widetilde{A})/2, \;
\text{vol}(\widetilde{A})/2
\,\big]$, centered around zero, and
of the same total length as $\widetilde{A}$ with the resulting property that
$\text{vol}(\widetilde{A}{\mathstrut}^{c}\cap \widetilde{B}{\mathstrut}_{2}) = \text{vol}(\widetilde{A}\cap \widetilde{B}{\mathstrut}^{c}_{2})$.
So that
\begin{equation} \label{eqVol2}
\widetilde{c}_{1}\widetilde{h}^{-1/3}
\;\; \geq \;\;
\text{vol}(\widetilde{A}),
\end{equation}
where $\widetilde{c}_{1} \triangleq (12c_{Y})^{1/3}$.
Continuing (\ref{eqVolh2}), with (\ref{eqVol2}) we obtain the following
upper bound on the second term in (\ref{eqByParts2}), which is by (\ref{eqLocalEntropy2}) also
an upper bound
on both terms of (\ref{eqByParts2}):
\begin{align}
&
\int_{\mathbb{R}}
{p\mathstrut}_{Y}(y)
{\log\mathstrut}_{\!b}\,{p\mathstrut}_{Y}(y)dy
\, - \,
\int_{\mathbb{R}}
{p\mathstrut}_{Y}^{q}(y)
{\log\mathstrut}_{\!b}\,{p\mathstrut}_{Y}^{q}(y)
dy
\;\;
\overset{\widetilde{h}\, \leq \, 1/e}{\leq}
\;\;
\widetilde{c}_{1}
\widetilde{h}^{2/3}
\max
\big\{
{\log\mathstrut}_{\!b}\,(1/\widetilde{h}),
\;
{\log\mathstrut}_{\!b} (e\sqrt{K})
\big\}.
\label{eqSecondTerm2}
\end{align}
So that if $\delta_{1} > 0$ in (\ref{eqYDifference}), then the possible decrease
caused by substitution of ${p\mathstrut}_{Y}^{q}$ in place of ${p\mathstrut}_{Y}$ on the LHS of (\ref{eqMargEntropy}) is at most $o(1)$.

\bigskip

{\em The LHS of (\ref{eqLogGaussian})}


Let us define two functions of $y \in \mathbb{R}$:
\begin{equation} \label{eqRelationship}
Q_{\beta}(y) \; \triangleq \; \Delta_{\beta,\,n}\cdot\lfloor y/\Delta_{\beta,\,n} + 1/2\rfloor,
\;\;\;\;\;\;
r_{\beta}(y) \; \triangleq \; y \, - \, Q_{\beta}(y).
\end{equation}
Then
with ${p\mathstrut}_{XY}^{q}$ defined in (\ref{eqQuantized}) we can obtain a lower bound for the expression on the LHS of (\ref{eqLogGaussian}):
\begin{align}
&
\Delta_{\alpha,\,n}
\Delta_{\beta,\,n}
\sum_{\substack{x\,\in\,{\cal X}_{n}\\
y\,\in\,{\cal Y}_{n}}}
{p\mathstrut}_{XY}^{q}(x, y)
(y - x)^{2}
\;\; = \;\;
\Delta_{\alpha,\,n}
\sum_{x\,\in\,{\cal X}_{n}}
\int_{\mathbb{R}}
{p\mathstrut}_{XY}^{q}(x, y)
\big(y - x - r_{\beta}(y)\big)^{2}dy
\nonumber \\
\overset{a}{\leq} \;\; &
\Delta_{\alpha,\,n}
\sum_{x\,\in\,{\cal X}_{n}}
\int_{\mathbb{R}}
{p\mathstrut}_{XY}(x, y)
(y - x)^{2}dy
\; + \;
\Delta_{\alpha,\,n}
\Delta_{\beta,\,n}^{2}
\sum_{x\,\in\,{\cal X}_{n}}
\int_{\mathbb{R}}
{p\mathstrut}_{XY}(x, y)dy
\nonumber \\
& \;\;\;\;\;\;\;\;\;\;\;\;\;\;\;\;\;\;\;\;\;\;\;\;\;\;\;\;\;\;\;\;\;\;\;\;\;\;\;\;\;\;\;\;\;\;\;\;\;\;\,\,\,
+ \;
\Delta_{\alpha,\,n}
\Delta_{\beta,\,n}
\sum_{x\,\in\,{\cal X}_{n}}
\int_{\mathbb{R}}
{p\mathstrut}_{XY}(x, y)
| y - x | dy
\nonumber \\
\overset{b}{\leq} \;\; &
\sum_{x\,\in\,{\cal X}_{n}}{P\mathstrut}_{\!X}(x)\int_{\mathbb{R}}{p\mathstrut}_{Y|X}(y\,|\,x)
(y - x)^{2}dy
\; + \;
\Delta_{\beta,\,n}^{2}
\; + \;
\Delta_{\beta,\,n}
\bigg[
\sum_{x\,\in\,{\cal X}_{n}}{P\mathstrut}_{\!X}(x)\int_{\mathbb{R}}{p\mathstrut}_{Y|X}(y\,|\,x)
(y - x)^{2} dy
\bigg]^{1/2}
\nonumber \\
\overset{c}{\leq} \;\; &
\sum_{x\,\in\,{\cal X}_{n}}{P\mathstrut}_{\!X}(x)\int_{\mathbb{R}}{p\mathstrut}_{Y|X}(y\,|\,x)
(y - x)^{2}dy
\; + \;
\underbrace{
\Delta_{\beta,\,n}^{2}
\; + \;
\Delta_{\beta,\,n}
\sqrt{c_{XY}}
}_{o(1)}
,
\label{eqLHS}
\end{align}
where ($a$) follows because ${p\mathstrut}_{XY}^{q}(x, y)\leq {p\mathstrut}_{XY}(x, y)$
and $|\,r_{\beta}(y)\,|\,\leq\, \Delta_{\beta,\,n}/2$, ($b$) follows by (\ref{eqJointPDF}) and Jensen's inequality
for the concave ($\cap$) function $f(t) = \sqrt{t}$,
and ($c$) follows by the condition of the lemma.

\bigskip

{\em Joint type ${P\mathstrut}_{\!XY}$}


Let us define two mutually-complementary probability masses for each $(x, y) \in {\cal X}_{n} \times {\cal Y}_{n}$:
\begin{align}
{P\mathstrut}_{\!XY}^{q}(x, y) \;\; & \triangleq \;\;
{p\mathstrut}_{XY}^{q}(x, y)\Delta_{\alpha,\,n}\Delta_{\beta,\,n},
\label{eqPDFtoType} \\
{P\mathstrut}_{\!XY}^{a}(x, y) \;\; & \triangleq \;\;
\Bigg\{
\begin{array}{r r}
{P\mathstrut}_{\!X}(x) -
\sum_{\tilde{y}\,\in\,{\cal Y}_{n}}{P\mathstrut}_{\!XY}^{q}(x, \tilde{y}), & \;\;\;
y =  Q_{\beta}(x), \\
0, & \;\;\; \text{o.w.},
\end{array}
\label{eqDiagonal} 
\end{align}
where $Q_{\beta}(\cdot)$ is defined in (\ref{eqRelationship}).
It follows from 
(\ref{eqQuantized}) and (\ref{eqCube}),
that each number ${P\mathstrut}_{\!XY}^{q}(x, y)$
is an integer multiple of $1/n$
and $\sum_{y\,\in\,{\cal Y}_{n}}{P\mathstrut}_{\!XY}^{q}(x, y) \leq {P\mathstrut}_{\!X}(x)$
for each $x \in {\cal X}_{n}$.
Then a joint type can be formed with the two definitions above:
\begin{equation} \label{eqJointType}
{P\mathstrut}_{\!XY}(x, y) \;\; \triangleq \;\; {P\mathstrut}_{\!XY}^{q}(x, y) \, + \, {P\mathstrut}_{\!XY}^{a}(x, y),
\;\;\;\;\;\; \forall (x, y) \in {\cal X}_{n}\times {\cal Y}_{n},
\end{equation}
such that ${P\mathstrut}_{\!XY} \in {\cal P}_{n}({\cal X}_{n}\times {\cal Y}_{n})$
and $\sum_{y\,\in\,{\cal Y}_{n}}{P\mathstrut}_{\!XY}(x, y) = {P\mathstrut}_{\!X}(x)$ for each $x \in {\cal X}_{n}$.

\bigskip

{\em The RHS of (\ref{eqLogGaussian})}


Having defined ${P\mathstrut}_{\!XY}$ and ${P\mathstrut}_{\!XY}^{q}$,
let us examine the possible decrease in the expression found on the RHS of (\ref{eqLogGaussian}) when 
${P\mathstrut}_{\!XY}$ inside that expression is replaced with ${P\mathstrut}_{\!XY}^{q}$:
\begin{align}
&
\sum_{\substack{x\,\in\,{\cal X}_{n}\\
y\,\in\,{\cal Y}_{n}}}
{P\mathstrut}_{\!XY}(x, y)
(y - x)^{2}
\; - \;
\sum_{\substack{x\,\in\,{\cal X}_{n}\\
y\,\in\,{\cal Y}_{n}}}
{P\mathstrut}_{\!XY}^{q}(x, y)
(y - x)^{2}
\nonumber \\
\overset{a}{=} \;\; &
\sum_{\substack{x\,\in\,{\cal X}_{n}\\
y\,\in\,{\cal Y}_{n}}}
{P\mathstrut}_{\!XY}^{a}(x, y)
(y - x)^{2}
\;\; \overset{b}{=} \;\;
\sum_{\substack{x\,\in\,{\cal X}_{n}\\
y\,\in\,{\cal Y}_{n}}}
{P\mathstrut}_{\!XY}^{a}(x, y)
r_{\beta}^{2}(x)
\;\;
\overset{c}{\leq} \;\; 
p_{1}\Delta_{\beta,\,n}^{2}/4
\;\; \overset{d}{\leq} \;\;
\underbrace{c_{1}h^{1/2}\Delta_{\beta,\,n}^{2}/2}_{o(1)},
\label{eqRHS}
\end{align}
where ($a$) follows by (\ref{eqJointType}),
($b$) follows according to the definitions (\ref{eqDiagonal}) and (\ref{eqRelationship}),
($c$) follows because $|\,r_{\beta}(x)\,| \, \leq \, \Delta_{\beta,\,n}/2$ and because
\begin{align}
\sum_{\substack{x\,\in\,{\cal X}_{n}\\
y\,\in\,{\cal Y}_{n}}}
{P\mathstrut}_{\!XY}^{a}(x, y)
\;\; & \overset{(\ref{eqJointType})}{=} \;\;
1 \, - \,
\sum_{\substack{x\,\in\,{\cal X}_{n}\\
y\,\in\,{\cal Y}_{n}}}
{P\mathstrut}_{\!XY}^{q}(x, y)
\;\; \overset{(\ref{eqPDFtoType})}{=} \;\;
1 \, - \,
\Delta_{\alpha,\,n}
\Delta_{\beta,\,n}
\sum_{\substack{x\,\in\,{\cal X}_{n}\\
y\,\in\,{\cal Y}_{n}}}
{p\mathstrut}_{XY}^{q}(x, y)
\nonumber \\
& \overset{(\ref{eqQuantized})}{=} \;\;
1 \, - \,
\int\!\!\!\!\int_{\mathbb{R}^{2}}{p\mathstrut}_{XY}^{q}(x, y)dxdy
\;\; \overset{(\ref{eqProbLoss})}{=} \;\; p_{1},
\label{eqLostProb}
\end{align}
then ($d$) follows by the upper bound on $p_{1}$ of (\ref{eqProbLoss}).
Since by definition (\ref{eqPDFtoType})
we also have
\begin{displaymath}
\sum_{\substack{x\,\in\,{\cal X}_{n}\\
y\,\in\,{\cal Y}_{n}}}
{P\mathstrut}_{\!XY}^{q}(x, y)
(y - x)^{2} \;\; 
=
\;\;
\Delta_{\alpha,\,n}
\Delta_{\beta,\,n}
\sum_{\substack{x\,\in\,{\cal X}_{n}\\
y\,\in\,{\cal Y}_{n}}}
{p\mathstrut}_{XY}^{q}(x, y)
(y - x)^{2},
\end{displaymath}
which is exactly the beginning of (\ref{eqLHS}),
then combining (\ref{eqLHS}) and (\ref{eqRHS}) we obtain (\ref{eqLogGaussian}).
The remainder of the proof for (\ref{eqCondEntropy}) and (\ref{eqMargEntropy}) will easily follow by Lemma~\ref{Lemxlogx}
applied to corresponding discrete entropy expressions with probability masses.

\bigskip

{\em The RHS of (\ref{eqCondEntropy})}


In order to upper-bound the expression on the RHS of (\ref{eqCondEntropy}), it is convenient to write:
\begin{align}
&
\sum_{\substack{x\,\in\,{\cal X}_{n}\\
y\,\in\,{\cal Y}_{n}}}
{P\mathstrut}_{\!XY}(x, y)\,{\log\mathstrut}_{\!b}\,\frac{{P\mathstrut}_{\!XY}(x, y)}
{\Delta_{\alpha,\,n}
\Delta_{\beta,\,n}}
\, - \,
\sum_{\substack{x\,\in\,{\cal X}_{n}\\
y\,\in\,{\cal Y}_{n}}}
{P\mathstrut}_{\!XY}^{q}(x, y)\,{\log\mathstrut}_{\!b}\,\frac{{P\mathstrut}_{\!XY}^{q}(x, y)}
{\Delta_{\alpha,\,n}
\Delta_{\beta,\,n}}
\nonumber \\
\overset{a}{=} \;\; &
\sum_{\substack{x\,\in\,{\cal X}_{n}\\
y\,\in\,{\cal Y}_{n}}}
\Big[\,
{P\mathstrut}_{\!XY}(x, y)\,{\log\mathstrut}_{\!b}\,{P\mathstrut}_{\!XY}(x, y)
\, - \,
{P\mathstrut}_{\!XY}^{q}(x, y)\,{\log\mathstrut}_{\!b}\,{P\mathstrut}_{\!XY}^{q}(x, y)
\,\Big]
\; + \;
\sum_{\substack{x\,\in\,{\cal X}_{n}\\
y\,\in\,{\cal Y}_{n}}}
{P\mathstrut}_{\!XY}^{a}(x, y)\,{\log\mathstrut}_{\!b}\,\frac{1}{\Delta_{\alpha,\,n}
\Delta_{\beta,\,n}}
\nonumber \\
\overset{b}{\leq} \;\; &
\sum_{\substack{x\,\in\,{\cal X}_{n}\\
y\,\in\,{\cal Y}_{n}}}
\max\Big\{\!
-{P\mathstrut}_{\!XY}^{a}(x, y)\,{\log\mathstrut}_{\!b}\,{P\mathstrut}_{\!XY}^{a}(x, y), \;\;
{P\mathstrut}_{\!XY}^{a}(x, y)\,{\log\mathstrut}_{\!b}\,e
\Big\}
\; + \;
(1-\gamma)p_{1}\,{\log\mathstrut}_{\!b}\,n
\nonumber \\
\overset{c}{\leq} \;\; &
\sum_{\substack{x\,\in\,{\cal X}_{n}\\
y\,\in\,{\cal Y}_{n}}}
{P\mathstrut}_{\!XY}^{a}(x, y)\,{\log\mathstrut}_{\!b}\,n
\; + \;
(1-\gamma)p_{1}\,{\log\mathstrut}_{\!b}\,n
\;\; \overset{d}{=} \;\; (2-\gamma)p_{1}
\,{\log\mathstrut}_{\!b}\,n
\;\; \overset{e}{\leq} \;\;
\underbrace{4c_{1}h^{1/2}\,{\log\mathstrut}_{\!b}\,n}_{o(1)}
, \;\;\; n > 2,
\label{eqDiscreteEntropy}
\end{align}
where ($a$) follows by (\ref{eqJointType});
in ($b$) we use (\ref{eqLostProb}), (\ref{eqDelta}),
and apply Lemma~\ref{Lemxlogx} with its parameters $t_{1} = {P\mathstrut}_{\!XY}^{q}(x, y)$
and $t_{1} + t = {P\mathstrut}_{\!XY}(x, y) \leq 1$
with $t = {P\mathstrut}_{\!XY}^{a}(x, y)$ by (\ref{eqJointType});
($c$) follows for $n > 2$ since ${P\mathstrut}_{\!XY}^{a}(x, y) \geq 1/n$ when positive;
($d$) and ($e$) follow respectively by (\ref{eqLostProb}) and (\ref{eqProbLoss}).
Now since
\begin{displaymath}
\sum_{\substack{x\,\in\,{\cal X}_{n}\\
y\,\in\,{\cal Y}_{n}}}
{P\mathstrut}_{\!XY}^{q}(x, y)\,{\log\mathstrut}_{\!b}\,\frac{{P\mathstrut}_{\!XY}^{q}(x, y)}
{\Delta_{\alpha,\,n}
\Delta_{\beta,\,n}}
\;\; = \;\;
\int\!\!\!\!\int_{\mathbb{R}^{2}}{p\mathstrut}_{XY}^{q}(x, y)
{\log\mathstrut}_{\!b}\,{p\mathstrut}_{XY}^{q}(x, y)dxdy,
\end{displaymath}
the inequality in (\ref{eqCondEntropy}) follows by comparing (\ref{eqCondEntIdeal}), (\ref{eqPutting}), and (\ref{eqDiscreteEntropy}).

\bigskip

{\em The RHS of (\ref{eqMargEntropy})}


With ${P\mathstrut}_{\!Y}^{q}(y) \triangleq \sum_{x\,\in \,{\cal X}_{n}}{P\mathstrut}_{\!XY}^{q}(x, y)$
and ${P\mathstrut}_{\!Y}^{a}(y) \triangleq \sum_{x\,\in \,{\cal X}_{n}}{P\mathstrut}_{\!XY}^{a}(x, y)$
we have
\begin{align}
&
\sum_{y\,\in\,{\cal Y}_{n}}
{P\mathstrut}_{\!Y}(y)\,{\log\mathstrut}_{\!b}\,\frac{{P\mathstrut}_{\!Y}(y)}
{\Delta_{\beta,\,n}}
\, - \,
\sum_{y\,\in\,{\cal Y}_{n}}
{P\mathstrut}_{\!Y}^{q}(y)\,{\log\mathstrut}_{\!b}\,\frac{{P\mathstrut}_{\!Y}^{q}(y)}
{\Delta_{\beta,\,n}}
\nonumber \\
\overset{a}{=} \;\; &
\sum_{y\,\in\,{\cal Y}_{n}}
\Big[\,
{P\mathstrut}_{\!Y}(y)\,{\log\mathstrut}_{\!b}\,{P\mathstrut}_{\!Y}(y)
\, - \,
{P\mathstrut}_{\!Y}^{q}(y)\,{\log\mathstrut}_{\!b}\,{P\mathstrut}_{\!Y}^{q}(y)
\,\Big]
\; + \;
\sum_{y\,\in\,{\cal Y}_{n}}
{P\mathstrut}_{\!Y}^{a}(y)\,{\log\mathstrut}_{\!b}\,\frac{1}{\Delta_{\beta,\,n}}
\nonumber \\
\overset{b}{\geq} \;\; &
\sum_{y\,\in\,{\cal Y}_{n}}
{P\mathstrut}_{\!Y}^{a}(y)\,{\log\mathstrut}_{\!b}\,{P\mathstrut}_{\!Y}^{a}(y)
\; + \;
\beta p_{1}\,{\log\mathstrut}_{\!b}\,n
\nonumber \\
\overset{c}{\geq} \;\; &
\sum_{y\,\in\,{\cal Y}_{n}}
{P\mathstrut}_{\!Y}^{a}(y)\,{\log\mathstrut}_{\!b}\,(1/n)
\; + \;
\beta p_{1}\,{\log\mathstrut}_{\!b}\,n
\;\; \overset{d}{=} \;\; -(1 - \beta)p_{1}
\,{\log\mathstrut}_{\!b}\,n
\;\; \overset{e}{\geq} \;\;
\underbrace{-2c_{1}h^{1/2}\,{\log\mathstrut}_{\!b}\,n}_{o(1)},
\label{eqDiscreteMargEntropy}
\end{align}
where ($a$) follows by (\ref{eqJointType});
in ($b$) we use (\ref{eqLostProb}), (\ref{eqDelta}),
and apply Lemma~\ref{Lemxlogx} with its parameters $t_{1} = {P\mathstrut}_{\!Y}^{q}(y)$
and $t_{1} + t = {P\mathstrut}_{\!Y}(y)$
with $t = {P\mathstrut}_{\!Y}^{a}(y)$ by (\ref{eqJointType});
($c$) follows because ${P\mathstrut}_{\!Y}^{a}(y) \geq 1/n$ whenever positive;
($d$) and ($e$) follow respectively by (\ref{eqLostProb}) and (\ref{eqProbLoss}).
From (\ref{eqMargUsual}) and (\ref{eqPDFtoType}) we observe that ${P\mathstrut}_{\!Y}^{q}(y) = {p\mathstrut}_{Y}^{q}(y)\Delta_{\beta,\,n}$.
Since the function ${p\mathstrut}_{Y}^{q}(y)$
is piecewise constant in $\mathbb{R}$ by the definition of ${p\mathstrut}_{XY}^{q}$,
it follows that
\begin{displaymath}
\sum_{y\,\in\,{\cal Y}_{n}}
{P\mathstrut}_{\!Y}^{q}(y)\,{\log\mathstrut}_{\!b}\,\frac{{P\mathstrut}_{\!Y}^{q}(y)}
{\Delta_{\beta,\,n}}
\;\; = \;\;
\int_{\mathbb{R}}
{p\mathstrut}_{Y}^{q}(y)
{\log\mathstrut}_{\!b}\,{p\mathstrut}_{Y}^{q}(y)dy.
\end{displaymath}
Then the inequality (\ref{eqMargEntropy}) follows by comparing (\ref{eqSecondTerm2}),
(\ref{eqDiscreteMargEntropy}).
This concludes the proof of Lemma~\ref{LemQuant}.
$\square$

\bigskip

\begin{lemma}\label{Lemxlogx}
{\em Let $f(x) = x\ln x\,$, then for $0 < t \leq 1/e$ and $t_{1} > 0\,$,}
\begin{displaymath}
t \ln t
\;\; \leq \;\;
f(t_{1} + t) - f(t_{1})
\;\; \leq \;\; t\,\ln \max
\big\{
1/t, \; (t_{1} + t)e
\big\}.
\end{displaymath}
\end{lemma}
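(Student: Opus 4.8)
The plan is to fix $t$ and study the single-variable function
$g(t_{1}) \triangleq f(t_{1}+t) - f(t_{1}) = (t_{1}+t)\ln(t_{1}+t) - t_{1}\ln t_{1}$ on $t_{1}\in(0,\infty)$. First I would differentiate, using $\tfrac{d}{dx}(x\ln x) = \ln x + 1$, to get $g'(t_{1}) = \bigl(\ln(t_{1}+t)+1\bigr) - \bigl(\ln t_{1}+1\bigr) = \ln\!\bigl(1 + t/t_{1}\bigr) > 0$, so that $g$ is strictly increasing on $(0,\infty)$. Both inequalities of the lemma will fall out of this monotonicity together with one elementary bound on the logarithm.

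For the lower bound I would let $t_{1}\downarrow 0$: since $t_{1}\ln t_{1}\to 0$ and $(t_{1}+t)\ln(t_{1}+t)\to t\ln t$, we have $g(t_{1})\to t\ln t$, and by monotonicity $g(t_{1}) \ge t\ln t$ for every $t_{1}>0$. (Equality is only approached in the limit $t_{1}\to 0$.)

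For the upper bound I would first establish the sharper estimate $g(t_{1}) \le t\ln\!\bigl((t_{1}+t)e\bigr) = t + t\ln(t_{1}+t)$. Writing out $g(t_{1}) = (t_{1}+t)\ln(t_{1}+t) - t_{1}\ln t_{1}$ and cancelling the common term $t\ln(t_{1}+t)$ from both sides, this is equivalent to $t_{1}\ln\!\bigl(1 + t/t_{1}\bigr) \le t$, i.e., with $u = t/t_{1} > 0$, to $\ln(1+u)\le u$, the standard concavity bound. The stated right-hand side then follows at once: $\max\{1/t,\,(t_{1}+t)e\} \ge (t_{1}+t)e > 0$, so applying the increasing function $\ln$ and multiplying by $t>0$ gives $t\ln\!\bigl((t_{1}+t)e\bigr) \le t\ln\max\{1/t,\,(t_{1}+t)e\}$.

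This is a routine calculus lemma, so I do not expect a genuine obstacle; the step demanding the most care is the algebraic rearrangement in the upper bound, i.e., correctly identifying that $t + t\ln(t_{1}+t)$ cancels against $(t_{1}+t)\ln(t_{1}+t)$ down to $t_{1}\ln(1+t/t_{1})$. I would deliberately avoid a case split on whether $1/t$ or $(t_{1}+t)e$ realizes the maximum — none is needed — and I note in passing that neither inequality actually uses the hypothesis $0<t\le 1/e$; that range is simply where the lemma is later invoked.
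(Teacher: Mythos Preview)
Your proof is correct and, in fact, cleaner than the paper's. The paper argues via the derivative bound $|f'(x)|\le -\ln t$ on the interval $[t,\,1/(et)]$ (using here that $t\le 1/e$) and then splits into three cases according to the position of $[t_{1},\,t_{1}+t]$ relative to the endpoints $t$ and $1/(et)$, handling each by a mean-value-type estimate. You instead observe that $g(t_{1})=f(t_{1}+t)-f(t_{1})$ is increasing with limit $t\ln t$ at $t_{1}\downarrow 0$, which dispatches the lower bound in one stroke, and you obtain the upper bound from the single inequality $\ln(1+u)\le u$, yielding the sharper estimate $g(t_{1})\le t\ln\bigl((t_{1}+t)e\bigr)$, which is always dominated by the stated maximum. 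Your route avoids any case split, gives a slightly stronger upper bound, and, as you note, shows that the hypothesis $0<t\le 1/e$ is not needed for the inequalities themselves; the paper's argument genuinely relies on that hypothesis to set up its three cases.
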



\begin{proof}
For $t \leq 1/e$,
the magnitude of the derivative 
of $f(x)$ in the interval $(0, t)$ is monotonically decreasing and its average there
is $-\ln t$,
while in the adjacent interval $[\,t , \,1/(et)\,]$ the magnitude of the derivative is upper-bounded by $-\ln t$:
$\;\;\; |\, f'(x)\,| \; \leq \; -\ln t,
\;\;\; t \leq x \leq 1/(et)$.

Then there are three possible cases:

1) $\; t_{1} \, < \, t \, < \, t_{1} + t \, < \, 1/(et)\,$:
\begin{align}
|\,f(t_{1} + t) - f(t_{1})\,|
\;\; & \leq \;\;
|\,f(t_{1} + t) - f(t)\,| \, + \,
|\,f(t_{1}) - f(t)\,|
\nonumber \\
& = \;\;
|\,f(t_{1} + t) - f(t)\,| \, + \,
t_{1} \ln t_{1} \, - \,
t \ln t
\nonumber \\
&
\leq \;\;
-t_{1} \ln t
\, + \, t_{1} \ln t_{1}
\, - \, t \ln t
\nonumber \\
& \leq \;\;
-t \ln t.
\nonumber
\end{align}

2) $\; t \, \leq \, t_{1} \, < \, t_{1} + t \, \leq \, 1/(et)\,$:
$\;\;\;\;\;\;\;\;
 |\,f(t_{1} + t) - f(t_{1})\,| \;\; \leq \;\;
-t \ln t$.

3) $\; 1/(et)\, < \, t_{1} + t\,$:
$\;\;\;\;\;\;\;\;\;\;\;\;\;\;\;
0 \;\; \leq \;\;
f(t_{1} + t) - f(t_{1})
\;\; \leq \;\;
f'(t_{1} + t)
\, t
\;\; = \;\;
\big(\ln(t_{1} + t) + 1\big) \, t$.
\end{proof}


\bibliographystyle{IEEEtran}

\end{document}